\definecolor[named]{ACMBlue}{cmyk}{1,0.1,0,0.1}
\definecolor[named]{ACMYellow}{cmyk}{0,0.16,1,0}
\definecolor[named]{ACMOrange}{cmyk}{0,0.42,1,0.01}
\definecolor[named]{ACMRed}{cmyk}{0,0.90,0.86,0}
\definecolor[named]{ACMLightBlue}{cmyk}{0.49,0.01,0,0}
\definecolor[named]{ACMGreen}{cmyk}{0.20,0,1,0.19}
\definecolor[named]{ACMPurple}{cmyk}{0.55,1,0,0.15}
\definecolor[named]{ACMDarkBlue}{cmyk}{1,0.58,0,0.21}
\newtheorem{theorem}{Theorem}
\newtheorem{lemma}{Lemma}[section]
\newtheorem{proposition}[lemma]{Proposition}
\newtheorem{corollary}[lemma]{Corollary}
\newtheorem{definition}[lemma]{Definition}
\DeclareMathOperator{\poly}{poly}
\definecolor{mygreen}{RGB}{20,140,80}
\definecolor{linkcolor}{RGB}{0,0,230}
\definecolor{mylightgray}{RGB}{230,230,230}
\definecolor{verylightgray}{RGB}{245,245,245}
\algnewcommand{\IIf}[1]{\State\algorithmicif\ #1\ \algorithmicthen}
\algnewcommand{\EndIIf}{\unskip\ \algorithmicend\ \algorithmicif}
\newcommand{\smparagraph}[1]{
\vspace{0.2cm}
\noindent \textbf{#1}
}
\newcounter{myalgctr}
\newtcolorbox{OuterBox}[1][]{%
    breakable,
    enhanced,
    frame hidden,
    interior hidden,
    left=-5pt,
    right=-5pt,
    top=-5pt,
    float=p,
    boxsep=0pt,
    arc=0pt
#1}%
\newtcolorbox{InnerBox}[1][]{%
    enforce breakable,
    enhanced,
    colback=gray,
    colframe=white,
#1}%
\newenvironment{tbox}{
\vspace{0.2cm}
\begin{tcolorbox}[
                  enhanced,
                  boxsep=2pt,
                  left=1pt,
                  right=1pt,
                  top=4pt,
                  boxrule=1pt,
                  arc=0pt,
                  colback=white,
                  colframe=black,
	              breakable,
	              floatplacement=t,
	              float
                  ]
}{
\end{tcolorbox}
}
\newcommand{\tboxhrule}[0]{\vspace{0.1cm} {\color{black} \hrule} \vspace{0.2cm}}
\newenvironment{titledtbox}[1]{\begin{tbox}#1 \tboxhrule}{\end{tbox}}
\newenvironment{tboxalg}[1]{\refstepcounter{myalgctr}\begin{titledtbox}{\textbf{Algorithm \themyalgctr.} #1}}{\end{titledtbox}}
\newcommand{\MPC}[0]{\ensuremath{\mathsf{MPC}}}
\newcommand{\AMPC}[0]{\ensuremath{\mathsf{AMPC}}}
\newcommand{\PRAM}[0]{\ensuremath{\mathsf{PRAM}}}
\newcommand{\EREW}[0]{\ensuremath{\mathsf{EREW}}}
\newcommand{\CREW}[0]{\ensuremath{\mathsf{CREW}}}
\newcommand{\MultiPRAM}[0]{MultiPrefix \ensuremath{\mathsf{PRAM}}}
\newcommand{\defn}[1]{\emph{\textbf{#1}}}
\newcommand{\myparagraph}[1]{\smallskip\noindent {\bf #1.}}
\newcommand{\etal}[0]{\textit{et al.}}
\newcommand{\threep}[0]{ternary treap}
\newcommand{\boruvka}{Bor\r{u}vka}
\definecolor{revise-color}{rgb}{0.76, 0.23, 0.13}
\definecolor{mygreen}{rgb}{0.0, 0.5, 0.0}
\definecolor{amaranth}{rgb}{0.9, 0.17, 0.31}
\newcommand{\revised}[1]{{#1}}
\newcommand{\shep}[1]{{#1}}
\begin{document}\sloppy

\title{Parallel Graph Algorithms in Constant Adaptive Rounds: \\ Theory meets Practice
\thanks{
This is the full version of a paper in PVLDB (to be presented at VLDB'21).
The authors can be contacted at soheil@cs.umd.edu, laxman@mit.edu, and
\{esfandiari, jlacki, mirrokni, wschudy\}@google.com}
}
\author{
  Soheil Behnezhad\\University of Maryland
  \and
  Laxman Dhulipala \thanks{Work done as a student researcher at Google Research.} \\MIT CSAIL
  \and
  Hossein Esfandiari\\Google Research
  \and
  Jakub Łącki\\Google Research
  \and
  Vahab Mirrokni\\Google Research
  \and
  Warren Schudy\\Google Research
}

\maketitle
\begin{abstract}
We study fundamental graph problems such as {\em graph connectivity}, {\em minimum spanning forest} (MSF), and approximate {\em maximum (weight) matching} in a distributed setting. In particular, we focus on the {\em Adaptive Massively Parallel Computation} (\AMPC{}) model, which is a theoretical model that captures MapReduce-like computation augmented with a distributed hash table.

We show the first \AMPC{} algorithms for all of the studied problems that run in a constant number of rounds and use only $O(n^\epsilon)$ space per machine, where $0 < \epsilon < 1$.
Our results improve both upon the previous results in the \AMPC{}
model, as well as the best-known results in the \MPC{} model, which is
the theoretical model underpinning many popular distributed computation
frameworks, such as MapReduce, Hadoop, Beam, Pregel and Giraph.

\shep{Finally, we provide an empirical comparison of the algorithms in the
\MPC{} and \AMPC{} models in a fault-tolerant distriubted computation environment.} We empirically evaluate our algorithms on a
set of large real-world graphs and show that our \AMPC{} algorithms
can achieve improvements in both running time and round-complexity
over optimized \MPC{} baselines.
\end{abstract}

\section{Introduction}

The \MPC{} model has been extensively studied in theory in recent years~\cite{  DBLP:conf/spaa/AhnG15, DBLP:conf/stoc/AndoniNOY14, andoniparallel, DBLP:conf/soda/AssadiBBMS19, DBLP:conf/nips/BateniBDHKLM17, DBLP:conf/spaa/BehnezhadDETY17, DBLP:journals/corr/abs-1802-10297, DBLP:conf/focs/BehnezhadHH19, DBLP:conf/stoc/CzumajLMMOS18, DBLP:conf/podc/GhaffariGKMR18,  koutris2018algorithmic, DBLP:conf/spaa/LattanziMSV11, DBLP:conf/spaa/RoughgardenVW16}, and its theoretical capabilities and limitations are relatively well-understood.
In the context of graph algorithms, a significant limitation of the model is, roughly speaking, the fact that initially each node only knows its immediate neighbors, and exploring a larger neighborhood requires multiple rounds.
This phenomenon is formalized in the widely believed \textsc{1-vs-2-Cycle} conjecture~\cite{DBLP:conf/focs/GhaffariKU19}, which states that in the \MPC{} model, distinguishing between a cycle of length $n$ and two cycles of length $n/2$ requires $\Omega(\log n)$ computation rounds.
The conjecture implies a number of conditional lower bounds for fundamental graph problems such as matching, independent sets, coloring, connectivity, etc., see \cite{DBLP:conf/focs/BehnezhadDELM19, DBLP:conf/focs/GhaffariKU19} or a recent result of \cite{DBLP:journals/corr/abs-2001-02191} for an extensive overview of the implications of the conjecture. The main motivation behind the \AMPC{} model is to alleviate such hardness results.

\begin{table*}
  \centering
\revised{
\scalebox{0.8}{
\begin{tabular}{l|c|c|c|c}
    \toprule
	\multirow{2}{*}{\textbf{Problem}} & \multicolumn{2}{c|}{\textbf{\AMPC{}} -- \textbf{This Paper}} & \textbf{\AMPC{}} -- \cite{AMPC} & \multirow{2}{*}{\textbf{\MPC{}}} \\
	\cline{2-3}
	 & Bound  & Implementation?  & Previous Bound& \\
    \midrule
	Connectivity & $O(1)$ & & $O(\log \log_{m/n} n)$ & $O(\log D + \log \log_{m/n} n)$~\cite{DBLP:conf/focs/BehnezhadDELM19}\\
	MST & $O(1)$ & \checkmark & $O(\log \log_{m/n} n)$ & $O(\log n)$\\
	Matching & $O(\log \log n)$ &  & --- & $\widetilde{O}(\sqrt{\log n})$~\cite{DBLP:conf/soda/GhaffariU19} \\
	Matching (with $O(m+n^{1+\epsilon})$ space*) & $O(1)$ & \checkmark & --- & $\widetilde{O}(\sqrt{\log n})$~\cite{DBLP:conf/soda/GhaffariU19}\\
	MIS &  & \checkmark & $O(1)$ & $\widetilde{O}(\sqrt{\log n})$~\cite{DBLP:conf/soda/GhaffariU19} \\
	\textsc{1-vs-2-Cycle}   &  & \checkmark & $O(1)$ & $O(\log n)$ (folklore) \\
    \bottomrule
  \end{tabular}
}
	}
	\caption{\label{tab:results}\revised{ Summary of our results, compared to the results of~\protect\cite{AMPC}.
	We assume the available space per machine is $n^\delta$ for some
  constant $\delta < 1$. We consider the setting where the space per
  machine is sublinear in the number of vertices of the graph, that is
  $S = O(n^{\epsilon})$. All results require $\tilde{O}(m)$ total
  space over all machines (except the fourth row). $D$ denotes the diameter of the graph. We note that an algorithm for the minimum spanning tree problem also solves the connectivity problem.}}
\end{table*}

At a high level, in the \MPC{} model computation is distributed among a number of machines and proceeds in synchronous rounds.
The space available on each machine is assumed to be much smaller than the size of the input.
Within each round, each machine may send (a limited number of) messages to other machines, and all these messages constitute the input to the following round, i.e., they are delivered only when the next round begins.
Hence, each machine can only access a limited part of the input at each round, that is only the messages that it receives.

The \AMPC{} model \emph{extends} the \MPC{} model by storing all messages sent in a single round in a distributed read-only hash table (also known as a distributed \emph{key-value store}).
In the following round, all machines have \emph{random read-access} to the hash table, subject to the same constraints on the amount of communication as in the \MPC{} model.
As shown in~\cite{AMPC}, this extension is useful in designing graph algorithms, as, intuitively, it allows a machine knowing the identity of a vertex to explore its local neighborhood.
(On the other hand, Charikar~\etal{}~\cite{lowerboundAMPC} showed that some of the known unconditional lower bounds in the \MPC{} model apply to the \AMPC{} model as well.)

The \AMPC{} model is inspired by empirical studies of large-scale graph computations that utilize a distributed hash table~\cite{nips17,cc-beyond}.
In the paper introducing the model, Behnezhad~\etal{}~\cite{AMPC}
argued that the \AMPC{} model is realistic, that is it can be
implemented by augmenting a parallel processing framework, such as
MapReduce, Hadoop or Beam, with a distributed hash table.
At the same time, the hybrid system would maintain the fault-tolerance
properties of the framework used and should not suffer from query contention.

Implementing the \AMPC{} model efficiently relies on using an
efficient distributed hash table. Developing a high performance
distributed hash table has become a popular topic of research in
the systems community in recent years.
The existing implementations are able to support hundreds of millions of queries per second,
with query latency as low as a few
microseconds~\cite{dragojevic2017rdma}. These technologies are
increasingly available today, both at a hardware level~\cite{
infiniband, omnipath}, and at the level of software systems such as
eRPC~\cite{DBLP:conf/nsdi/KaliaKA19}, and key-value databases built
over RDMA~\cite{dragojevic2014farm, mitchell2013rdmakvstores}.

The results of Behnezhad~\etal{}~\cite{AMPC} show that that using \AMPC{} model
allows one to give algorithms running in significantly fewer rounds compared to
the best solutions in the \MPC{} model. However, up until today it has
not been studied how these gains would translate to the empirical
running times.  In this paper we provide the first empirical
evaluation of the \AMPC{} model, and further improve the theoretical
results by Behnezhad~\cite{AMPC}, by giving \AMPC{} algorithms for
basic graph problems that run in only $O(1)$ rounds.

\subsection{Our Contribution}

In this paper, we study several fundamental graph problems in the
\AMPC{} model, present the first distributed algorithms for these
problems that run in a constant number of rounds, and provide the
first empirical evaluation of this model. Our
results along with a comparison with the state-of-the-art both in the
\MPC{} model and the \AMPC{} model are provided in
Table~\ref{tab:results}. Below, we elaborate more on these results.

\smparagraph{Connected components.} Our first result is an \AMPC{}
algorithm for computing the {\em connected components} of an
undirected graph in $O(1)$ rounds.  We note that finding
connected components, apart from being a basic graph problem, has a
number of practical
applications and is one of the most commonly used graph-processing
analytics~\cite{DBLP:journals/pvldb/SahuMSLO17}.

Note that before the \AMPC{} model has been introduced, the problem of
finding connected components has been studied in a setting very
similar to the \AMPC{} model itself.
Namely, \cite{cc-beyond} gave an algorithm running in $O(\log n \log
\log n)$ rounds and showed its good practical performance.
However, it has been later shown that the empirical running time can be further improved by using an \MPC{} algorithm~\cite{cc-contractions}.
We note that in the \MPC{} model, the best known algorithm for finding
connected components runs in $O(\log D + \log \log_{m/n} n)$ rounds,
where $D$ is the diameter of the input
graph~\cite{DBLP:conf/focs/BehnezhadDELM19}.

\smparagraph{Minimum spanning forest.} Our next notable result is an
\AMPC{} algorithm for computing  {\em minimum spanning forest} (MSF)
in a constant number of rounds.
This problem has a number of applications in clustering~\cite{nips17}.
In particular, one can use this algorithm together with a simple sorting step, and our connectivity algorithm to find any desired level of a
single-linkage hierarchical clustering~\cite{DBLP:conf/uai/ZadehB09}.
Similarly to the problem of connectivity, finding MSF can be solved in $O(\log n)$ rounds in the $\MPC{}$ model (or slightly faster, if one assumes that the diameter of each tree in the minimum spanning forest is limited~\cite{andoniparallel,DBLP:conf/focs/BehnezhadDELM19}).

Our algorithms for finding connected components and MSF that run in $O(1)$ rounds
directly improve upon the state-of-the-art \AMPC{} algorithms by Behnezhad~\etal{}~\cite{AMPC} that run in $O(\log \log_{m/n} n)$ rounds.
At the same time, under \textsc{1-vs-2-Cycle} conjecture~\cite{DBLP:conf/focs/GhaffariKU19},
both problems require $\Omega(\log n)$ rounds in the \MPC{} model (or $\Omega(\log D)$ rounds if we parametrize the problem by graph diameter $D$~\cite{DBLP:conf/focs/BehnezhadDELM19}).
Hence, with respect to the number of rounds, our results improve significantly over the best \MPC{} solutions.

\smparagraph{Matchings.} Furthermore, we give two \AMPC{} algorithms for the maximal matching problem. Both algorithms require $O(n^\epsilon)$ space per machine where $\epsilon \in (0, 1)$ can be any arbitrarily small constant. The first algorithm uses $\widetilde{O}(m+n)$ total space and $O(\log\log n)$ rounds. The second algorithm uses $O(m+n^{1+\epsilon})$ total space and takes $O(1)$ rounds. The best known previous algorithm in this regime of space can be obtained by simulating the \MPC{} algorithm of \cite{DBLP:conf/soda/GhaffariU19} which requires $\widetilde{O}(\sqrt{\log n})$ rounds. We note that an $O(\log \log n)$ round \MPC{} maximal matching algorithm is also known \cite{DBLP:conf/focs/BehnezhadHH19}, which unlike ours, requires $\Omega(n)$ space per machine.

\smparagraph{Empirical evaluation.}
\shep{Finally, we provide the first empirical evaluation of the \AMPC{}
model in a fault-tolerant distributed setting.} We experimentally evaluate four
\AMPC{} algorithms for the Maximal Independent Set (MIS), Maximal
Matching (MM), \revised{Minimum Spanning Forest}, and 1-vs-2-Cycle problems, and
compare these algorithms with state-of-the-art \MPC{} baselines for
these problems.  We implemented all of our algorithms (both in \MPC{}
and \AMPC{} models) in Flume-C++~\cite{flumecpp}, which is a highly
optimized dataflow framework similar to Apache Beam.  The \AMPC{}
algorithms were additionally allowed to use a distributed hash table
service.
\revised{
We provide a detailed
experimental study of our implementations in a shared production
data center using a maximum of 100 machines.
We investigate the effect of enabling caching and multithreading
optimizations which enable good practical performance for \AMPC{}
algorithms. Finally, we study the performance of our \AMPC{}
algorithms relative to strong \MPC{} baselines and show significant
speedups for all problems that we study.
Our results provide promising evidence for the practical applicability
of algorithms in the \AMPC{} model.
}

\section{Model}\label{sec:model}
\newcommand{\dht}{\mathcal{D}}
In this section we give a concise formal definition of the
\emph{Adaptive Massively Parallel Computation (\AMPC{})} model (the
model used in this paper is identical to the model
defined in~\cite{AMPC}).
In the \AMPC{} model, we are given an input of size $N$, which is
processed by a collection of $P$ machines each having space $S$. The
total space used by the computation is just the total space of all
machines, i.e., $T = S \cdot P$. We assume that $S =
\Theta(N^{\epsilon})$, where $\epsilon \in (0, 1)$ is a constant.
We note that while the running times of our algorithms often
depend on $1/\epsilon$ we often omit this dependency, since
$O(1/\epsilon) = O(1)$.\footnote{In practice $\epsilon$ is at least
$1/2$, i.e. the number of bytes of space on each machine is larger
than the number of machines.}

In addition to the input and machines, in the \AMPC{} model there is
collection of \emph{distributed hash tables (DHT)} that we denote by
$\dht_0, \dht_1, \dht_2, \ldots$.
Each hash table stores a collection of key-value pairs:
given a key a DHT returns all corresponding values. We require each
key-value pair to use a constant number of words. At the start of the
computation, the input data is stored in $\dht_0$ and uses a set of
keys known to all machines (e.g., consecutive integers).

An \AMPC{} computation consists of a number of \emph{rounds}. In the
i-th round, each machine can read data from $\dht_{i-1}$ and write to
$\dht_i$. Within a single round, each machine can make up to $O(S)$
reads (\emph{queries}) and emit $O(S)$ \emph{writes}, and perform
arbitrary computation. Each query and write queries/writes a single
key-value pair. The total \emph{communication} that a machine performs
per round is equal to the total number of queries and writes. A
salient aspect of the model is that the total communication of each machine within a round is bounded
by $O(S)$ which is strictly sublinear in the input size. We note that
all of the algorithms described in both~\cite{AMPC} and this paper
perform near-linear computation per machine in each round, making the
algorithms proposed in both papers work-efficient up to
poly-logarithmic factors. However, the \AMPC{} model does not place a
bound on the amount of of computation permitted on a machine, as this
is a usual assumption of the \MPC{} model.

\shep{
\smparagraph{Fault-Tolerance}
An important characteristic of the \AMPC{} model is that it is
amenable to \emph{fault tolerant} implementation, which is one of the key
features of the \MPC{} model, and one of the reasons underpinning its widespread
adoption. A fault tolerant implementation of \AMPC{} can be derived by
observing that each DHT can be made fault-tolerant~\cite{AMPC}.
}

\shep{
\smparagraph{Caching and Query Contention}
One natural concern related to implementing the \AMPC{} algorithms is the
possibility that all $S$ machines query for the same key, potentially causing
contention on a single machine, or set of machines storing the key.
The authors of~\cite{AMPC} provide an algorithmic justification for
how any set of queries made by the machines in a given round can be
handled with negligible overall contention by caching the results of
queries on each machine. In this paper, we empirically verify that
caching is required for good performance in the \AMPC{} model (see
Section~\ref{sec:miscasestudy}).
}

\smparagraph{Relationship to other models}
The \AMPC{} model is closely related to the \MPC{} model. Indeed,
every \MPC{} algorithm can be simulated by a corresponding \AMPC{}
algorithm in the same total space and round-complexity (\cite{AMPC}
provides a description of the simulation). Due to known simulations of
\PRAM{} algorithms on \MPC{}~\cite{DBLP:conf/isaac/GoodrichSZ11,DBLP:conf/soda/KarloffSV10}, the \AMPC{} model is also able to
simulate existing \PRAM{} algorithms in the \EREW{}, \CREW{}, \CREW{},
and \MultiPRAM{} models~\cite{Blelloch:2010:PA:1882723.1882748}.

\section{Connectivity \& Minimum Spanning Forest}\label{sec:msf}

In this section we give our algorithms for finding minimum spanning forest (MSF),
i.e., the minimum spanning tree of each connected component, as well as
for finding connected components. Some proofs from this section can be
found in Appendix~\ref{apx:msf}. Formally, we prove the following
result.

\begin{theorem}\label{thm:mstlowspace}
There is an \AMPC{} algorithm for computing minimum spanning forest
and connected components of an undirected weighted graph in $O(1)$ rounds
using total space $T = O(m + n \log^2 n)$, w.h.p.
\end{theorem}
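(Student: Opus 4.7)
The plan is to attack this with a Borůvka-style contraction scheme: each phase has every super-vertex select its minimum-weight outgoing edge (all such edges are in the MSF by the cut rule), contracts the resulting pseudoforest to produce a smaller graph, and repeats. Correctness for MSF follows directly from Borůvka's cut rule, and connectivity can be read off the final super-vertex labels.

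The core AMPC primitive needed is a contraction subroutine: given a graph $G$ and a marked pseudoforest $F$ on its edges, produce $G/F$ in $O(1)$ AMPC rounds with total space $\tilde{O}(|G|)$. I would first break each pseudoforest cycle arbitrarily to obtain a rooted forest, then compute the root of every vertex by pointer-jumping through the hash table. Since each machine can issue $n^\epsilon$ adaptive queries per round, pointer-jumping a forest of height at most $n$ finishes in $O(\log_{n^\epsilon} n) = O(1/\epsilon)$ AMPC rounds. Once each vertex knows its root, edges are rewritten with one round of adaptive endpoint lookups, and self-loops and parallel edges are eliminated by a single sort-and-aggregate round.

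The remaining -- and hardest -- step is compressing the $\Theta(\log n)$ Borůvka phases into only $O(1)$ AMPC rounds. Naively batching via the hash table yields only $O(\log \log_{m/n} n)$, matching the prior AMPC result of~\cite{AMPC}, since a machine can explore a doubly-growing ball of min-edges per round. To reach true $O(1)$, my plan is to combine the batched contraction with one of two accelerators: (i) recursive edge sampling in the spirit of Karger--Klein--Tarjan -- sample edges at a suitable rate, recursively compute the sample's MSF, discard $F$-heavy edges, and process only the $F$-light residue -- or (ii) a randomized low-diameter decomposition that carves the graph into pieces whose entire MSF can be computed inside a single machine using adaptive queries. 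Both avenues should allow the ``effective shrinking factor'' per AMPC round to be polynomial rather than merely exponential, giving a constant number of rounds.

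The main obstacle will be keeping the round count genuinely $O(1)$ and the total space at $O(m + n\log^2 n)$ simultaneously. Aggressive adaptive exploration can easily overload a single machine whenever some super-vertex has an unusually large neighborhood, and the randomized load balancing needed to avoid this, together with the per-level metadata of the recursive scheme, is where I expect the $n \log^2 n$ overhead in total space and the ``with high probability'' qualifier to originate. Once the contraction primitive and the load-balanced recursion are in hand, the correctness argument should reduce to the standard Borůvka invariant, so the real technical weight of the proof sits on the randomized analysis of one phase.
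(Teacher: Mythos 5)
Your proposal correctly identifies two ingredients that appear in the paper --- a contraction primitive with pointer-jumping implementable in $O(1/\epsilon)$ rounds, and Karger--Klein--Tarjan edge sampling --- but it leaves the central round-reduction mechanism unspecified, and the one route you gesture at (KKT sampling as a round accelerator) is not what actually closes the gap. In the paper, KKT sampling is used only to reduce the \emph{query/space} complexity from $O(m \log n)$ to $O(m + n\log^2 n)$; it does nothing by itself to reduce the \emph{round} complexity of the underlying MSF subroutine, since after sampling you still have to compute the MSF of a graph with $n$ vertices and $\tilde O(n)$ edges, which is exactly the sparse regime where batched Bor\r{u}vka remains stuck at $O(\log\log_{T/n} n)$ rounds.

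The idea that actually delivers $O(1)$ rounds is a different one: first \emph{ternarize} the graph so every vertex has degree at most $3$, then run a \emph{truncated Prim search} from every vertex in parallel, with a uniformly random priority $\pi(v)$ per vertex; each search stops when it has touched $n^{\epsilon/2}$ vertices, exhausted its component, or hit a vertex of lower rank. After contracting the resulting directed forest (via the forest-connectivity routine of~\cite{AMPC}), the number of vertices drops by a \emph{polynomial} factor $\Omega(n^{\epsilon/2})$, not just a constant factor per phase as in Bor\r{u}vka. Once the vertex count is $O(m^{1-\epsilon/2})$, the prior dense algorithm \densemsf{}~already runs in $O((1/\epsilon)\log(1/\epsilon)) = O(1)$ rounds because $\log_{T/n'} n' = O(1/\epsilon)$. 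The correctness of the per-machine and total query bounds then hinges on a non-obvious charging argument: the cost of each vertex's Prim search is bounded by the size of its subtree in a degree-3 \emph{treap} built over the MSF (a ``ternary treap''), whose height is $O(\log n)$ w.h.p.\ --- this is the analysis your load-balancing worry is really asking for, and it is where the $\log n$ factors and the ``w.h.p.''\ qualifier come from. Your Bor\r{u}vka-plus-accelerator plan is therefore missing the key lemma (polynomial vertex shrinkage in one adaptive round via priority-truncated local MSF growth) and misattributes the role of KKT sampling.
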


Our result for connectivity can be actually obtained from the algorithm
for minimum spanning forest.
Indeed, once we find any spanning forest, the connected components
can be found by applying the forest connectivity algorithm of~\cite{AMPC}
which takes $O(1)$ rounds and uses $O(n \log n)$ queries.
Hence, in the following part of this section we focus on an algorithm
for computing MSF in $O(1)$ rounds of $\AMPC{}$. Our algorithm improves on the
$O(\log\log_{T/n} n)$ round connectivity and MSF algorithms of
Behnezhad \etal{}~\cite{AMPC} whenever $T = n^{1+o(1)}$.

Algorithm~\ref{alg:msf} provides the pseudocode for our MSF algorithm.
If the input graph is dense, i.e. $m = \Omega(n^{1+\epsilon})$, the
algorithm runs the algorithm of Behnezhad \etal{} which finishes in
$O((1/\epsilon) \log (1/\epsilon))$ rounds~\cite{AMPC}. To handle the case when $m =
o(n^{1+\epsilon})$, the algorithm ternarizes the graph, that is, replaces
every vertex of degree $k > 3$ with a cycle of length $k$, with each
edge associated with one vertex on the cycle. The ternarization step
ensures that every vertex has degree $\leq 3$, but will make the
number of vertices in the ternarized graph asymptotically equal to the
number of edges. The algorithm then runs a local procedure from every
vertex which discovers a subset of edges in the minimum spanning forest. We
show that after contracting the graph based on this discovered
fragment, the remaining graph has a factor of $n^{\epsilon/2}$ fewer
vertices.  At this point we can afford to invoke the dense routine from
Behnezhad \etal{} on the contracted graph which solves the problem in
$O((1/\epsilon) \log (1 /\epsilon))$ rounds.

The local procedure run at each vertex is to simply run Prim's
algorithm, a classic MSF algorithm, from the vertex until either of
the two following stopping conditions is met. Firstly, to ensure that
a machine does not perform too many queries, the algorithm truncates
the local search if it exceeds $n^{\epsilon}$ queries. Secondly,
running $n^{\epsilon}$ queries per vertex results may result in
performing $O(n^{1+\epsilon})$ queries in total, which would use
superlinear total space on sparse graphs. To restrict the total space
to nearly linear, we sample a random priority for every vertex, and
terminate the search if a vertex $v$'s Prim search visits a vertex $u$
with priority less than $v$. We argue that the overall query cost is
$O(m \log n)$ w.h.p. by relating the cost of each query to a certain
problem on randomized search trees (treaps).

\newcommand{\densemsf}[0]{\textsc{DenseMSF}}
\newcommand{\forestconn}[0]{\textsc{ForestConnectivity}}

We use the following algorithms from \cite{AMPC} as a
black-box.
\vspace{-0.1cm}
\begin{proposition}\label{prop:msfspaa}
  There is an \AMPC{} algorithm, \densemsf{}, which computes the
  minimum spanning forest of an undirected graph in
  $O((1/\epsilon)\log \log_{T/n} n)$ rounds w.h.p.  where the total
  space $T = \Omega(m + n)$.
\end{proposition}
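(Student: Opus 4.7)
The plan is to adapt a classical \boruvka{}-style minimum spanning forest computation to the \AMPC{} model, exploiting random-access reads to compress many \boruvka{} phases into a single \AMPC{} round. Since Proposition~\ref{prop:msfspaa} is invoked as a black box from \cite{AMPC}, I would not reprove it from scratch, but I would reconstruct the algorithm roughly as follows.

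First I would maintain a contraction forest in which every super-vertex $v$ keeps a pointer $p(v)$ to a representative of its current component, together with its minimum-weight outgoing boundary edge, all stored in the round's DHT. In a single \AMPC{} round, the machine responsible for a super-vertex performs a graph-exponentiation step: using up to $n^\epsilon$ DHT reads it walks $p(v),\, p(p(v)),\, p(p(p(v))),\, \ldots$, doubling the effective depth of the pointer chase in each successive round. At the chain's endpoint it fetches the minimum-weight boundary edge, adds it to the output MSF, and writes the resulting merge into the next-round DHT; boundary edges between the same pair of merged super-vertices are deduplicated by hashing on the ordered pair of component representatives so that the total space never blows up beyond $O(m+n)$.

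Correctness then follows from the standard cut property of MSTs: every edge added is the minimum-weight edge crossing the cut between some super-vertex and the rest of the graph, so it must lie in the MSF, and nothing that belongs to the MSF is ever discarded. For the round count, graph exponentiation doubles the reachable contraction depth per round, so component sizes grow at least as $(T/n)^{2^i}$ after $i$ rounds; solving $(T/n)^{2^i} \geq n$ gives $i = O(\log\log_{T/n} n)$. The additional $1/\epsilon$ factor accounts for simulating each exponentiation step with $O(1/\epsilon)$ sub-rounds whenever $n^\epsilon$ reads per machine are insufficient to span a chain in a single pass.

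The main obstacle, and the reason this result is cited rather than reproved here, is controlling per-machine load once contraction makes super-vertex degrees highly skewed: naively, a single heavy super-vertex could have $\Omega(m)$ incident boundary edges and its pointer could be queried by a disproportionate number of machines in a given round. The proof in \cite{AMPC} resolves this with a random-priority assignment over super-vertices combined with random hashing of keys across the DHT, together with careful amortization of the deduplication step, ensuring that with high probability no machine exceeds $O(n^\epsilon)$ queries in any round and the total space remains $\Omega(m+n)$ throughout.
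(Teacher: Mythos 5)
The paper does not prove Proposition~\ref{prop:msfspaa}: it is imported verbatim as a black box from~\cite{AMPC} (the text explicitly says ``We use the following algorithms from~\cite{AMPC} as a black-box''), so there is no paper proof to compare against. You correctly recognize this, and your sketch is offered only as a reconstruction. However, the mechanism you describe does not actually yield the claimed $O((1/\epsilon)\log\log_{T/n} n)$ bound, and I think the confusion is worth flagging because it conflates two distinct techniques that both appear in~\cite{AMPC}.

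What you describe --- pointer-jumping where each machine chases $p(v),p(p(v)),\dots$ for up to $n^\epsilon$ hops, so that the reachable depth compounds round over round --- is the technique behind the \emph{forest connectivity} routine (Proposition~\ref{prop:forestconn}), and it gives $O(1/\epsilon)$ rounds, independent of $T/n$. It is not what produces a doubly-logarithmic bound. You then assert ``component sizes grow at least as $(T/n)^{2^i}$ after $i$ rounds,'' which is indeed the correct growth rate, but the reasoning offered (depth doubling via pointer chasing) does not produce it: depth doubling gives $2^i$, and $n^\epsilon$-hop chasing gives $n^{i\epsilon}$, neither of which is $(T/n)^{2^i}$. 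The $\log\log_{T/n}$ bound in~\cite{AMPC} comes from a \emph{budget-squaring} argument: with total space $T$ and $n_i$ surviving components, each component can afford $\Theta(T/n_i)$ adaptive queries in round~$i$; a local Prim/contraction step then shrinks the vertex count to $n_{i+1} = O(n_i/(T/n_i)) = O(n_i^2/T)$, so the per-vertex budget $T/n_i$ squares each round and reaches $n$ after $O(\log\log_{T/n} n)$ iterations. The $1/\epsilon$ factor is not a ``sub-round simulation when chains are too long'' but the cost of the $O(1/\epsilon)$-round \AMPC{} primitives (sorting, contraction, label propagation) used inside each budget-squaring phase. So the overall shape of your recurrence is right, but the driving mechanism is the dynamically increasing per-vertex query budget as the graph contracts, not graph exponentiation over a pointer forest.
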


\begin{proposition}\label{prop:forestconn}
  There exists an \AMPC{} algorithm, \forestconn{}, that solves the
  forest connectivity problem in $O(1/\epsilon)$ rounds of computation
  w.h.p.  using $T = O(n\log n)$ total space w.h.p.
\end{proposition}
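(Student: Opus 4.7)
Plan for Proposition~\ref{prop:forestconn}. My plan is to adapt the classical parallel rake-and-compress tree-contraction paradigm to the \AMPC{} setting and use randomization to fit within $T = O(n \log n)$ total space.

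In parallel tree contraction, each \PRAM{} round removes leaves (rake) and contracts maximal paths of degree-$2$ vertices (compress), reducing the tree by a constant factor, so $\Theta(\log n)$ \PRAM{} rounds reduce any tree to a single vertex. In \AMPC{}, the adaptive-query capability lets a single round simulate many such \PRAM{} steps: a machine may chase up to $S = n^\epsilon$ sequential DHT queries within one round, so it can contract paths of length $n^\epsilon$ via pointer jumping and iteratively rake newly exposed leaves. Thus $O(1/\epsilon)$ \AMPC{} rounds suffice to reduce each tree to a single representative whose identifier is the tree's component label; the label is then propagated back to every original vertex by unwinding the contraction pointers in $O(1/\epsilon)$ additional rounds.

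To control total space, I assign each vertex $v$ a random priority $\pi(v)$ and let $p(v)$ denote $v$'s pointer in the current contracted structure. Naive wide pointer jumping would cost $\Omega(n \cdot n^\epsilon)$ queries if every vertex chased a full $n^\epsilon$ hops per round; instead, each walk terminates as soon as it reaches a vertex already marked as a contraction root. A treap-style randomized analysis then shows that each vertex's walk visits an expected $O(\log n)$ hops, analogous to the expected depth of a node in a random binary search tree; summing over all $n$ vertices gives $T = O(n \log n)$ total queries w.h.p.

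The principal obstacle will be the expected-$O(\log n)$ walk-length bound: showing that the random-priority-driven contraction pointers behave like a random-priority (treap) structure, yielding the needed depth concentration across all $n$ walks simultaneously. A secondary challenge is implementing the contraction and expansion bookkeeping correctly within the synchronous \AMPC{} rounds while keeping each machine within its $S = n^\epsilon$ space limit and aggregating the contraction work over the $O(1/\epsilon)$ contraction layers without blowing up the $O(n \log n)$ total-space budget.
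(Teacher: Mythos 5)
First, note that the paper does not prove this proposition at all: it is imported as a black box from Behnezhad et al.~\cite{AMPC} (it is their forest-connectivity routine), so there is no in-paper proof to compare against. Your plan is nevertheless in the right spirit --- adaptive local exploration within a single \AMPC{} round, random vertex priorities to terminate searches early, a treap/quicksort-style argument for the $O(n\log n)$ query bound, and $O(1/\epsilon)$ contraction iterations --- which is essentially how \cite{AMPC} proceeds and how this paper analyzes the closely related \textsc{TruncatedPrim} routine in Appendix~A.

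As written, however, the plan has concrete gaps. First, the rake-and-compress framing does not by itself give $O(1/\epsilon)$ rounds: each rake/compress phase changes the tree globally, so chaining $\Theta(\log n)$ such phases requires $\Theta(\log n)$ synchronizations. What actually works is a one-shot truncated local search per vertex (stop after $n^{\Theta(\epsilon)}$ queries or upon hitting a lower-priority vertex) followed by a contraction, together with a proof that each such iteration shrinks the number of surviving vertices by a factor of $n^{\Omega(\epsilon)}$ (the analogue of Lemma~\ref{lem:msfshrink}); that shrinkage lemma is what yields $O(1/\epsilon)$ iterations, and your plan asserts the round bound without it. Second, the treap-style $O(\log n)$ bound on search lengths is established in this paper (Lemmas~\ref{lem:threepheight} and~\ref{lem:queryandthreep}) only for trees of maximum degree $3$: in a general forest a vertex's ``walk'' is a connected exploration rather than a path, and without first ternarizing the tree (or working on an Euler tour, as \cite{AMPC} does for cycles) the independence argument behind the concentration bound does not go through as stated. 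Third, the proposition requires each machine to stay within $O(n^{\epsilon})$ communication even though individual searches have length up to $n^{\Theta(\epsilon)}$; this needs a weighted balls-into-bins load-balancing argument (Lemma~8.4 of~\cite{AMPC}, invoked in the proof of Lemma~\ref{lem:primspacerounds}), which your plan does not address.
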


\newcommand{\msf}[0]{\textsc{MSF}}
\newcommand{\msfcontract}[0]{\textsc{MSFPrimContraction}}
\newcommand{\truncprimsearch}[0]{\textsc{MSFPrimContraction}}
\newcommand{\truncatedprim}[0]{\textsc{TruncatedPrim}}

\begin{tboxalg}{$\truncatedprim(G=(V, E))$, where $\Delta(G) \leq 3$}\label{alg:truncatedprim}
\begin{algorithmic}[1]
	\State For any vertex $v$ in the graph pick a rank $\pi(v)$ uniformly at random from $(0, 1)$, and let $\pi$ be the permutation obtained by sorting the vertices based on their ranks.
  \State Each vertex $v$ is assigned to a machine $\mu_v$ chosen
  uniformly at random from $m/n^\epsilon$ machines.

  \For{ each machine $\mu$ in parallel}
  \State $F_{\mu} = \emptyset$
    \For{ each $v$ s.t. $\mu_v = \mu$}
    \State Run Prim's algorithm at $v$, stopping either
    when\label{prim:conditions}
      \State\hspace{\algorithmicindent} (1) $v$ has explored $n^{\epsilon/2}$ vertices,
      \State\hspace{\algorithmicindent} (2) $v$'s component is fully explored, or
	\State\hspace{\algorithmicindent} (3) $v$ adds an edge to a vertex $u$ s.t. $\pi(u) < \pi(v)$.\label{l:stopping}
      \State $E_{v} \gets $ MST edges discovered by $v$. \Comment{emitted as part of the output}
      \If {$v$ stops due to case (3)}
        \State $F_{\mu} \gets F_{\mu} \cup \{(v, u)\}$\label{prim:addedge}
      \EndIf
    \EndFor
  \EndFor

  \State Apply the algorithm from Proposition~\ref{prop:forestconn} to
  $F = \cup_{\mu} F_{\mu}$, which contracts directed trees in $F$ to
  their roots, and let $C : V \rightarrow V$ be a mapping representing
  the contraction.\label{prim:forestconn}

  \State Let $G'(V',E')$ be the graph obtained by contracting $G$
  according to $C$, with isolated vertices
  removed.\label{prim:contract}

  \State \Return $(\cup_{v \in V} E_{v}, G')$
\end{algorithmic}
\end{tboxalg}

\begin{tboxalg}{$\msf(G=(V, E))$}\label{alg:msf}
\begin{algorithmic}[1]
  \If{ $m < n^{1 + \epsilon/2}$}
     \State Let $G'(V', E')$ be a degree bounded version of $G$,
       obtained by replacing every vertex $v$ with degree $> 3$ with a
       cycle of length $\emph{deg}(v)$, connecting each edge of $v$
       to its corresponding vertex in the cycle. Let the weights of
       the dummy edges be denoted by $\bot$, chosen to be less than
       the weight of the lightest edge in $E$.\label{msf:degbound}

     \State $(G''(V'',E''), E_{T}) \gets \truncatedprim{}(G')$. Note
     that $|V''| = O(m^{1-\epsilon/2})$.\label{msf:runtruncprim}

    \State $E'_{T} \gets $ edges obtained from applying
    the algorithm from Proposition~\ref{prop:msfspaa} to $G''$.\label{msf:runspaa}

    \State \Return $E_{T} \cup E'_{T}$, with all edges with weight
        $\bot$ removed.\label{msf:removebot}
  \EndIf
  \State \Return edges obtained from applying the algorithm from
  Proposition~\ref{prop:msfspaa} to $G$\label{msf:rundense}
\end{algorithmic}
\end{tboxalg}

\vspace{-0.6cm}

Note that this algorithm bears a close resemblance to the forest
connectivity algorithm of~\cite{AMPC}. One important
difference, however, is that the forest connectivity algorithm of
~\cite{AMPC} applies a routine which shrinks the number of vertices in
$G$ by a factor of $n^{\epsilon}$ iteratively for $O(1/\epsilon)$
rounds. The idea is to shrink the graph until the maximum size of a
tree is $O(n^{\epsilon/2})$. Unfortunately, a similar idea does not
work here, since although we show that the number of vertices in $G$
decreases by a factor of $n^{\epsilon}$ after one application of
Algorithm~\ref{alg:truncatedprim}, after contraction the graph is no
longer ternarized. Importantly, the number of edges in the contracted
graph may be asymptotically equivalent to the number of edges in the
original graph, which would result in the algorithm making no
progress. Instead, we use the fact that one application of
Algorithm~\ref{alg:truncatedprim} shrinks the number of vertices by a
factor of $n^{\epsilon}$, which is sufficient to apply the algorithm
from Proposition~\ref{prop:msfspaa} to the contracted graph.

\begin{restatable}{lemma}{msfshrink}\label{lem:msfshrink}
The contracted graph produced by Algorithm~\ref{alg:truncatedprim} has
a factor of $\Omega(n^{\epsilon/2})$ fewer vertices than $G$.
\end{restatable}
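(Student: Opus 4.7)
My plan is to classify roots of trees in $F$ by the component size of the input graph, handle the two cases separately, and then combine.

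First, the overall strategy. A vertex survives contraction iff it is the root of a tree in $F$, i.e.\ iff it does \emph{not} emit an edge on line~\ref{prim:addedge}. Equivalently, $v$ survives iff its Prim search terminates by case~(1) or case~(2) rather than case~(3). Crucially, the order in which Prim's algorithm from $v$ visits vertices is determined solely by the edge weights, so if I write $v = v_0, v_1, v_2, \ldots$ for this Prim sequence (independent of $\pi$), then case~(3) fails to trigger in the first $n^{\epsilon/2}$ vertices iff $\pi(v) < \pi(v_i)$ for every $1 \le i \le \min(n^{\epsilon/2}-1, |C_v|-1)$, where $C_v$ is $v$'s connected component. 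I will use this framing throughout.

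Second, small components ($|C| < n^{\epsilon/2}$). Prim from any $v \in C$ fully explores $C$, so $v$ stops by case~(2) or (3). The unique minimum-priority vertex $v^\star \in C$ never triggers case~(3) and emits no edge. Every other $u \in C$ does trigger case~(3) (since $v^\star$ is eventually reached along some Prim path) and emits a directed edge to a strictly lower-priority vertex in $C$. These $|C|-1$ directed edges with strictly decreasing priorities form a forest, and since the only possible sink is $v^\star$, they form one directed tree rooted at $v^\star$. Hence line~\ref{prim:forestconn} contracts all of $C$ to $v^\star$; since $C$ has no edges to the outside, this contracted vertex is isolated and removed on line~\ref{prim:contract}. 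Small components therefore contribute \emph{zero} vertices to $G'$.

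Third, large components ($|C| \geq n^{\epsilon/2}$). For each such $v$, I apply the observation from the first paragraph: $v$ survives iff $v$ has the minimum priority among the first $n^{\epsilon/2}$ vertices of its Prim sequence, a set whose identity is determined before looking at $\pi$. Since the $\pi$-values are i.i.d.\ uniform, this probability is exactly $1/n^{\epsilon/2}$. By linearity of expectation, the expected number of surviving vertices across all large components is at most $n \cdot n^{-\epsilon/2} = n^{1-\epsilon/2}$, giving the desired $\Omega(n^{\epsilon/2})$-factor shrinkage in expectation.

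The main obstacle is upgrading to a high-probability statement, since the survival indicators for different vertices share dependence through $\pi$. I expect to handle this by a standard bounded-differences argument: expose the priorities in some order and note that flipping a single priority changes the survival indicator of at most the $O(n^{\epsilon/2})$ vertices whose first-$n^{\epsilon/2}$ Prim sequence contains that vertex, which is tight enough for a McDiarmid-style concentration to give a $\tilde{O}(n^{1-\epsilon/2})$ upper bound on the number of surviving vertices w.h.p. Alternatively, one can partition $V$ into $O(\log n)$ priority levels and bound the number of roots per level by a Chernoff argument on nearly-independent indicators; either approach suffices to conclude the factor-$\Omega(n^{\epsilon/2})$ reduction w.h.p., as required by Theorem~\ref{thm:mstlowspace}.
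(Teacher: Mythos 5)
Your decomposition into small versus large components is sound, and your observation that a vertex $v$ in a large component survives iff $\pi(v)$ is minimal among the fixed set of $v$'s first $n^{\epsilon/2}$ Prim-order vertices (hence with probability exactly $n^{-\epsilon/2}$) is correct and gives the right expectation. However, the concentration step has a genuine gap. You claim the per-coordinate Lipschitz constant for McDiarmid is $O(n^{\epsilon/2})$ because flipping $\pi(u)$ affects only the vertices whose first-$n^{\epsilon/2}$ Prim prefix contains $u$, and you assert there are $O(n^{\epsilon/2})$ such vertices. That count is only correct \emph{on average} (the total is $n\cdot n^{\epsilon/2}$), not in the worst case: in a degree-$3$ tree one can arrange weights so that Prim from every vertex in a ball of radius $n^{\epsilon/2}$ around some fixed $u$ heads straight toward $u$, making $u$ appear in the prefix of up to $\Theta(2^{n^{\epsilon/2}})$ vertices. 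So the worst-case Lipschitz constant is unbounded and plain McDiarmid does not apply. Your "priority levels plus Chernoff" alternative is too vague to evaluate, and the survival indicators are genuinely correlated through $\pi$, so it is not clear how to make it work.

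The paper avoids this entirely: its proof is \emph{deterministic}. A surviving vertex $v$ has, by definition of case~(1), explored a connected subtree of the ternarized MST $T$ of size exactly $n^{\epsilon/2}$, entering each vertex of that subtree from a particular incident edge of $T$; since $\Delta(T)\le 3$, each vertex of $T$ can be "acquired from a given direction" by at most one surviving search, so there are at most $3n/n^{\epsilon/2}$ survivors. This gives the $\Omega(n^{\epsilon/2})$ shrinkage for \emph{every} permutation $\pi$, not just w.h.p., which also explains why the lemma statement carries no "w.h.p." qualifier that your probabilistic route would need. Your expectation computation is a nice sanity check on the paper's bound, but to close the argument you should replace the concentration step with this deterministic charging argument, which is where the ternarization hypothesis actually does its work.
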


At a high level, to prove the lemma, we view $F_{\mu}$ as a collection of trees.
Then, we show that each vertex is a root of one tree with probability $O(n^{-\epsilon/2})$ (thanks to first stopping condition of Prim's algorithm in Algorithm~\ref{alg:truncatedprim}).
Hence, the number of trees in $F_{\mu}$ and consequently the number of vertices in the contracted graph shrinks by a factor of $\Omega(n^{\epsilon/2})$.

Next, we use the condition from Line~\ref{l:stopping} of Algorithm~\ref{alg:truncatedprim} to bound the total communication of the algorithm.

\begin{restatable}{lemma}{msfqueries}\label{lem:msfqueries}
Algorithm~\ref{alg:truncatedprim} uses $O(n \log n)$ queries w.h.p.
\end{restatable}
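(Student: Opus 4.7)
The total query cost of Algorithm~\ref{alg:truncatedprim} is dominated by $Q := \sum_{v \in V} |S_v|$, where $|S_v|$ denotes the number of vertices visited by the local Prim search started at $v$ (each visit uses $O(1)$ queries for neighbor and priority-queue inspection). The plan is to first bound $E[Q] = O(n\log n)$ using the rank-based stopping rule of Line~\ref{l:stopping}, and then upgrade the expectation to a high-probability statement via the treap-style analysis alluded to in the surrounding text.

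For the expectation, fix $v$ and let $w_0 = v, w_1, w_2, \ldots$ be the deterministic order in which an untruncated Prim search from $v$ would visit the vertices of its component; this sequence depends only on the graph and edge weights, not on the random ranks. The truncated search reaches at least $k$ vertices iff the rank-based stopping rule has not fired during the first $k-1$ insertions, i.e., iff $\pi(w_i) \geq \pi(v)$ for all $1 \leq i \leq k-2$. Since the ranks are i.i.d.\ uniform, symmetry gives $\Pr[|S_v| \geq k] \leq 1/(k-1)$. Combined with the deterministic truncation $|S_v| \leq n^{\epsilon/2}$ from condition (1),
\begin{equation*}
E[|S_v|] \;\leq\; 1 + \sum_{k=2}^{n^{\epsilon/2}} \frac{1}{k-1} \;=\; O(\log n),
\end{equation*}
and summing over $v \in V$ yields $E[Q] = O(n \log n)$.

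The main obstacle is the high-probability upgrade, because the $|S_v|$'s are correlated --- Prim trajectories from different start vertices overlap, so they share random ranks --- and individually $|S_v|$ has a polynomial tail ($\Pr[|S_v| \geq k] \leq 1/(k-1)$), so it is \emph{not} $O(\log n)$ w.h.p.\ on its own. I would rewrite $Q = \sum_{k \geq 1} Y_k$ with $Y_k := |\{v : |S_v| \geq k\}|$, so that $E[Y_k] \leq n/(k-1)$, and then concentrate each $Y_k$ around its mean via a second-moment calculation: the indicator $\mathbf{1}[|S_v| \geq k]$ depends only on the ranks along the first $k-1$ vertices of $v$'s deterministic Prim trajectory, so the covariance between two such indicators decays with the overlap of the trajectories, and this overlap structure is exactly the property exploited by the classical analysis of random search trees (treaps). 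Summing the per-$k$ tail bounds up to the hard cap $k \leq n^{\epsilon/2}$ then yields $Q = O(n \log n)$ with probability $1 - n^{-\Omega(1)}$. The delicate step, and the main obstacle, is the covariance bound, which I would handle by classifying pairs $(v_1, v_2)$ by the length of their common Prim prefix and bounding the contribution of each class using the treap-style argument.
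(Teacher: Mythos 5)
Your expectation calculation is correct and clean: fixing $v$, the untruncated Prim order $w_0=v, w_1, w_2,\ldots$ is indeed deterministic given the graph, so $\Pr[|S_v|\ge k]$ is a symmetry calculation and $\E[Q]=O(n\log n)$ follows from the harmonic sum. That part matches the spirit of the paper. The gap is exactly where you flag it: the high-probability upgrade. Your proposed route --- decompose $Q=\sum_k Y_k$, bound $\operatorname{Var}(Y_k)$ by controlling covariances of the indicators $\mathbf{1}[|S_v|\ge k]$, and union bound over $k\le n^{\epsilon/2}$ --- is not carried out, and as sketched it is unlikely to close. A second-moment bound on $Y_k$ gives only a Chebyshev (polynomial) tail per $k$; to survive a union bound over $n^{\epsilon/2}$ values of $k$ and still land at failure probability $n^{-\Omega(1)}$, you would need a much sharper per-$k$ concentration than variance control typically provides, and the covariance structure you'd have to expose is precisely the treap structure you gesture at without instantiating.

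The paper sidesteps the covariance question entirely by building an explicit combinatorial object: the \emph{ternary treap} $R$ on the ternarized MST $T$, a min-heap over $\pi$ in which each internal node has up to three children (one per direction in $T$). Lemma~\ref{lem:queryandthreep} shows the truncated Prim search from $v$ never leaves $v$'s subtree $R_v$ of the ternary treap, so its query cost is $O(|R_v|)$. Lemma~\ref{lem:threepheight} shows the height of $R$ is $O(\log n)$ w.h.p., by observing that the ancestor indicators $X_i^j$ (``$j$ is an ancestor of $i$'') are genuinely independent across $j$ for fixed $i$ --- whether $j$ is an ancestor of $i$ depends only on whether $j$ has the lowest rank on $T[i,j]$ --- so a Chernoff bound applies per vertex, and a union bound over $n$ vertices finishes. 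Then
\begin{equation*}
\sum_{v\in V}|R_v| \;=\; \sum_{u\in V}\bigl|\{v : u\in R_v\}\bigr| \;=\; \sum_{u\in V}\mathrm{depth}_R(u) \;\le\; n\cdot O(\log n)
\end{equation*}
w.h.p. The crucial difference from your plan is that the paper never needs to concentrate $Y_k$; it re-indexes the double sum by the \emph{visited} vertex $u$ rather than the \emph{visiting} vertex $v$, and the per-$u$ quantity (depth in $R$) is a sum of \emph{independent} indicators, so the usual Chernoff $+$ union bound over $n$ vertices applies directly. If you want to rescue your decomposition, you would in effect have to prove $\max_u \mathrm{depth}_R(u)=O(\log n)$ w.h.p.\ anyway, at which point you might as well take the paper's path.
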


The following lemma argues that the vertices can be assigned to
machines in such a way that every machine performs $O(n^{\epsilon})$
queries per round w.h.p. The space-complexity bound follows directly
from our bound on the number of queries. Since the algorithm performs
a constant number of steps, each implementable in $O(1/\epsilon)$
rounds of \AMPC{}, the round-complexity follows.

\vspace{-0.1cm}
\begin{restatable}{lemma}{primspacerounds}\label{lem:primspacerounds}
  Algorithm~\ref{alg:truncatedprim} runs in $O(1/\epsilon)$ rounds and
  $O(n\log n)$ space w.h.p.
\end{restatable}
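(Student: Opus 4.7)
The plan is to treat Algorithm~\ref{alg:truncatedprim} as four stages---rank and machine assignment (lines~1--2), the parallel Prim searches (lines~3--14), the forest contraction via Proposition~\ref{prop:forestconn} (line~15), and the graph contraction (line~16)---and bound the space and rounds of each, using Lemma~\ref{lem:msfqueries} as the main input.

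The total-space bound is essentially immediate. Each key--value pair occupies $O(1)$ words, writes are dominated by reads, and Lemma~\ref{lem:msfqueries} already guarantees $O(n\log n)$ total reads w.h.p. The auxiliary data ($\pi$, $\mu_v$, $F_\mu$, and the per-vertex edge lists $E_v$) contributes only $O(n)$ further. For the rounds, the forest-connectivity step runs in $O(1/\epsilon)$ rounds by Proposition~\ref{prop:forestconn}, and the contraction on line~\ref{prim:contract} is a standard $O(1/\epsilon)$-round \AMPC{} group-by (rehash edges and isolated-vertex removal by the map $C$). So the only real work is bounding the Prim phase.

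For the Prim phase I would exploit the defining feature of \AMPC{}: a machine can issue adaptive reads \emph{within} a round, choosing each key based on the results of earlier reads, subject only to the aggregate budget of $O(n^\epsilon)$ queries per round. Because $\Delta(G') \le 3$ and each search visits at most $n^{\epsilon/2}$ vertices, a single Prim search from $v$ costs $O(n^{\epsilon/2})$ adaptive queries and thus fits comfortably inside one round. It remains to argue that the \emph{aggregated} per-machine query load over all its assigned vertices is $\tilde O(n^\epsilon)$, so that $O(1/\epsilon)$ rounds suffice. Two independent randomness sources are available: the assignment $\mu$ (by Chernoff, each of the $m/n^\epsilon$ machines receives $O(n^\epsilon)$ of the $\Theta(m)$ ternarized vertices w.h.p.), and the rank permutation $\pi$ (which via Lemma~\ref{lem:msfqueries} forces the total query cost to be $O(n\log n)$ w.h.p.).

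The main obstacle is that per-vertex search lengths are \emph{correlated}---they all depend on the same permutation $\pi$---so one cannot simply apply Chernoff to their sum on a given machine. I would decouple the two sources of randomness by conditioning on $\pi$: once $\pi$ is fixed, each vertex has a deterministic query cost, and the per-machine total becomes a sum over a uniformly random subset (drawn by the $\pi$-independent assignment $\mu$) of $\Theta(m)$ bounded values whose total is $O(n\log n)$. A Hoeffding/Chernoff bound conditional on $\pi$ then concentrates the per-machine load around its mean, and a union bound over machines and over the high-probability event of Lemma~\ref{lem:msfqueries} yields $\tilde O(n^\epsilon)$ queries per machine w.h.p. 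Combined with the adaptive-read argument above, this places the Prim phase in $O(1/\epsilon)$ rounds, completing the proof.
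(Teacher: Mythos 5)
Your proposal is correct and follows essentially the same structure as the paper's proof: total space from Lemma~\ref{lem:msfqueries}, per-machine load via the two facts that each search is capped at $O(n^{\epsilon/2})$ queries and the average per-vertex cost is $O(\log n)$, and $O(1/\epsilon)$ rounds by composing the constituent steps. The one difference is that where you re-derive the per-machine concentration from scratch (condition on $\pi$, then apply Hoeffding over the independent machine assignments $\mu_v$), the paper invokes it as a black box---Lemma~8.4 of the prior \AMPC{} paper, a weighted balls-into-bins statement with exactly your two hypotheses ($O(n^{\epsilon/2})$ max weight, $O(\log n)$ average weight). Your self-contained argument is sound and arguably clearer; the paper's is shorter because it reuses existing machinery.
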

\vspace{-0.1cm}

By putting together the above lemmas, we obtain:

\begin{restatable}{lemma}{msfconstant}\label{lem:msfconstant}
Algorithm~\ref{alg:msf} computes the minimum spanning forest of an
undirected graph in $O(1/\epsilon \log(1/\epsilon))$ rounds w.h.p.
using total space $T = O(m\log n)$ w.h.p.
\end{restatable}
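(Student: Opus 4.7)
The plan is to analyze Algorithm~\ref{alg:msf} in its two branches and collect the bounds supplied by Lemmas~\ref{lem:msfshrink}, \ref{lem:msfqueries}, and \ref{lem:primspacerounds}, together with Proposition~\ref{prop:msfspaa}. Correctness, round complexity, and space complexity are handled branch by branch.

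In the dense branch ($m \geq n^{1+\epsilon/2}$) correctness and the space bound follow directly from Proposition~\ref{prop:msfspaa}, so the only thing I need to check is the round complexity. Here I would observe that with total space $T = \Theta(m+n) = \Omega(n^{1+\epsilon/2})$ we have $T/n = \Omega(n^{\epsilon/2})$, hence $\log_{T/n} n = O(1/\epsilon)$, so the $O((1/\epsilon) \log \log_{T/n} n)$ guarantee of Proposition~\ref{prop:msfspaa} collapses to $O((1/\epsilon)\log(1/\epsilon))$ rounds.

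The sparse branch ($m < n^{1+\epsilon/2}$) requires more care and is where the bulk of the argument lies. First I would justify the ternarization on Line~\ref{msf:degbound}: the graph $G'$ has $|V'|, |E'| = O(m)$ and maximum degree three, and since every dummy cycle edge has weight $\bot$ strictly below every real weight, every MSF of $G'$ must contain all but one edge of each cycle $C_v$ by the cycle property, so stripping the $\bot$-edges and collapsing each $C_v$ back to its original vertex recovers an MSF of $G$; this validates Line~\ref{msf:removebot}. Combining the output of \truncatedprim{} with that of \densemsf{} is justified by the standard contraction property of MSF: each edge placed in $E_T$ by \truncatedprim{} is the minimum-weight edge across some cut (it is selected by Prim's rule), hence lies in every MSF of $G'$, so the MSF of $G'$ equals $E_T$ together with an MSF of the contracted graph $G''$, which is exactly what Line~\ref{msf:runspaa} computes.

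To bound the resources in this branch I would combine Lemma~\ref{lem:primspacerounds}, which contributes $O(1/\epsilon)$ rounds and $O(n \log n)$ space to \truncatedprim{}, with the shrinkage bound of Lemma~\ref{lem:msfshrink}, giving $|V''| = O(m^{1-\epsilon/2})$ w.h.p. Since the subsequent \densemsf{} call is made with total space $T = \Omega(m)$, the key inequality is $T/|V''| = \Omega(m^{\epsilon/2})$, which is at least $|V''|^{\Omega(\epsilon)}$ and therefore yields $\log_{T/|V''|} |V''| = O(1/\epsilon)$. Plugging this into Proposition~\ref{prop:msfspaa} gives $O((1/\epsilon)\log(1/\epsilon))$ rounds for Line~\ref{msf:runspaa}. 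Ternarization, the final edge filtering, and the bookkeeping for the contraction each cost only $O(1/\epsilon)$ rounds and $O(m)$ space with standard sorting and relabelling primitives, so summing across all stages yields the advertised $O((1/\epsilon)\log(1/\epsilon))$ rounds and $O(m\log n)$ total space w.h.p.

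The main obstacle I anticipate is the exponent arithmetic that ties together the sparsity threshold $n^{1+\epsilon/2}$, the per-tree exploration budget $n^{\epsilon/2}$ inside \truncatedprim{}, and the iterated-log base $T/|V''|$ used by \densemsf{}; these must be chosen consistently so that the $O((1/\epsilon)\log(1/\epsilon))$ bound survives the two-stage composition in the sparse branch and so that all w.h.p.\ events accumulate benignly across the reduction. The remainder is a routine assembly of the supporting lemmas.
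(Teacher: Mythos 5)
Your proposal follows the same route as the paper's proof: branch on density, apply Proposition~\ref{prop:msfspaa} directly in the dense case, and in the sparse case ternarize, invoke Lemma~\ref{lem:primspacerounds} for \truncatedprim{} and Lemma~\ref{lem:msfshrink} for the factor-$\Omega(n^{\epsilon/2})$ vertex shrinkage, then finish with Proposition~\ref{prop:msfspaa} on the contracted graph, using the same exponent arithmetic to collapse $\log\log_{T/n'}n'$ to $O(\log(1/\epsilon))$. You go slightly further by spelling out correctness (cycle property for the $\bot$-weighted ternarization edges, cut/contraction property for composing $E_T$ with the MSF of $G''$), which the paper's proof of this lemma leaves implicit; the only loose end there is that the contraction $C$ is built from the pointer forest $F$ rather than from $E_T$ directly, so one should note that each pointer $(v,u)\in F$ is witnessed by a path in $E_v\subseteq E_T$, making $C$ consistent with the safe edges before invoking the contraction property.
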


In order to prove Theorem~\ref{thm:mstlowspace} we need to improve the query complexity of the algorithm from $O(m \log n)$ to $O(m + n\log^2 n)$.
To that end, we use a sampling scheme by Karger, Klein and Tarjan~\cite{DBLP:journals/jacm/KargerKT95} used in the single-machine linear-time algorithm for computing minimum spanning tree.
By using basic algorithmic techniques on trees, i.e. finding lowest common ancestors and heavy-light decomposition, we show that the scheme can be implemented in a constant number of \AMPC{} rounds.

\subsection{Reducing the Query Complexity}
In this section we show the following reduction: given an AMPC algorithm for computing a minimum spanning forest in $O(1)$ rounds and makes $O(m \log n)$ queries in total, we can obtain an algorithm that runs in $O(1)$ rounds and makes $O(m + n \log^2 n)$ queries in total.
Hence, the query complexity of the algorithm is asymptotically optimal whenever $m = \Omega(n \log^2 n)$.

The reduction is obtained by combining a sampling lemma by Karger, Klein and Tarjan~\cite{DBLP:journals/jacm/KargerKT95} with basic algorithmic techniques on trees, i.e. finding lowest common ancestors and heavy-light decomposition.

\begin{definition}\label{def:flight}
Let $G = (V, E, w)$ be a weighted graph and $F$ be a forest, which is a subgraph of $G$.
For $x, y \in V$, let us define $w_F(x, y)$ as follows.
If $x$ and $y$ belong to different connected components of $F$, then $w_F(x, y) = \infty$.
Otherwise, we let $w_F(x, y)$ to be the maximum weight of an edge on the unique path from $x$ to $y$ in $F$.
We say that an edge $uw \in E$ is $F$-light, if $w(uw) \leq w_F(x, y)$, and $F$-heavy otherwise.
\end{definition}

By using basic properties from the minimum spanning tree, we get the following.

\begin{proposition}
Let $G = (V, E, w)$ be a weighted graph. Let $F$ be any forest of $G$ and $T$ be an arbitrary minimum spanning forest of $G$.
Then, all edges of $T$ are $F$-light.
\end{proposition}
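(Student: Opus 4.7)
The plan is to argue by contradiction via a standard edge-swap (cycle property) argument for minimum spanning forests. Suppose some edge $e = uw \in T$ is not $F$-light, i.e., $w(e) > w_F(u, w)$. This requires $w_F(u,w) < \infty$, so $u$ and $w$ lie in the same connected component of $F$, and there is a unique $u$--$w$ path $P$ in $F$ with every edge on $P$ having weight strictly less than $w(e)$.

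Next I would use the forest structure of $T$. Removing $e$ from $T$ breaks the tree of $T$ containing $e$ into two pieces; call the vertex sets $A \ni u$ and $B \ni w$. Since the path $P$ starts in $A$ and ends in $B$, there must be some edge $e' = xy$ on $P$ with $x \in A$ and $y \in B$. The edge $e'$ lies in $F$, hence in $G$, and by construction $w(e') \le w_F(u, w) < w(e)$.

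Now consider $T' := (T \setminus \{e\}) \cup \{e'\}$. Since $e'$ crosses the cut $(A, B)$, adding it back to $T \setminus \{e\}$ reconnects the two pieces, so $T'$ is again a spanning subgraph of $G$ that is a forest and spans the same vertex set (and hence the same connected components of $G$) as $T$. But $w(T') = w(T) - w(e) + w(e') < w(T)$, contradicting the fact that $T$ is a minimum spanning forest of $G$. Therefore every edge of $T$ must be $F$-light.

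The argument is entirely combinatorial and involves no computation; the only subtle points are (i) handling the trivial case where $u, w$ are in different components of $F$ (which is immediate from $w_F(u,w) = \infty$) and (ii) noting that the strict inequality $w(e) > w_F(u,w)$ in the definition of $F$-heavy is what lets the exchange produce a strictly lighter forest even when edge weights are not assumed to be distinct. No sophisticated obstacle is anticipated.
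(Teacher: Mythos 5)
Your proof is correct. The paper states this proposition without proof (citing only ``basic properties from the minimum spanning tree''), and your edge-exchange argument via the cycle property is exactly the standard argument one would expect to fill that gap; in particular you correctly handle the case where $u,w$ lie in different components of $F$, verify that the crossing edge $e'$ cannot already lie in $T\setminus\{e\}$ (since it straddles the cut created by removing $e$), and observe that the strict inequality in the definition of $F$-heavy is what yields a strictly lighter spanning forest and hence the contradiction.
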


Hence, it follows that when computing a minimum spanning forest of $G$ we can immediately discard all $F$-heavy edges.
To that end, we will use the following lemma.

\begin{lemma}{\cite{DBLP:journals/jacm/KargerKT95}}\label{lem:flight}
Let $G = (V, E, w)$ be an $n$-vertex weighted graph, let $H$ be a subgraph obtained from $G$ by including each edge independently with probability $p$, and let $F$ be the minimum spanning forest of $H$.
Then, the expected number of $F$-light edges in $G$ is $O(n/p)$.
\end{lemma}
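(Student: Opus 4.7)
My plan is to follow the classical Karger--Klein--Tarjan argument based on running Kruskal's algorithm while deferring coin flips. Assume without loss of generality that edge weights are distinct (break ties consistently by edge identifier), so that the MSF $F$ of $H$ is uniquely defined. Sort the edges of $G$ as $e_1, e_2, \ldots, e_m$ in increasing order of weight, and consider the following equivalent process for generating $H$ and $F$ simultaneously: maintain a forest $F_i$, initialized to the empty forest; for $i = 1, \ldots, m$, first flip an independent $p$-biased coin for $e_i$ to decide whether $e_i \in H$, and if so, set $F_i = F_{i-1} \cup \{e_i\}$ when the endpoints of $e_i$ lie in different components of $F_{i-1}$ and $F_i = F_{i-1}$ otherwise. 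Standard correctness of Kruskal's algorithm yields $F_m = F$.

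The key observation is the following characterization of $F$-lightness: an edge $e_i = uv$ is $F$-light if and only if, at the moment $e_i$ is processed, $u$ and $v$ lie in different components of $F_{i-1}$. Indeed, $u$ and $v$ are in the same component of $F_{i-1}$ iff $F$ contains a $u$--$v$ path using only edges of weight less than $w(e_i)$ (since $F_{i-1}$ agrees with $F$ on all edges of smaller weight by Kruskal's correctness), which by Definition~\ref{def:flight} is exactly the condition $w_F(u,v) < w(e_i)$, i.e., $e_i$ is $F$-heavy. So whether $e_i$ is $F$-light is determined entirely by $F_{i-1}$, which depends only on the coin flips for $e_1, \ldots, e_{i-1}$.

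Now defer the coin flip for $e_i$ until after $F_{i-1}$ has been exposed. Conditioned on $e_i$ being $F$-light, the probability that $e_i \in H$ is exactly $p$, and in that case $e_i$ is added to the growing forest. Let $L$ be the (random) number of $F$-light edges in $G$ and $Y = |F|$. Summing the indicators of ``$e_i$ is $F$-light and $e_i \in H$'' gives $\mathbb{E}[Y] = p \cdot \mathbb{E}[L]$. Since $F$ is a forest on at most $n$ vertices, $Y \leq n - 1$ deterministically, hence $\mathbb{E}[L] \leq (n-1)/p = O(n/p)$.

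The only subtle point, and what I would be most careful about, is justifying that the $F$-light/heavy status of $e_i$ can be read off from $F_{i-1}$ alone rather than the final $F$; this is exactly what the Kruskal correctness gives us, since $F \cap \{e_j : w(e_j) < w(e_i)\} = F_{i-1} \cap \{e_j : w(e_j) < w(e_i)\}$. Everything else is linearity of expectation and the trivial bound on the size of a forest.
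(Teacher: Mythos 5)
The paper does not prove Lemma~\ref{lem:flight}; it cites it directly to Karger, Klein, and Tarjan, so there is no in-paper proof to compare against. Your reconstruction is a correct and faithful rendering of the classical KKT argument: run Kruskal on $G$ in increasing weight order, flip the $p$-coin for $e_i$ only after exposing $F_{i-1}$, observe that the $F$-light status of $e_i$ is a deterministic function of $F_{i-1}$ (equivalent to $u,v$ lying in different components of $F_{i-1}$), and conclude $\E[|F|] = p\,\E[L]$ with $|F|\le n-1$. The one place worth a touch more care is the equivalence you invoke between ``$u,v$ in different components of $F_{i-1}$'' and ``$e_i$ is $F$-light'': in the direction where $u,v$ are already connected in $F_{i-1}$, you should note that since a forest is grown by only joining distinct components, the $u$--$v$ path in $F_{i-1}$ is never altered later, so it \emph{is} the $u$--$v$ path in the final $F$ and consists entirely of edges lighter than $e_i$, making $e_i$ $F$-heavy; in the other direction, any later edge on a $u$--$v$ path in $F$ is heavier than $e_i$, so $w_F(u,v) \ge w(e_i)$ (or $=\infty$). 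You gesture at exactly this via Kruskal correctness, and the argument is sound; I am just flagging it as the step a careful reader would want spelled out. The degenerate case where $H$ has fewer than $n$ vertices or isolated vertices changes nothing since $|F|\le n-1$ still holds.
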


The above lemma immediately suggests the following algorithm for computing a minimum spanning forest (see Algorithm~\ref{alg:msfreduction}).

\begin{tboxalg}{$MSF(G)$}\label{alg:msfreduction}
\begin{algorithmic}[1]
        \State $H := $ graph obtained from $G$ by sampling each edge independently with probability $1/\log n$.
	\State $F := $ compute the MSF of $H$

	\State \label{l:fl} $E_L := $ set of edges of $G$ which are $F$-light

	\State \Return the MSF of $F \cup E_L$.
\end{algorithmic}
\end{tboxalg}

\begin{lemma}\label{lem:remfast}
Algorithm~\ref{alg:msfreduction} is correct. All lines excluding~\ref{l:fl} can be implemented in $O(1)$ AMPC rounds using $O(m + n \log^2 n)$ total queries, where $m$ and $n$ are the numbers of edges and vertices of the input graph.
\end{lemma}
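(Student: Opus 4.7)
The plan is to verify correctness first and then bound the round and query complexity of each line other than line~\ref{l:fl}.

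\textbf{Correctness.} Let $T^\star$ denote an arbitrary minimum spanning forest of $G$. Because $F$ is a forest subgraph of $G$, the Proposition preceding Algorithm~\ref{alg:msfreduction} gives that every edge of $T^\star$ is $F$-light, so $T^\star \subseteq E_L \subseteq F \cup E_L \subseteq E$. Since the subgraph $F \cup E_L$ of $G$ contains some minimum spanning forest of $G$ (namely $T^\star$), the minimum spanning forest of $F \cup E_L$ returned on line~4 is also a minimum spanning forest of $G$, by the standard cut/cycle properties of MSFs.

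\textbf{Complexity of the non-excluded lines.} I account for lines~1, 2, and~4 separately. Line~1 (sampling $H$) reads each edge and flips an independent biased coin with probability $1/\log n$ in a single $\AMPC$ round and $O(m)$ queries; a Chernoff bound then gives $|E(H)| = O(m/\log n + \log n)$ with high probability. Line~2 feeds $H$ into the algorithm of Lemma~\ref{lem:msfconstant}, costing $O(1)$ rounds and $O(|E(H)|\log n) = O(m)$ queries w.h.p. For line~4, Lemma~\ref{lem:flight} applied with $p = 1/\log n$ yields $\E[|E_L|] = O(n \log n)$, so the graph $(V, F \cup E_L)$ has $O(n \log n)$ edges in expectation. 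Running the algorithm of Lemma~\ref{lem:msfconstant} on this graph takes $O(1)$ rounds and $O((|F| + |E_L|)\log n) = O(n \log^2 n)$ queries in expectation. Summing the three contributions gives $O(m + n \log^2 n)$ total queries and $O(1)$ rounds overall.

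The main technical nuance is promoting the \emph{expected} bound from Lemma~\ref{lem:flight} to a high-probability statement that matches the query budget required by the invocation of Lemma~\ref{lem:msfconstant}. I would handle this with a standard restart-on-failure argument: resample $H$ and recompute $F$ if the size of $E_L$ exceeds a sufficiently large constant times its expectation. By Markov's inequality each trial succeeds with constant probability, so the procedure terminates after $O(1)$ expected rounds and the high-probability bound follows from a geometric tail over $O(\log n)$ independent restarts. The remaining bookkeeping (packing the sampled edges, invoking the primitives, and concatenating outputs) is routine within the $\AMPC$ primitives already used earlier in the section.
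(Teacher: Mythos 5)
Your proof follows essentially the same route as the paper's: correctness via the observation that every MSF edge is $F$-light so the true MSF survives into $F \cup E_L$, and complexity by sampling once ($O(m)$ queries), invoking Lemma~\ref{lem:msfconstant} on $H$ with $O(m/\log n)$ edges ($O(m+n)$ queries), and invoking it again on a graph with $O(n\log n)$ edges ($O(n\log^2 n)$ queries). One thing you do add that the paper silently elides: Lemma~\ref{lem:flight} only bounds $|E_L|$ in expectation, yet the overall Theorem~\ref{thm:mstlowspace} claims a w.h.p.\ space bound, so some mechanism (your restart-on-failure scheme, or a direct concentration bound on the number of $F$-light edges as in Karger--Klein--Tarjan) is genuinely needed to close that gap; flagging and handling it is a worthwhile improvement, even if the restart as phrased re-executes line~\ref{l:fl} and so sits slightly outside the lemma's "all lines excluding~\ref{l:fl}" scoping.
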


\begin{proof}
The correctness of the algorithm follows directly from Lemma~\ref{lem:flight}, combined with the fact that if $T$ is a minimum spanning forest of $G$, then $T$ is also a minimum spanning forest of every subgraph of $G$ containing $T$.

It remains to analyze the total number of rounds and queries that the algorithm makes.
Sampling graph $H$ clearly requires only $O(m)$ queries and $O(1)$ rounds.
To compute MSF of $H$ we use algorithm of Lemma~\ref{lem:msfconstant}, which requires $O((m/\log n) \log n + n) = O(m+n)$ total queries and $O(1)$ rounds.
Finally, the last step computes MSF of a graph that has $n$ vertices and $O(n \log n)$ edges (by Lemma~\ref{lem:flight}), which takes $O(n \log^2 n)$ queries and $O(1)$ rounds, again by using the algorithm of Lemma~\ref{lem:msfconstant}.
\end{proof}

Implementing line~\ref{l:fl} is quite technical and takes advantage of the fact that computing range-minimum queries, heavy-light decomposition and finding lowest common ancestors can all be done efficiently in the \MPC{} model.
We defer the full proofs to Appendix~\ref{apx:msf}. Combining
Lemma~\ref{lem:remfast} with with the MST result from
Lemma~\ref{lem:msfconstant} proves Theorem~\ref{thm:mstlowspace}.

\section{Matching}\label{sec:matching}

\newcommand{\lfmm}[1]{\ensuremath{\mathsf{GreedyMM}(#1)}}

\begin{tboxalg}{Algorithm for maximal matching.}\label{alg:maximalmatching}
	\begin{algorithmic}[1]
	\State For any edge $e$ in the graph, pick a rank $\pi(e)$ uniformly at random from $(0, 1)$ and let $\pi$ be the permutation obtained by sorting the edges based on their ranks.
	\State $G_1 \gets G$
	\For{$i \in 1 \ldots k=\lceil \log_2 \log_2 \Delta \rceil + 1$}
		\If{$\Delta(G_i) > 10\log n$}\label{line:if}
			\State Let $H_i$ be the subgraph of $G_i$ containing its edge $e$ iff $\pi(e) \in [0, \Delta^{-0.5^i}]$.
		\Else
			\State $H_i \gets G_i$.
		\EndIf
		\State Find matching $M_i = \lfmm{H_i, \pi}$ by running the MIS algorithm of Proposition~\ref{prop:mis} on the line graph of $H_i$, using permutation $\pi$.
		\State $G_{i+1} \gets G_i[V \setminus V(M_i)]$.
	\EndFor
	\State \Return matching $M_1 \cup M_2 \cup \ldots \cup M_k$.
\end{algorithmic}
\end{tboxalg}

Maximum matching and its natural extension maximum weight matching are among the most fundamental combinatorial optimization problems with a wide range of applications. Particularly, maximum weight matching is an important subroutine in balanced partitioning and hierarchical clustering, see e.g. \cite{icmlVahab} and the references therein.

In this section, we consider the matching problem in the \AMPC{} model. Our main result is an efficient algorithm for the unweighted {\em maximal matching} problem. We then use this algorithm as a black-box to get an algorithm for approximate maximum weight matching and related problems.

\begin{theorem}\label{thm:maximalmatching}
There is an \AMPC{} algorithm which with probability $1-1/\poly(n)$ computes a random greedy maximal matching using $O(n^\delta)$ space per machine (for any constant $\delta \in (0, 1)$) and:
\begin{enumerate}[itemsep=0pt,topsep=0pt]
	\item In $O(\log \log n)$ rounds using $\widetilde{O}(m + n)$ total
  space.\label{lab:part1}
	\item In $O(1)$ rounds using $O(m+n^{1+\epsilon})$ total space for any constant $\epsilon > 0$.
  \label{lab:part2}
\end{enumerate}
\end{theorem}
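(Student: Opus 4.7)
My plan is to verify the three required properties of Algorithm~\ref{alg:maximalmatching}: (a) it computes a random greedy maximal matching with respect to the permutation $\pi$, (b) for Part~\ref{lab:part1} it runs in $O(\log \log n)$ rounds using $\widetilde{O}(m+n)$ total space, and (c) for Part~\ref{lab:part2} it runs in $O(1)$ rounds using $O(m+n^{1+\epsilon})$ total space. Correctness is the easiest piece: because a single global edge permutation $\pi$ is used throughout, an edge $e$ enters the greedy matching of the input graph $G$ with respect to $\pi$ if and only if, at the time it is processed, no earlier-ranked incident edge has been added. Since $M_i$ is a \emph{prefix greedy matching} of $G_i$ (obtained by running MIS on the line graph of $H_i$ with the same $\pi$) and only those edges with rank at most $\Delta^{-0.5^i}$ are eligible in phase $i$, a standard inductive argument shows the union $\bigcup_i M_i$ is exactly the random greedy maximal matching induced by $\pi$.

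\textbf{Round complexity for Part~\ref{lab:part1}.}
The core analytic step is the residual-degree lemma: if we process all edges of $G_i$ with rank at most $\tau = \Delta^{-0.5^i}$ greedily, then with probability $1-n^{-\Omega(1)}$ every vertex in $G_{i+1}$ has degree at most $c \log(n)/\tau = c \log(n)\cdot \Delta^{0.5^i}$ for a constant $c$. The reason is that a vertex $v$ surviving into $G_{i+1}$ means none of its incident edges in $H_i$ was added to $M_i$, which for a vertex of degree $d$ in $G_i$ happens with probability at most $(1-\tau)^{\Omega(d)}$ once one conditions on the rank-based analysis of greedy matching; applying this to the degree of $v$ in $G_{i+1}$ and taking a union bound over $v$ gives the claim. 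Iterating for $i=1,\dots,k=\lceil \log_2 \log_2 \Delta\rceil+1$ shrinks $\Delta(G_i)$ by a square each time, so after $k$ phases the residual max degree drops below $10 \log n$ and Line~\ref{line:if} sends the algorithm into the base case, where MIS on the line graph finishes in $O(1)$ AMPC rounds by Proposition~\ref{prop:mis}. Since each phase invokes the constant-round MIS primitive once, the total round count is $O(\log\log n)$.

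\textbf{Space complexity for Part~\ref{lab:part1} and Part~\ref{lab:part2}.}
For Part~\ref{lab:part1}, I would bound the size of $H_i$. Using the residual degree bound, $|E(G_i)| \leq n \cdot \Delta(G_i) = \widetilde{O}(n\cdot \Delta^{0.5^{i-1}})$, and since each edge of $G_i$ is kept in $H_i$ with probability $\Delta^{-0.5^i}$, a Chernoff bound yields $|H_i| = \widetilde{O}(n \cdot \Delta^{0.5^{i-1} - 0.5^i}) = \widetilde{O}(n \cdot \Delta^{0.5^i}) = \widetilde{O}(n)$ w.h.p., except for $i=1$ where $|H_1| = \widetilde{O}(m/\sqrt{\Delta} + n)$. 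Summing over $k=O(\log \log n)$ phases gives $\widetilde{O}(m+n)$ total space, and the $n^\delta$ per-machine space is respected since MIS on the line graph of $H_i$ can be run with $O(n^\delta)$ per-machine space by Proposition~\ref{prop:mis}. For Part~\ref{lab:part2}, my plan is to use a much more aggressive sampling schedule with only $O(1/\epsilon)=O(1)$ phases: in phase $i$ I sample with probability $\Delta^{-(1-(1-\epsilon/2)^i)}$ (or equivalently choose a polynomial rather than square-root decay). The same residual-degree lemma then reduces $\Delta$ from $\Delta^{(1-\epsilon/2)^{i-1}}$ to $\widetilde{O}(\Delta^{(1-\epsilon/2)^{i}})$ per phase, so after $O(1/\epsilon)$ phases the degree drops below $\operatorname{polylog}(n)$ and the base case finishes the job. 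The price is that $|H_i|$ can be as large as $\widetilde O(n\cdot \Delta^{(1-\epsilon/2)^{i-1}\cdot \epsilon/2})$, which is at most $\widetilde O(n^{1+\epsilon})$ in the worst phase, giving the claimed $O(m+n^{1+\epsilon})$ total-space bound.

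\textbf{Main obstacle.}
The delicate step is the residual max-degree lemma and its high-probability guarantee over all $O(\log\log n)$ (respectively $O(1)$) phases. The analysis cannot treat the degree in $G_{i+1}$ as a sum of independent Bernoullis since matching decisions create dependencies among incident edges, so I would instead use the rank-based characterization of random greedy matching (an edge belongs to the matching iff it is the minimum-rank unmatched incident edge at the time of processing) to argue about each vertex's survival probability via a principle-of-deferred-decisions exposure of ranks, and then take a union bound across phases. A secondary subtlety is arguing that the per-machine space for running MIS on the line graph of $H_i$ remains within $O(n^\delta)$; this follows because the line graph is never materialized — the MIS black box can query adjacencies in $H_i$ using only $O(n^\delta)$ local storage.
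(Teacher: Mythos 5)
Your treatment of Part~\ref{lab:part1} matches the paper's: you use the degree-reduction property (Proposition~\ref{prop:degreereduction}) to show $\Delta(G_i) = \widetilde{O}(\Delta^{0.5^{i-1}})$, so $O(\log\log\Delta)$ phases of the $O(1)$-round MIS primitive (Proposition~\ref{prop:mis}) suffice. One technical caveat in your space accounting: what the MIS black box consumes is the number of edges of the \emph{line graph} of $H_i$, not $|E(H_i)|$, and the claim $|E(H_i)|=\widetilde{O}(n)$ fails for small $i$ (e.g.\ $|E(H_2)|$ can be $\widetilde{O}(n\Delta^{1/4})$). A more careful computation of $\sum_v \deg_{H_i}(v)^2$ does show the line graph of $H_i$ has $\widetilde{O}(m)$ edges per phase, so the conclusion stands, but as written your bound has a gap.

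For Part~\ref{lab:part2} your route differs from the paper's and, as written, does not achieve $O(1)$ rounds. Taking your intended schedule to be $q_i=\Delta^{-(1-\epsilon/2)^i}$ (so that $D_{i+1}=\widetilde{O}(\Delta^{(1-\epsilon/2)^i})$ follows from Proposition~\ref{prop:degreereduction}), you need $(1-\epsilon/2)^{k}\log\Delta=O(\log\log n)$ to reach a polylogarithmic residual degree, which forces $k=\Theta(\log\log n/\epsilon)$ --- asymptotically no faster than Part~\ref{lab:part1} when $\epsilon$ is constant. The trouble is that your per-phase sampling ratio $q_i/q_{i-1}=\Delta^{(1-\epsilon/2)^{i-1}\epsilon/2}$ shrinks geometrically with $i$, so progress stalls; to finish in $O(1/\epsilon)$ phases you would instead need a schedule whose ratio stays fixed at $n^{\Theta(\epsilon)}$ (e.g.\ $q_i=\min(1,n^{i\epsilon/2}/\Delta)$), together with a careful check that the line graph of $H_i$ stays within $O(m+n^{1+\epsilon})$ edges. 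The paper's actual Part~\ref{lab:part2} is an entirely different argument: it takes the truncated Yoshida-style recursive query process of~\cite{AMPC,yoshida} but \emph{initiates the truncated searches from the $n$ vertices rather than from the $m$ edges}, so that the total communication drops from $O(mn^\epsilon)$ to $O(n^{1+\epsilon})$. Correctness is reduced to Lemma~\ref{lem:misampc} by observing that whenever an edge-initiated process would resolve an edge, some vertex-initiated process resolves it with no more queries. Your sampling-based outline, if repaired as above, would give a genuinely different and arguably simpler proof of Part~\ref{lab:part2}, but the schedule you wrote down does not deliver the claimed round bound.
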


The \emph{random greedy maximal matching} is obtained by considering the edges one by one in a random order and adding each edge to the matching if none of its endpoints has been previously matched.
Theorem~\ref{thm:maximalmatching} leads to the following results.

\begin{corollary}
The same guarantee as in Theorem~\ref{thm:maximalmatching} also applies to $1+\epsilon$ approximate maximum matching, $2+\epsilon$ approximate maximum weight matching, and $2$ approximate minimum vertex cover.
\end{corollary}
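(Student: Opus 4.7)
The plan is to reduce each of the three problems to a constant number of invocations of the maximal matching primitive of Theorem~\ref{thm:maximalmatching}, so that the $O(\log\log n)$ or $O(1)$ round bound and the $\widetilde O(m+n)$ or $O(m+n^{1+\epsilon})$ space bound are inherited with only $O(1)$ overhead. The minimum vertex cover statement is immediate and adds no extra \AMPC{} work: let $M$ be the maximal matching returned by Theorem~\ref{thm:maximalmatching} and output $V(M)$. By maximality no edge has both endpoints outside $V(M)$, so $V(M)$ is a vertex cover; and any vertex cover must cover every edge of $M$ and therefore has size at least $|M| = |V(M)|/2$, so $V(M)$ is a $2$-approximation.

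For $(1+\epsilon)$-approximate maximum cardinality matching I would invoke a standard augmenting-path reduction (in the spirit of McGregor, and its later \MPC{} refinements). Starting from the maximal matching $M$ output by Theorem~\ref{thm:maximalmatching}, iteratively build an auxiliary ``augmenting path'' graph whose matchings correspond to vertex-disjoint augmenting paths of length at most $O(1/\epsilon)$ with respect to $M$, compute a maximal matching in this auxiliary graph by another call to Theorem~\ref{thm:maximalmatching}, and augment $M$ along the discovered paths. After $O(1/\epsilon) = O(1)$ iterations the matching is within a $(1+\epsilon)$ factor of optimal, and each iteration uses one call to the primitive on a graph of the same asymptotic size, so the round and space complexities of Theorem~\ref{thm:maximalmatching} are preserved up to constants.

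For $(2+\epsilon)$-approximate maximum weight matching I would use a black-box weighted-to-unweighted reduction such as that of Lotker--Patt-Shamir--Pettie, or the local-ratio / Paz--Schwartzman style postprocessing, adapted to invoke Theorem~\ref{thm:maximalmatching}. In either formulation one performs a constant number of unweighted maximal matching calls on graphs derived from the input by restricting to edge-weight classes rounded to powers of $(1+\epsilon)$, then combines the resulting matchings greedily in decreasing order of weight class; a standard charging argument shows that the combined matching is a $(2+\epsilon)$-approximation. Again the round and space complexities carry over.

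The main obstacle I expect is not the reductions themselves but faithfully implementing their auxiliary-graph constructions within the \AMPC{} per-round query budget and the sublinear per-machine space. Specifically, indexing edges by their rounded weight class and looking up the neighborhoods of currently free or matched vertices to assemble the augmenting-path and weight-class graphs must be done through DHT lookups in $O(1)$ rounds; this is handled exactly in the style of the primitives already used throughout Section~\ref{sec:matching} and the MSF section, so no new \AMPC{} machinery is required.
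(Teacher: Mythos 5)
The paper states this corollary without giving a proof, treating it as a standard consequence of Theorem~\ref{thm:maximalmatching} via well-known reductions, so your reconstruction cannot be compared against an explicit argument in the text. That said, your three reductions are correct and are almost certainly what the authors had in mind: the vertex-cover claim is immediate (the endpoints of a maximal matching $M$ form a cover, and any cover must hit every edge of $M$, giving the factor $2$); the $(1+\epsilon)$-approximate unweighted matching via a constant (for constant $\epsilon$) number of augmenting-path phases, each implemented as one maximal matching call on an auxiliary graph, is the McGregor / Lotker--Patt-Shamir--Pettie approach; and the $(2+\epsilon)$-approximate weighted matching via rounding weights to powers of $(1+\epsilon)$ is also standard.

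One point I would tighten in a full write-up is the weighted case, which is the most delicate of the three. The naive weight-class reduction --- one maximal matching call per class, processed heaviest first --- already gives a $(2+\epsilon)$-approximation but uses $\Theta(\log_{1+\epsilon} W)$ calls, which would \emph{not} preserve the round guarantee of Theorem~\ref{thm:maximalmatching}. The Lotker--Patt-Shamir--Pettie style grouping of classes with a randomized shift (or an adaptation of the Paz--Schwartzman local-ratio postprocessing seeded by the maximal matching) is exactly what brings the call count down to a constant depending only on $\epsilon$; you mention both but should commit to one and note explicitly that the class-grouping step is itself a pure sort/filter, implementable in $O(1)$ \AMPC{} rounds. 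Similarly, in the $(1+\epsilon)$ unweighted case you should state the phase count as an explicit function of $\epsilon$ and take a union bound over the constant number of maximal matching calls to conclude that the $1-1/\poly(n)$ success probability of Theorem~\ref{thm:maximalmatching} carries over. Your final observation --- that the auxiliary graphs are built by routine sorts, filters, and DHT neighborhood lookups that fit within the per-round query budget, so no new \AMPC{} machinery is needed --- is correct and is what makes the corollary ``free'' given the theorem.
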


We use the following MIS algorithm of \cite{AMPC} as a black-box.

\begin{proposition}{\cite{AMPC}}\label{prop:mis}
There is a randomized \AMPC{} algorithm which computes a random-greedy maximal independent set in $O(1)$ rounds using $O(n^\delta)$ space per machine (for any constant $\delta \in (0, 1)$) and $\widetilde{O}(m)$ total space. The bound on the total space holds in expectation and the rest of the bounds hold with high probability.
\end{proposition}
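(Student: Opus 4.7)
The plan is to compute the canonical random-greedy MIS (vertices ranked by an independent uniform $\pi(v) \in (0,1)$, processed in increasing rank order) by having each vertex locally resolve its MIS status through adaptive DHT exploration of its \emph{greedy resolution tree}. For a vertex $v$, this tree is built recursively by expanding every vertex $u$ on the frontier into the set of neighbors of $u$ whose rank is smaller than $\pi(u)$; classical analyses of random-greedy MIS (Coppersmith--Raghavan--Tompa, Blelloch--Fineman--Shun, Fischer--Noever) imply that with high probability the maximum such tree has depth $O(\log n)$ and the sum of tree sizes over all vertices is $\widetilde{O}(m)$ in expectation.

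The algorithm uses $O(1)$ AMPC rounds. In the first round every vertex's rank and adjacency list is written to DHT $\mathcal{D}_0$, keyed by vertex id (a vertex of degree $> n^\delta$ is spread across multiple key-value pairs in the standard way). In the second round each vertex $v$ is routed to a machine that launches an adaptive chain of DHT queries from $v$: at each step the machine fetches the currently unresolved lowest-rank neighbor of the frontier, marks a vertex \emph{in} the MIS if all of its lower-rank neighbors have already been certified out, and marks it \emph{out} as soon as one lower-rank neighbor is certified in. Because a single AMPC round permits a sequence of $O(S) = O(n^\delta) \gg O(\log n)$ adaptive queries against $\mathcal{D}_0$, one round suffices to traverse the resolution tree to full depth for every vertex with high probability; a constant number of follow-up rounds over the DHT handles the $1/\poly(n)$-probability tail of deeper trees via chain-following.

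The main obstacle is bounding the global query complexity by $\widetilde{O}(m)$ so that total space is within budget and no single machine is overloaded beyond $O(n^\delta)$ queries. I would handle this by a Fischer--Noever-style potential / charging argument: every adaptive query executed while descending an edge $(u,w)$ with $\pi(u) < \pi(w)$ is charged to that edge, and the independence of the ranks implies each edge is charged only $O(\log n)$ times in expectation across all exploration chains combined. A Chernoff concentration on these per-edge charges (together with the random assignment of vertices to machines) then lifts the expected bound to a w.h.p.\ bound of $O(n^\delta)$ queries per machine per round. Combined with the constant round count and $O(n^\delta)$ per-machine space, this yields the proposition.
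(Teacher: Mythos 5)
This proposition is not proved in the paper at all; it is imported verbatim as a black box from Behnezhad et~al.~\cite{AMPC}. The paper's own text describes the cited algorithm in one sentence: the Yoshida-style query process is run ``in $O(1/\epsilon)$ steps,'' where ``after step $i$ all vertices that needed to make $O(n^{i\cdot\epsilon})$ recursive calls learned whether they belong to the MIS.'' That one sentence already reveals the gap in your sketch.

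Your central claim is that a single AMPC round suffices because ``$O(S)=O(n^\delta)\gg O(\log n)$ adaptive queries'' cover the resolution tree. This conflates the \emph{depth} of the greedy resolution tree (indeed $O(\log n)$ w.h.p.) with its \emph{size}, and the number of adaptive DHT lookups a vertex must issue to determine its status equals the tree size, not the depth. Yoshida et al.\ bound only the \emph{sum} of tree sizes (equivalently, the expected size from a random start) by $\widetilde{O}(m)$; individual trees can be polynomially large. Consequently a single round of $O(n^\delta)$ queries per vertex cannot resolve every vertex, and the ``constant number of follow-up rounds for the $1/\poly(n)$ tail'' patch does not address the true mechanism: in the actual algorithm of~\cite{AMPC}, round $i$ resolves the vertices with tree size up to roughly $n^{i\delta}$, and the later rounds crucially exploit that resolved vertices are memoized in the DHT so recursions encountering them terminate immediately. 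This iterative pruning is why the round count is $O(1/\delta)$ rather than literally $2$, and why per-machine load can be kept to $O(n^\delta)$; none of that is captured by your depth bound. Relatedly, your attempt to upgrade the expected $\widetilde{O}(m)$ query bound to a w.h.p.\ bound via a per-edge Chernoff does not match the proposition, which is careful to state the total-space bound only in expectation --- the per-edge charges across different start vertices are correlated, and the paper does not claim concentration there.
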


Our algorithm uses the following relation between maximal independent set and maximal matching.
Given an undirected graph $G$, the \emph{line graph} of $G$ is obtained by having a vertex for each edge of $G$ and connecting any two of these vertices that correspond to edges sharing an endpoint in $G$.
It is well-known that the set of vertices in the maximal independent set of the line graph of a graph $G$ forms a maximal matching of $G$.  Therefore, having Proposition~\ref{prop:mis} which gives an efficient \AMPC{} MIS algorithm, one may hope to be able to directly get an efficient algorithm for maximal matching as well. Unfortunately, this is not the case. The main hurdle is that the line graph can be significantly larger than the graph itself and thus the total space of an algorithm constructing the line graph may be super-linear in the number of edges of the original graph. To get around this, we take two different approaches leading to the two bounds obtained in Theorem~\ref{thm:maximalmatching}.

For the first bound, our main idea is to run the MIS algorithm on the line graph of a smaller edge-sampled subgraph of the original graph $G$, commit the edges of the obtained (not maximal) matching to the output (which results in pruning this graph) and then recursively handle the residual graph. We show that $O(\log \log \Delta)$ iterations suffice to find a maximal matching, where $\Delta$ denotes the maximum degree of the input graph, while also ensuring that the total required space remains $\smash{\widetilde{O}(m)}$.

For the second bound, first notice that the line-graph of a graph with $m$ edges and maximum degree $\Delta$ may have up to $m \Delta$ edges. Hence, if we explicitly construct the line-graph and run the MIS algorithm as a black-box we may require $\Omega(m \Delta)$ total space. We note that a simple adaption of the idea used in the algorithm of Proposition~\ref{prop:mis} can be used to reduce the total space to $O(m^{1+\epsilon})$. The idea is to run the algorithm without explicitly constructing the line-graph a-priori, but rather generating it on the fly. To improve this further to $O(m + n^{1+\epsilon})$, we use the structure of line-graphs and use a more efficient query process for matchings as opposed to the one (due to \cite{yoshida}) used in Proposition~\ref{prop:mis} for maximal independent sets.

\subsection{Theorem~\ref{thm:maximalmatching} (Part~\ref{lab:part1})}\label{sec:thmmatchingpart1}

\smparagraph{Notation.} We need a few definitions to formalize the algorithm. Given a graph $G(V, E)$ and a subset $V'$ of $V$, we use $G[V']$ to denote the induced subgraph of $G$ on vertex set $V'$. Furthermore, given a matching $M$ in $G$, we use $V(M)$ to denote the vertices matched in $M$. Also, given a permutation $\pi$ over the edges in $E$, we denote by $\lfmm{G, \pi}$ the greedy maximal matching obtained by iterating over the edges in the order of $\pi$.
Finally, $\Delta(G)$ denotes the maximum vertex degree in $G$.
The algorithm is formalized as Algorithm~\ref{alg:maximalmatching}. It is easy to verify that each iteration of the for loop can be implemented in $O(1)$ rounds of \AMPC{} so long as the space per machine is $n^{\Omega(1)}$. Therefore, the number of rounds is indeed $O(\log \log \Delta)$. In the remainder of this section, we prove the space bounds and the algorithm's correctness.
We need the following well-known degree-reduction property of lexicographically first maximal matching in our analysis. See for instance {\cite[Lemma A.1]{DBLP:journals/corr/abs-1901-03744}}.

\begin{proposition}\label{prop:degreereduction}
	Let $G$ be any undirected graph, $\pi$ be a random permutation on the edges of $G$, and $p$ be a parameter in $(0, 1)$. The maximum degree in graph $G[V \setminus V(\lfmm{G_p, \pi})]$ is w.h.p. $5\log n / p$ where $G_p$ is the subgraph of $G$ including only the $p$ fraction of edges with the lowest rank in $\pi$.
\end{proposition}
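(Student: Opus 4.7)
The plan is to reduce to the following per-vertex bound: for every fixed $v\in V$, the probability that $v$ is unmatched in $M:=\mathsf{GreedyMM}(G_p,\pi)$ \emph{and} has more than $D:=5\log n/p$ unmatched neighbors in $G$ is at most $n^{-5}$. A union bound over the $n$ choices of $v$ then yields the w.h.p. claim. The starting point is a simple structural observation: if $u$ and $v$ are both unmatched in $M$, then $\pi((u,v))>p$, i.e.\ $(u,v)\notin G_p$. Indeed, greedy never unmatches a vertex, so if $(u,v)\in G_p$ then when greedy processes it both endpoints would still be unmatched, forcing $(u,v)\in M$ and contradicting the assumption.

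The main step is a deferred-decisions argument on $v$'s edge ranks. Fix $v$ and condition on the ranks of all edges not incident to $v$; this determines the greedy matching $M_{-v}$ obtained by running greedy on $G_p$ restricted to the subgraph without $v$. Let $U:=\{u\in N_G(v):u\text{ unmatched in }M_{-v}\}$, and for each $u\in N_G(v)\setminus U$ let $\tau(u)<p$ be the time at which $u$ is matched in $M_{-v}$. A short case analysis (using the structural observation) shows that $v$ is unmatched in $M$ iff $\pi((v,u))>p$ for every $u\in U$ and $\pi((v,u))>\tau(u)$ for every $u\in N_G(v)\setminus U$. Moreover, whenever $v$ is unmatched in $M$, no $v$-edge lies in $M$, so greedy on $G_p$ and on $G_p\setminus\{v\text{-edges}\}$ make identical decisions on non-$v$ edges and hence $M=M_{-v}$; in particular $v$'s residual degree is exactly $|U|$.

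Since the ranks of $v$'s edges are i.i.d.\ uniform on $(0,1)$ and independent of $M_{-v}$, the events above are independent Bernoulli events and
\[
\Pr\bigl[v\text{ unmatched}\,\bigm|\,M_{-v}\bigr]=(1-p)^{|U|}\prod_{u\notin U}(1-\tau(u))\le(1-p)^{|U|}.
\]
The event that $v$ contributes residual degree exceeding $D$ requires $v$ to be unmatched and $|U|>D$, and has probability
\[
\mathbb{E}\bigl[\mathbf{1}[|U|>D]\,(1-p)^{|U|}\bigr]\le(1-p)^{D+1}\le e^{-p(D+1)}\le n^{-5},
\]
since $pD=5\log n$. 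Union bounding over $v$ gives that $\Delta(G[V\setminus V(M)])\le D$ with probability at least $1-n^{-4}$, as desired.

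The main technical subtlety is justifying the deferred-decisions identity for $\Pr[v\text{ unmatched}\mid M_{-v}]$: one must verify the equivalence (and in particular the ``only if'' direction) of the case analysis, and that whenever $v$ is unmatched in $M$, the non-$v$ part of $M$ coincides with $M_{-v}$. Once this is in hand, the rest is a straightforward product of independent uniform-rank events combined with the observation that only unmatched neighbors contribute to $|U|$, which is precisely what caps the residual degree.
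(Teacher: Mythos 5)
The paper does not actually prove this proposition: it cites Lemma~A.1 of~\cite{DBLP:journals/corr/abs-1901-03744} for this ``well-known'' degree-reduction property. Your proof is a self-contained replacement, and it is correct. Both of your ingredients are the right ones. The structural observation that unmatched vertices form an independent set in $G_p$ (so every residual edge joining two unmatched vertices must have rank $>p$) is the reason residual degree is related to rank at all, and the deferred-decisions step --- condition on all non-$v$ edge ranks, determine $M_{-v}$ and the times $\tau(u)$, and identify ``$v$ is unmatched'' with the conjunction of the independent events $\{\pi((v,u))>p\}_{u\in U}$ and $\{\pi((v,u))>\tau(u)\}_{u\notin U}$ --- is what makes the computation a clean product of uniform-rank tail probabilities. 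The coupling claim that $v$ unmatched forces $M=M_{-v}$ (hence residual degree of $v$ equals $|U|$) is the place where informal versions of this argument are often sloppy; your forward-induction argument for both directions of the ``iff'' is sound: the ``only if'' direction uses the structural observation to rule out $u\in U$ with $\pi((v,u))\le p$, and the ``if'' direction maintains the invariant that the non-$v$ decisions agree with $M_{-v}$ while $v$ stays unmatched.

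One small technical nit: since $D=5\log n/p$ need not be an integer, the event $|U|>D$ with $|U|$ an integer only yields $|U|\ge\lceil D\rceil\ge D$, so the penultimate display should read
$\mathbb{E}\bigl[\mathbf{1}[|U|>D]\,(1-p)^{|U|}\bigr]\le(1-p)^{\lceil D\rceil}\le(1-p)^{D}\le e^{-pD}=n^{-5}$
(natural logarithm) rather than $(1-p)^{D+1}$. This is purely cosmetic and does not affect the conclusion. Beyond that, the argument is complete, and it is essentially the same deferred-decisions route the cited lemma follows, so you have reproduced the intended proof rather than found a genuinely different one.
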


Using the above proposition, we obtain the following.

\begin{restatable}{lemma}{degreereduction}\label{lem:degreereduction}
	For any $i$, the maximum degree in graph $G_i$ is at most $5\Delta^{0.5^{i-1}}\log n$ w.h.p.
\end{restatable}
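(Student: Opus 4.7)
The plan is to prove the bound by induction on $i$, with Proposition~\ref{prop:degreereduction} driving the inductive step. The base case $i = 1$ is trivial: $G_1 = G$ has maximum degree $\Delta$, and $\Delta \le 5\Delta\log n = 5\Delta^{0.5^0}\log n$.

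For the inductive step, I would examine iteration $i-1$ of Algorithm~\ref{alg:maximalmatching} and split into two cases according to its branch. If $\Delta(G_{i-1}) \le 10\log n$, the algorithm sets $H_{i-1} = G_{i-1}$, so $M_{i-1} = \lfmm{G_{i-1}, \pi}$ is a maximal matching of all of $G_{i-1}$. By maximality, no edge of $G_{i-1}$ has both endpoints outside $V(M_{i-1})$, so $G_i = G_{i-1}[V \setminus V(M_{i-1})]$ is edgeless and the bound holds trivially. Otherwise, $H_{i-1}$ is exactly the subgraph of $G_{i-1}$ consisting of edges whose $\pi$-rank lies in $[0, p]$ for $p = \Delta^{-0.5^{i-1}}$, which matches the role of $(G_{i-1})_p$ in Proposition~\ref{prop:degreereduction}. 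Applying the proposition with this $p$ yields $\Delta(G_i) = \Delta(G_{i-1}[V \setminus V(\lfmm{H_{i-1}, \pi})]) \le 5\log n / p = 5\Delta^{0.5^{i-1}}\log n$ with high probability. A union bound over the $k = O(\log\log\Delta)$ iterations preserves the w.h.p.\ guarantee across all $i$.

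The main technical subtlety is that the same random permutation $\pi$ is reused across iterations, whereas Proposition~\ref{prop:degreereduction} is phrased for a single fresh random permutation on the edges of its input graph. To resolve this, I would observe that greedy maximal matching depends only on the relative ordering under $\pi$, and that any edge $e \in G_{i-1}$ must satisfy $\pi(e) > p_{i-2} = p^2$: otherwise $e$ would have belonged to some earlier $H_j$ and either been added to $M_j$ or blocked because an endpoint was already matched, either way ejecting $e$ from $G_{i-1}$. Hence, conditioned on the execution history through iteration $i-2$, the restriction of $\pi$ to $E(G_{i-1})$ is uniformly distributed on an interval of measure at least $1 - p^2 \ge 1/2$, so the effective sampling probability that produces $H_{i-1}$ within $G_{i-1}$ is $\Theta(p)$. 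The guarantee of Proposition~\ref{prop:degreereduction} therefore transfers to this setting, with any constant-factor slack absorbed into the leading constant.
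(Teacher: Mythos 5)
Your proof takes a genuinely different route from the paper's. The paper's proof is a one-shot argument with no induction at all: it observes that $G_i = G[V \setminus V(M_1 \cup \ldots \cup M_{i-1})]$ and, crucially, that $M_1 \cup \ldots \cup M_{i-1}$ is exactly the greedy maximal matching $\lfmm{G_p, \pi}$ of the \emph{original} graph $G$ with $p = \Delta^{-0.5^{i-1}}$ (because the LFMM of a rank-prefix composes: running greedy on ranks $\le p_1$, discarding the matched vertices, and then running greedy on ranks in $(p_1, p_2]$ is the same as running greedy on all ranks $\le p_2$). Proposition~\ref{prop:degreereduction} is then applied once, directly to $G$, and the bound falls out with no conditioning issues. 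You instead induct on $i$ and apply the proposition to $G_{i-1}$, which forces you to confront the re-used permutation; your resolution — revealing which edges have rank $\le p^2$ together with those ranks determines $G_{i-1}$, after which the remaining ranks are i.i.d.\ uniform on $(p^2, 1)$ — is correct. However, this route does cost a constant: the effective sampling probability for $H_{i-1}$ inside $G_{i-1}$ is $(p - p^2)/(1 - p^2) = p/(1+p)$ rather than $p$, so the proposition gives a bound of $5(1+p)\log n/p \le 10\Delta^{0.5^{i-1}}\log n$ rather than $5\Delta^{0.5^{i-1}}\log n$. That is not the lemma as stated, and the doubled constant would propagate to the next lemma's final bound ($\le 20\log n$ instead of $\le 10\log n$), requiring the threshold on Line~\ref{line:if} of Algorithm~\ref{alg:maximalmatching} to be raised. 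The fix is cosmetic, but the paper's direct decomposition avoids it entirely, which is worth noticing as a cleaner way to apply Proposition~\ref{prop:degreereduction} here. One small wording issue: an edge $e$ with $\pi(e) \le p^2$ need not have belonged to any earlier $H_j$ (it may already have been removed before the relevant threshold was reached); the correct, cleaner statement is simply that if such an $e$ survived to $G_{i-1}$ it would lie in $H_{i-2}$ and therefore be destroyed by the maximality of $M_{i-2}$.
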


\begin{proof}
	One can verify from the algorithm's description that graph $G_i$ is indeed graph $G[V \setminus V(M_1 \cup \ldots \cup M_{i-1})]$ and can also verify that $M_1 \cup \ldots \cup M_{i-1}$ is the LFMM of the edges $e$ in $G$ with $\pi(e) \leq \Delta^{-0.5^{i-1}}$. Therefore by Proposition~\ref{prop:degreereduction}, the maximum degree in graph $G_i$ is w.h.p. bounded by $5\log n / \Delta^{-0.5^{i-1}} = 5\Delta^{0.5^{i-1}}\log n$.
\end{proof}

\begin{restatable}{lemma}{matchingcorrectness}
Algorithm~\ref{alg:maximalmatching} is correct and uses $\tilde{O}(m)$ space.
\end{restatable}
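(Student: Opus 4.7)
The plan is to prove the two claims separately. Correctness amounts to showing that $M_1 \cup \ldots \cup M_k$ is exactly the random greedy maximal matching of $G$ with respect to $\pi$, and the space bound reduces to estimating the size of the line graph of each $H_i$ so that Proposition~\ref{prop:mis} can be invoked iteration-by-iteration.

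For correctness, I would argue by induction that, after iteration $i$, the union $M_1 \cup \ldots \cup M_i$ equals the set of edges of $\text{GMM}(G,\pi)$ whose rank is at most $\Delta^{-0.5^i}$. The key auxiliary fact is that every edge of $H_i$ has at least one endpoint in $V(M_i)$: running the greedy procedure on $H_i$ in rank order can never leave an edge with both endpoints still unmatched, since otherwise that edge would itself have been added. Consequently $E(H_i)\cap E(G_{i+1})=\emptyset$, and in iteration $i{+}1$ the graph $H_{i+1}$ contains exactly the edges of $G$ with rank in $(\Delta^{-0.5^i},\Delta^{-0.5^{i+1}}]$ both of whose endpoints remain unmatched. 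Because greedy maximal matching is "prefix-compatible"—running greedy up to threshold $t$ and then continuing from the surviving vertices yields the same matching as running greedy on all of $G$—this gives the inductive step. For termination, Lemma~\ref{lem:degreereduction} gives $\Delta(G_k)\le 5\Delta^{0.5^{k-1}}\log n\le 10\log n$ for $k=\lceil\log_2\log_2\Delta\rceil+1$, so the \textbf{if} on line~\ref{line:if} fails at iteration $k$, we take $H_k=G_k$, and $M_k$ becomes a maximal matching of the residual graph; combined with the previous iterations, this yields a maximal matching of $G$.

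For the space bound, I would analyze a single iteration and then sum over the $O(\log\log n)$ iterations. By Proposition~\ref{prop:mis}, MIS on the line graph of $H_i$ uses $\widetilde{O}(|E(\text{line}(H_i))|)$ total space, which is $O(|E(H_i)|\cdot \Delta(H_i))$. When the sampling branch is taken, each edge of $G_i$ is kept independently with probability $\Delta^{-0.5^i}$, so $|E(H_i)|=O(m\Delta^{-0.5^i})$ w.h.p. Applying Chernoff to each vertex's degree, combined with $\Delta(G_i)\le 5\Delta^{0.5^{i-1}}\log n$ from Lemma~\ref{lem:degreereduction}, gives $\Delta(H_i)=\widetilde{O}(\Delta^{0.5^{i-1}}\cdot\Delta^{-0.5^i})=\widetilde{O}(\Delta^{0.5^i})$, since $0.5^{i-1}-0.5^i=0.5^i$. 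Multiplying, the telescoping cancellation yields line-graph size $\widetilde{O}(m)$ per iteration. When the \textbf{else} branch is taken, $\Delta(G_i)\le 10\log n$, so the line graph has $\widetilde{O}(m)$ edges directly. Summing across $O(\log\log n)$ iterations keeps the bound at $\widetilde{O}(m)$.

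The main obstacle I anticipate is making the iterative correctness argument rigorous without getting tangled in the double role of the rank sequence: the ranks are used both to sample the edges of $H_i$ and to order greedy within $H_i$, and one must carefully verify that the prefix-compatibility step survives the vertex deletion that produces $G_{i+1}$ from $G_i$. The auxiliary fact that $H_i$ leaves no edges inside $G_{i+1}$ is what disentangles these two roles and makes the induction go through cleanly; once it is established, the rest is bookkeeping with Chernoff bounds and the geometric-in-$0.5^i$ parameters.
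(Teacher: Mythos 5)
Your proof is correct and follows essentially the same approach as the paper's: use vertex-disjointness of the $M_i$ for validity, invoke the degree-reduction lemma (Lemma~\ref{lem:degreereduction}) to get $\Delta(G_{k'}) \leq 10\log n$ so that the \textbf{if} on line~\ref{line:if} fails at iteration $k=k'+1$, and conclude that $M_k$ is a maximal matching of the residual graph. Where you go beyond the paper's terse proof is in two useful places: you make explicit the prefix-compatibility argument (that $M_1\cup\cdots\cup M_i$ equals the random greedy matching of $G$ restricted to edges of rank at most $\Delta^{-0.5^i}$), which the paper relegates to ``one can verify'' inside the proof of Lemma~\ref{lem:degreereduction}; and you actually prove the $\widetilde{O}(m)$ space bound via the telescoping $|E(H_i)|\cdot\Delta(H_i) = \widetilde{O}(m\Delta^{-0.5^i})\cdot\widetilde{O}(\Delta^{0.5^i}) = \widetilde{O}(m)$ for the line-graph size, whereas the paper's proof of this lemma addresses only correctness and leaves the space bound implicit. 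Both additions are sound, and the space argument in particular is a genuine fill-in rather than a restatement.
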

\begin{proof}
	Since $G_{i+1} = G_i[V \setminus V(M_i)]$, all the matchings $M_1, \ldots, M_k$ will be vertex disjoint and thus their union is a valid matching of the graph. It remains to show that this is a maximal matching. Recall from Lemma~\ref{lem:degreereduction} that the maximum degree in graph $G_i$ is $5\Delta^{0.5^i}\log n$. Therefore, for $k' = \lceil \log_2\log_2 \Delta \rceil$, we can bound the maximum degree in $G_{k'}$ by
	$$
		5\Delta^{0.5^{\lceil\log_2\log_2 \Delta\rceil}}\log n \leq 5\Delta^{\frac{1}{\log_2 \Delta}}\log n = 10\log n.
	$$
	This means that in the next iteration $k = k'+1$, the condition of
  Line~\ref{line:if} does not hold and we have $H_k = G_k$ and thus
  the maximal matching $M_k$ of $H_k$ is also a maximal matching of
  $G_k$. Thus, $M_1 \cup \ldots \cup M_k$ is a maximal matching of $G$.
\end{proof}

\subsection{Theorem~\ref{thm:maximalmatching} (Part~\ref{lab:part2})}\label{sec:thmmatchingpart2}

Suppose that we are given a random permutation $\pi$ over the edges, and are tasked to determine whether an edge $e$ is matched in the corresponding random greedy matching $M$. The original query process of \cite{yoshida} is as follows for matchings: We iteratively pick the incident edge $f$ to $e$ with the lowest rank in $\pi$. If $f$ happens to have a higher rank than $e$, that is, if no edge of lower rank than $e$ is incident to $e$, then $e$ must be in the matching. Otherwise, we recursively query $f$. If $f$ happens to be part of the matching, $e$ is not and we can terminate the process; if $f$ is not in the matching, we are unsure about the status of $e$ and proceed to the next incident edge to $e$.

A truncated variant of this process proposed in \cite{AMPC} follows
the same recursive idea, but truncates it if the total number of recursive queries exceeds $n^\epsilon$. A similar analysis as done in \cite{AMPC} for MIS which builds on that of \cite{yoshida} shows that if we run this truncated query process on all the edges, the number of edges whose query process is truncated is small enough that significant progress on the whole graph is made. Therefore, by repeating this process $O(1)$ times, we get a maximal matching. Particularly it is shown in \cite{AMPC} that:

\begin{lemma}\label{lem:misampc}
	Suppose that we run the $n^\epsilon$-truncated query process above, and remove all edges that are known to join the matching along with their incident edges. Then $O(1/\epsilon)$ applications of this process makes the graph empty.
\end{lemma}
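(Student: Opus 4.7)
The plan is to mimic the analysis of Proposition~\ref{prop:mis} for MIS, adapted to the matching-specific query process. The core observation is that an edge survives a single application of the truncated process \emph{only if} its (non-truncated) query tree exceeds $n^\epsilon$ recursive calls. Let $T(e)$ denote that exact query size.

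First I would verify this survival characterization. If $T(e) \leq n^\epsilon$ the truncated query returns $e$'s true status. If it returns ``$e \in M$'', $e$ is removed as a matching edge. If it returns ``$e \notin M$'', then during the query we have certified some incident edge $f$ with $\pi(f) < \pi(e)$ to lie in $M$; since $f$'s query tree is a subtree of $e$'s we have $T(f) \leq T(e) \leq n^\epsilon$, so $f$'s own truncated query also succeeds, $f$ is identified as matched, and $e$ is removed as an incident edge of $f$. Hence the set of surviving edges after one round equals $\{e : T(e) > n^\epsilon\}$.

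Next I would bound this set per round. Adapting the Yoshida-style analysis of \cite{yoshida} to the matching query process yields $\mathbb{E}_\pi\bigl[\sum_e T(e)\bigr] = \widetilde{O}(m)$, which is the matching analogue of the bound invoked in Proposition~\ref{prop:mis}. Using the same high-probability concentration argument as in \cite{AMPC} (e.g.\ a Doob-martingale argument that exposes the ranks of $\pi$ one edge at a time), this upgrades to $\sum_e T(e) = \widetilde{O}(m)$ with high probability. Markov's inequality then gives at most $\widetilde{O}(m/n^\epsilon) \leq m/n^{\epsilon/2}$ surviving edges per round w.h.p., for $n$ sufficiently large.

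Finally, the argument iterates cleanly: the restriction of a uniform random permutation to any remaining edge set $E'$ is still uniformly random, and each $T_{E'}(e)$ is bounded by $T(e)$ since removing edges only prunes the query tree, so the per-round shrinkage reapplies to the residual graph. Hence after $r$ rounds at most $m \cdot n^{-r\epsilon/2} \leq n^{2 - r\epsilon/2}$ edges remain, which drops below $1$ for $r = \lceil 5/\epsilon\rceil = O(1/\epsilon)$, making the graph empty. The main obstacle is the concentration step, since the variables $T(e)$ are tightly coupled through the shared permutation $\pi$; the cleanest route is to reuse the concentration machinery established for the MIS variant in \cite{AMPC}, which transfers because the matching process here is structurally the MIS process on the implicitly defined line graph with a more space-efficient query strategy.
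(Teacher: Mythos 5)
The paper does not actually prove this lemma; it is stated as a result imported from~\cite{AMPC}, so there is no in-paper argument to compare against. Evaluated on its own terms, your proposal has a concrete gap at the claim $\mathbb{E}_\pi\bigl[\sum_e T(e)\bigr] = \widetilde{O}(m)$, which is what everything downstream relies on. The edge query process is exactly the Yoshida MIS oracle run on the line graph $L(G)$, and the bound from~\cite{yoshida} that Proposition~\ref{prop:mis} rests on gives total expected query cost $O(|V(H)| + |E(H)|)$ for MIS on a graph $H$. Taking $H = L(G)$, which has $m$ vertices and $\Theta(\sum_v \deg(v)^2)$ edges, this yields $O(\sum_v \deg(v)^2)$, which can be as large as $\Theta(m\Delta)$ and is not $\widetilde{O}(m)$. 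With only that weaker bound, Markov gives $O(m\Delta/n^\epsilon)$ survivors, which yields no per-round shrinkage once $\Delta \geq n^\epsilon$. The paper's own remark that the truncated edge process ``can be $\Theta(mn^\epsilon)$'' in total communication is consistent with this: if $\sum_e T(e)$ were $\widetilde{O}(m)$, truncation would already give $\widetilde{O}(m)$ total communication, which the authors do not claim.

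There is also a secondary gap in the iteration step. The residual edge set $E'$ after one round is a function of $\pi$, so one cannot simply assert that the restriction of $\pi$ to $E'$ is an unconditionally uniform permutation and re-invoke the per-round bound; and the inequality $T_{E'}(e) \leq T(e)$ only says the round-$2$ survivors are a subset of the round-$1$ survivors, which gives no quantitative decrease on its own. The intended argument from~\cite{AMPC}, which the authors sketch for MIS in Section~\ref{sec:miscasestudy}, has a different shape: it proceeds by induction, showing that after $i$ applications every node whose \emph{original} query tree has size at most $n^{i\epsilon}$ is resolved, and then combines this with a bound on the maximum query-tree size. Your proposal replaces that induction with a per-round expectation-plus-Markov argument on a $\pi$-dependent residual graph, which is where the argument would need to be repaired even if the expectation bound were fixed.
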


The problem with this approach is that since we conduct
$n^\epsilon$ queries from each edge, the total number of queries (and
equivalently the total communication/space) can be $\Theta(m
n^\epsilon)$.

To reduce the total space to $O(m + n^{1+\epsilon})$, instead of edges, we start the query processes from the {\em vertices} and truncate them. The query process of a vertex $v$, basically iterates over the incident edges of $v$ in the increasing order of their ranks and runs the edge query process. Once an incident edge in the matching is found, the vertex query process terminates.

The truncated variant of the vertex query process above is also
natural: Once the total number of recursive queries for the vertex
exceeds $n^\epsilon$, we stop and mark the vertex as unsettled. It is
clear that since there are $n$ vertices and each queries only
$n^\epsilon$ portion of the graph, the total space needed is only $O(m
+ n^{1+\epsilon})$. In the full version, we show that this
vertex-truncated process also makes enough progress on the graph that
after $O(1)$ rounds, we find a maximal matching.

\begin{lemma}
	Suppose that we run the $n^\epsilon$-truncated query process above from the vertices. Then we remove all vertices known to be matched along with their incident edges. Then $O(1/\epsilon)$ applications of this process makes the graph empty.
\end{lemma}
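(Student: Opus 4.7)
Plan: The proof adapts the argument behind Lemma~\ref{lem:misampc} to the vertex-truncation setting. The two main ingredients are (a) a Yoshida-style bound on the total query cost summed over all vertices, and (b) the observation that after one round the residual graph is essentially the induced subgraph on the vertices whose query was truncated.

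For (a), let $Q_v$ denote the total number of recursive edge queries issued during the (untruncated) vertex query process at $v$. The process at $v$ invokes the edge-level oracle at most once per incident edge (stopping once a matched edge is found), and each edge is invoked by at most its two endpoints, so $\sum_{v} Q_v \leq 2 \sum_{e} Q_e$, where $Q_e$ is the Yoshida edge oracle cost. The standard bound for random greedy matching gives $\E[\sum_e Q_e] = O(m)$, and hence $\E[\sum_v Q_v] = O(m)$. Markov's inequality together with a concentration argument analogous to the one in \cite{AMPC} yields that w.h.p.\ at most $O(m/n^\epsilon)$ vertices are \emph{unsettled}, i.e., have $Q_v > n^\epsilon$.

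For (b), consider any vertex $v$ whose query completes. Either $v$ is found matched, in which case $v$ and its incident edges are removed; or $v$ is found unmatched, in which case every edge query from $v$ returned ``not in matching,'' meaning each neighbor of $v$ has a lower-ranked matched edge to some third vertex. Hence every neighbor of $v$ is matched and has already been removed along with its incident edges, leaving $v$ isolated in the residual graph. So the non-isolated part of the residual graph lives inside the set $V_U$ of $O(m/n^\epsilon)$ unsettled vertices.

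To finish, I would iterate. In round $i$ with $m_i$ edges, the above implies at most $O(m_i/n^\epsilon)$ vertices carry edges into the next round; a refined Yoshida counting of $\E[\sum_{v \in V_U} \deg(v)]$ (using that high-degree vertices tend to settle quickly, since they find a matched incident edge fast) shows the edge count itself drops by a factor of $n^{\Omega(\epsilon)}$ per round. Starting from $m_0 \leq n^2$ and shrinking $m_i$ by this polynomial factor each round drives the graph to empty after $O(1/\epsilon)$ rounds, as desired.

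The main technical obstacle is obtaining a true edge-count shrinkage rather than merely a vertex-count shrinkage: the naive bound $|E(G[V_U])| \leq |V_U| \cdot \Delta$ is too weak in the regime $\Delta = \Omega(n^\epsilon)$. Handling this cleanly requires either a Yoshida-style bound on the expected degree sum over unsettled vertices, or a parallel edge-truncation bookkeeping that mirrors the proof of Lemma~\ref{lem:misampc} more directly and feeds back into the vertex-truncation count at the start of each round.
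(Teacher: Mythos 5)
There is a genuine gap here, and your own closing paragraph correctly identifies it: after bounding the number of \emph{unsettled vertices} by $O(m/n^\epsilon)$ via Yoshida's cost bound, you cannot conclude that the \emph{residual edge count} shrinks by a polynomial factor, and the naive bound $|E(G[V_U])| \leq |V_U|\cdot\Delta$ is indeed too weak. Your step (b) also relies on a shaky reading of "known to be matched": when $v$'s query completes and finds $v$ unmatched, each neighbor $u$ of $v$ is matched, but whether $u$ is \emph{removed} in this round depends on a query actually certifying $u\in M$; so "$v$ becomes isolated" does not immediately follow without additional bookkeeping.

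The paper avoids both issues by a \emph{domination} argument that reduces directly to Lemma~\ref{lem:misampc} rather than re-deriving a shrinkage bound. The key observation is: if an edge $e$ has $q(e) \leq n^\epsilon$ in the edge-truncated process, then either $e\in M$, or the lowest-ranked matched edge $f$ adjacent to $e$ blocks it; letting $v$ be the shared endpoint of $e$ and $f$ (or any endpoint of $e$ if $e\in M$), one has $q(v) \leq q(e)$. This holds because the edge process at $e$ scans the incident edges of $e$ at \emph{both} endpoints in increasing rank until hitting the first matched one, which in particular subsumes the work the vertex process at $v$ does scanning only the edges at $v$ up to that same matched edge. Hence the vertex-truncated process at $v$ also completes, certifies $v\in M$, and removes $e$. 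So one round of the vertex-truncated process removes a superset of the edges removed by one round of the edge-truncated process, and Lemma~\ref{lem:misampc} transfers. Your suggestion of "a parallel edge-truncation bookkeeping that mirrors the proof of Lemma~\ref{lem:misampc}" is pointing at exactly this, but the domination inequality $q(v)\le q(e)$ is the missing ingredient that makes it work without any new shrinkage analysis.
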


\begin{proof}[Proof sketch]
	Let us denote the query size of the edge process on some edge $e$ by $q(e)$ and the query size of the vertex process on some vertex $v$ by $q(v)$.

	Consider an edge $e$. If its edge-query process terminates within $n^\epsilon$, then either $e$ joins the matching or there is at least one of its neighbors $f$ that joins it. In the latter case let $v = f \cap e$ and for the former let $v$ be any endpoint of $e$. In either case, one can confirm that
	$
		q(v) \leq q(e).
	$
	Since in either case, vertex $v$ joins the matching, even in the vertex-query process, we detect that $v \in M$ and remove edge $e$. Therefore, any edge $e$ that is removed in one iteration of the edge-query process, is also removed in one iteration of the vertex-query process. Thus Lemma~\ref{lem:misampc} can be used to finish the proof.
\end{proof}

\section{Empirical Evaluation}\label{sec:experiments}
In this section we provide an empirical evaluation of the \AMPC{}
model using hundreds of threads on production machines in a large
data center.
We summarize the main experimental results described in this section
below:

\begin{itemize}[topsep=0pt,itemsep=0pt,parsep=0pt,leftmargin=15pt]
  \item \revised{A case-study of \AMPC{} and \MPC{} implementations
  of MIS (Section~\ref{sec:miscasestudy}), including a discussion of
  different optimizations, round-complexity, communication and a
  detailed evaluation explaining their performance.}

  \item \revised{\AMPC{} implementations of Maximal Matching
  (Section~\ref{sec:mmexps}) and Minimum Spanning Forest
  (Section~\ref{sec:msfexps}), detailed experimental comparisons
  explaining their performance, and a comparison to the state-of-the-art
  \MPC{} algorithms that we implemented in this paper.}

  \item \revised{Experimental evaluation of the 1-vs-2 Cycle problem,
  comparing \AMPC{} vs \MPC{} implementations on synthetic cycle
  graphs (Section~\ref{sec:twocycleexps}).}

  \item \revised{A discussion studying potential scalability
  bottlenecks in our algorithms, as well as the potential for applying
  our approach to a broader set of problems
  (Section~\ref{sec:discussion}).}

\end{itemize}

\subsection{Environment \& Implementation}
We first provide context and rationale for the implementations and
environment utilized in our evaluation.

\shep{
\myparagraph{Distributed Environment}
In this paper we focus on a \emph{fault-tolerant distributed
environment} which reflects the challenges of running large-scale
graph computations in shared production data center.
The setting used in our experiments has recently been
described in Tirmazi~\etal{}~\cite{tirmazi2020borg}.
In this setting, batch jobs are typically run at low priorities
(i.e., using resources that are currently not used by high priority jobs),
which makes them susceptible to preemptions.
While running batch jobs at higher priorities is possible, it is
also much more costly~\cite{aws-pricing, gcp-pricing} and thus often not done in practice.
This is why systems like MapReduce, Hadoop or Flume-C++~\cite{flumecpp}
have strong fault tolerance properties and write the results of each computation
round to durable storage.

The focus of our empirical evaluation is to study the performance of \AMPC{} algorithms, and compare them with state-of-the-art \MPC{} algorithms using a distributed computation framework with good fault-tolerance properties.

}

\revised{
\myparagraph{Implementing \AMPC{} and \MPC{} Algorithms}
We implement both the \AMPC{} and \MPC{} algorithms studied in this
paper in C++ using Flume-C++~\cite{flumecpp,flumejava}.
\shep{Flume-C++ is a highly-optimized fault-tolerant parallel data
processing framework, whose API is similar to the open source system, Apache Beam.}
A Flume-C++ implementation consists of \emph{stages}, that consume
inputs generated by previous stages, and emit outputs that can be
consumed by later stages. Flume automatically handles performing
fusion, and inter-stage optimizations.
\shep{Compared to other popular data processing frameworks, such as
Spark~\cite{spark} and Timely Dataflow~\cite{murray2013naiad}, the
main difference between Flume-C++ and these systems is that the
Flume-C++ runtime ensures that workers write the outputs of stages in
the computation to durable storage for the duration of the job. This
logging enables fault-tolerance, which is critical in the shared data
center we run our jobs in.}

To implement the \AMPC{} algorithms, we use a custom distributed
key-value store optimized for lookups and high throughput.  The
implementation used in our experiments takes advantage of hardware
support for Remote Direct Memory Access (RDMA), which is a widely
available technology~\cite{infiniband, roce, omnipath}.  \shep{We also
experiment with an alternate implementation of the key-value store
only uses TCP/IP, but unless explicitly specified, our experiments
use the RDMA-based implementation.}
The Flume-C++ implementations in this paper amount to a few hundred
lines of C++ each. We emphasize that the \emph{only difference}
between our \AMPC{} and \MPC{} implementations is that the \AMPC{}
codes can query a key-value store within a Flume stage.
}

\myparagraph{Machine Configuration}
Our experiments are performed on machines from a production
datacenter. Each machine contains two 2-way hyper-threaded 2.3GHz Intel 18-core CPUs, for a total of 72 hyper-threads per machine.
Each machine is equipped with 262GB of RAM,
and a 20Gbps NIC.
We use a maximum of 100 machines to solve all problems, although we
note that in practice we use far fewer than $72 \times 100$
hyper-threads (we request 400 hyper-threads but the experiments may
use slightly more
depending on cluster availability). Our experiments
are run on a shared cluster, and may thus compete for resources with
other jobs, and run virtually alongside other jobs on the same
machine. \shep{To mitigate these effects, we ran all of our
experiments using relatively high priorities.
Thus, although we run our jobs in a fault-tolerant environment, our
jobs typically do not experience failures.
As a further precaution, we also run our experiments 3 times and
report the median running time. We note that the difference in running
times across different trials was not significant (within 10\%).}

\setlength{\tabcolsep}{1.5pt}
\begin{table}[!t]
\small
  \centering
\begin{tabular}{l|r|r|r|r|r}
\toprule
{\bf Dataset}                    & {\bf $n$} & {\bf $m$} & {\bf Diam.} & {\bf Num. CC} & {\bf Largest CC} \\
        \midrule
{$2 \times k$}             & $2\times k$    & $2 \times k$         & $k$                       & 2            & $k$                       \\ 
{OK}             & 3.07M     & 234.4M                     & 9                         & 1            & 3.1M                       \\ 
	{\revised{TW}}           & \revised{41.6M}     & \revised{2.4B}                       & \revised{23*}                       & \revised{2}            &\revised{41.6M}                       \\ 
{FS}        & 65.6M     & 3.6B                       & 32                        & 1            &65.6M                       \\ 
{CW}           & 0.978B    & 74.7B                      & 132*                      & 23,794,336   &0.950M                      \\ 
{HL}     & 3.56B     & 225.8B                     & 331*                      & 144,628,744  &3.35B                       \\ 
\end{tabular}
\caption{Graph inputs, including vertices and edges, diameter, the
number of components, and the size of the largest component. We mark
diameter values where we are unable to calculate the exact diameter
with * and report a lower bound on the diameter observed in prior
experimental work \protect\cite{dhulipala2018theoretically}.}
\label{table:sizes}`
\end{table}

\subsection{Graph Data}
We evaluate our algorithms on a representative set of real-world
graphs of varying sizes.
Several of our datasets are from the SNAP network suite:
\defn{com-Orkut (OK)} is an undirected graph of the Orkut social
network, and \defn{Friendster (FS)} is an undirected graph of the
Friendster social network.  \revised{\defn{Twitter (TW)} is a graph of
the Twitter network, where edges represent the follower
relationship~\cite{kwak2010twitter}.} Lastly, we use the \defn{ClueWeb
(CW)} graph, which is a Web graph from the Lemur project at CMU and
Google, obtained from the LAW collection of
datasets~\cite{boldi2004webgraph}. \defn{Hyperlink2012 (HL)} is a
directed hyperlink graph obtained from the WebDataCommons dataset
where nodes represent web pages~\cite{meusel15hyperlink}.
\revised{The Twitter, ClueWeb, and Hyperlink2012 graphs are originally
directed graphs, so we symmetrize them before applying our
algorithms.} We test our 2-cycle algorithms on a family of massive
high-diameter graphs consisting of two cycles on $k$ vertices each
($2\times k$ graphs) \revised{Finally, we test our minimum spanning
forest algorithm on the same graph inputs where the weight of an edge
$(u,v)$ is proportional to $\deg(u) + \deg(v)$.}

\revised{
\subsection{Case Study: Maximal Independent Set}\label{sec:miscasestudy}

To provide insights into how our implementations are programmed, in
this subsection we give a detailed description of our Flume-C++
\AMPC{} and \MPC{} implementations of Maximal Independent Set (MIS)
algorithms.

\definecolor{light-gray}{gray}{0.95}
\definecolor{dark-gray}{gray}{0.25}

\begin{figure*}[!ht]
\centering

\begin{minted}[fontsize=\small,linenos=true,numbers=left,autogobble,xleftmargin=0.2\textwidth,xrightmargin=0.2\textwidth,frame=single,escapeinside=||]{cpp}
// Uses hashing to determine a priority for each node.
uint64 NodePriority(NodeId node_id);

class IsInMIS : DoFn<KV<NodeId, Node>, NodeId> {
  IsInMIS(const string& kv_store_id) {
    kv_store = KVStore(kv_store_id); }
  void Do(const KV<NodeId, Node>& node,
          const EmitFn<NodeId>& emit) {
    if (InMIS(node.key, node.value)) { emit(node.key); }}
  // Check if any directed neighbors are in MIS. If
  // none are in, this node is in.
  bool InMIS(NodeId node_id, const Node& node) {
    for (NodeId neighbor_id : node.neighbors()) {
      // Fetch the neighbor's neighbors.
      Node neighbor = Lookup(neighbor_id);
      // Recursively query.
      if (InMIS(neighbor_id, neighbor)) return false;
    }
    return true;
  }
  // Synchronously query the key-value store for the
  // given ID, and return deserialized node.
  Node Lookup(NodeId id);
  KVStore* kv_store;
};

PCollection<NodeId> MIS(
	const PCollection<KV<NodeId, Node>>& graph) { |\label{line:graphinput}|
  // (1) Sort vertex neighborhoods based on priority.
  // Only preserve edges to higher-priority neighbors.
  PCollection<KV<NodeId, Node>> directed_graph =
      DirectEdgesUsingPriority(graph, NodePriority);
  // (2) Write directed graph to the key-value store.
  auto kv_store_id =
      WriteToKVStore(directed_graph);
  // (3) Apply the IsInMIS DoFn over nodes.
  PCollection<NodeId> ind_set = directed_graph.ParDo(
      IsInMIS(kv_store_id));
  return ind_set;
}
\end{minted}
\vspace{-0.05in}
\caption{\revised{Pseudocode for our \AMPC{} MIS algorithm.}}
\label{fig:ampc_mis}
\end{figure*}

\myparagraph{\AMPC{} Algorithm}
Figure~\ref{fig:ampc_mis} provides high-level pseudocode for the
\AMPC{} MIS algorithm studied in this paper. In our pseudocodes,
we use several concepts
from~\cite{flumejava} (also used in Apache Beam), which we now describe. A
\textbf{\textsc{PCollection}} is a potentially distributed,
multi-element data set.
A \textbf{\textsc{KV<T, S>}} is a key-value pair, whose key has type T
and value has type S.
A \textbf{\textsc{DoFn<T, S>}} is an operation that transforms a
\textsc{PCollection<T>} to a \textsc{PCollection<S>}. The
graph input is a \textsc{PCollection} of \textsc{KV} mapping
\textsc{NodeId}s to
\textsc{Node}s.  A \textbf{\textsc{Node}} is a list of
\textsc{NodeId}s of the neighbors. Thus, every key-value pair in the
graph represents a single vertex.

Our implementation is based on the $O(1)$ round \AMPC{} algorithm
described by Behnezhad~\etal{}~\cite{AMPC}. The algorithm finds the
{\em lexicographically first MIS} over a random ordering of the
vertices.
For convenience, we assume that each vertex is given a random priority
defining its rank in the permutation.

Behnezhad~\etal{}~\cite{AMPC} show that the recursive algorithm of
Yoshida~\etal{}~\cite{yoshida} can be adapted to run in $O(1)$ \AMPC{}
rounds.  The algorithm of Yoshida~\etal{} is based on the following
recursive idea: each vertex belongs to the MIS if and only if none of
its lower priority neighbors belongs to the MIS.  Somewhat
surprisingly, as shown in~\cite{yoshida}, the natural way of turning
this property into a recursive function gives an algorithm that runs
in linear total time ($O(m)$), even if the recursion is run separately
from each vertex of the graph and with no memoization.

The \AMPC{} algorithm of~\cite{AMPC} runs this recursion in
$O(1/\epsilon)$ steps.  After step $i$ all vertices that needed to
make $O(n^{i\cdot \epsilon})$ recursive calls learned whether they
belong to the MIS.  This multi-stage approach turned out not to be
needed in practice, and our \AMPC{} implementation only needs $2$
rounds of computation to find the MIS.  See Figure~\ref{fig:ampc_mis}
for the pseudocode.

\begin{figure*}[!ht]
\centering
\begin{minted}[fontsize=\small,linenos=true,numbers=left,autogobble,xleftmargin=0.2\textwidth,xrightmargin=0.2\textwidth,frame=single,escapeinside=||]{cpp}
// Uses hashing to determine a priority for each node.
uint64 NodePriority(NodeId node_id);

PCollection<NodeId> MIS(
    PCollection<KV<NodeId, Node>> graph) {
  // Repeatedly extract a rootset
  vector<PCollection<NodeId>> independent_set;
  while (graph.numNodes() > 0) {
    // (1) Find all nodes that have priority lower
    // than their all neighbors. Since each node knows
    // its neighbors, and the priorities are computed
    // using hashing, this does not require a shuffle.
    PCollection<KV<<NodeId, Node>> new_set =
      LocalMinima(graph, NodePriority);
    independent_set.push_back(new_set.Keys());
    // (2) Compute node ids of the nodes in new_set
    // and their neighbors (no shuffle).
    PCollection<NodeId> to_remove =
        IdsOfNodesAndNeighbors(new_set);
    // (3) Mark which nodes should be removed. This
    // requires joining graph with node ids in
    // to_remove (1 shuffle).
    PCollection<KV<NodeId, pair<Node, bool>>>
	marked_graph =
	    MarkNodesToRemove(graph, to_remove);
    // (4) Each marked node x emits <x, y> and
    // <y, x> for each neighbor y. This computes all
    // edges to be deleted (no shuffle).
    PCollection<KV<NodeId, NodeId>> edges_to_delete =
	FindDeletedEdges(marked_graph);
    // (5) Update the graph by removing marked nodes
    // and their incident edges. Requires joining the
    // graph with edges_to_remove (1 shuffle).
    graph = RemoveNodesAndEdges(
	marked_graph, edges_to_remove);
  }
  // Flatten vector<PCollection<NodeId>> to
  // a PCollection<NodeId>
  return Flatten(independent_set);
}
\end{minted}
\vspace{-0.05in}
\caption{\revised{Pseudocode for the rootset-based \MPC{} MIS algorithm.}}
\label{fig:mpc_mis}
\end{figure*}

\myparagraph{\MPC{} Algorithm}
Our \MPC{} algorithm is a recent $O(\log n)$-round
\emph{rootset-based} algorithm for computing the
lexicographically-first MIS with respect to a random permutation on
the vertices~\cite{DBLP:conf/spaa/BlellochFS12}.
This algorithm was recently shown to have a $O(\log n)$-round
complexity by Fischer and Noever~\cite{fischer2018mis}.
By specifying the same source of randomness, both the \MPC{} and
\AMPC{} algorithms compute the same MIS.

Conceptually, the rootset-based algorithm begins by drawing a random
number for each vertex in the graph. Then, it proceeds in phases.
In each phase, it finds all vertices whose priority is lower than the
priority of all their neighbors.
All such vertices are added to the MIS, and after that they are removed
from the graph together with their neighbors.
After $O(\log n)$ phases, all vertices are removed from the
graph~\cite{fischer2018mis}.
The pseudocode of our MPC implementation is given in Figure~\ref{fig:mpc_mis}.
We note that although a $O(\sqrt{\log n})$-round \MPC{} algorithm is also known~\cite{DBLP:conf/soda/GhaffariU19}, the complexity of the algorithm makes it likely impractical.
Finally, we experimentally determined that switching to an in-memory
algorithm once the number of edges in the graphs decreases below $5
\times 10^{7}$ achieves a good tradeoff between the cost of processing
the graph on a single machine, and the cost of a new phase.

\shep{
We note that we also considered an \MPC{} implementation of the
\AMPC{} algorithm as a potential baseline, in which each step of
querying the key-value store was mapped to a shuffle.
We observed that this algorithm requires over 1000 shuffles even
for the Orkut and Friendster graphs, and is over 50x slower than
the rootset-based algorithm, and thus we chose to use the
rootset-based algorithm as our \MPC{} baseline.
}

\setlength{\tabcolsep}{3pt}
\begin{table}[!t]
\small
  \centering
\begin{tabular}{l|c|c|c|c|c}
\toprule
{\bf Algorithm}                    & {\bf OK} & {\bf TW} & {\bf FS} & {\bf CW} & {\bf HL} \\ 
        \midrule
{\AMPC{} Maximal Independent Set }  & 1 & 1 & 1 & 1 & 1 \\
{\AMPC{} Maximal Matching }  & 1 & 1 & 1 & 1 & 1 \\
{\AMPC{} Minimum Spanning Forest } & 5 & 5 & 5 & 5 & 5 \\
\midrule
{\MPC{} Maximal Independent Set }  & 8 & 10 & 10 & 12 & 14 \\
{\MPC{} Maximal Matching}   & 8 & 12 & 12 & 14 & 16 \\
{\MPC{} Minimum Spanning Forest}  & 33 & 54 & 57 & 84 & -- \\
\end{tabular}
\caption{\revised{Number of shuffles (costly rounds) used by our \AMPC{} and
\MPC{} implementations for real-world graph datasets.}}
\label{table:shuffles}
\end{table}

\begin{figure}[!t]
\begin{center}
\includegraphics[scale=0.55]{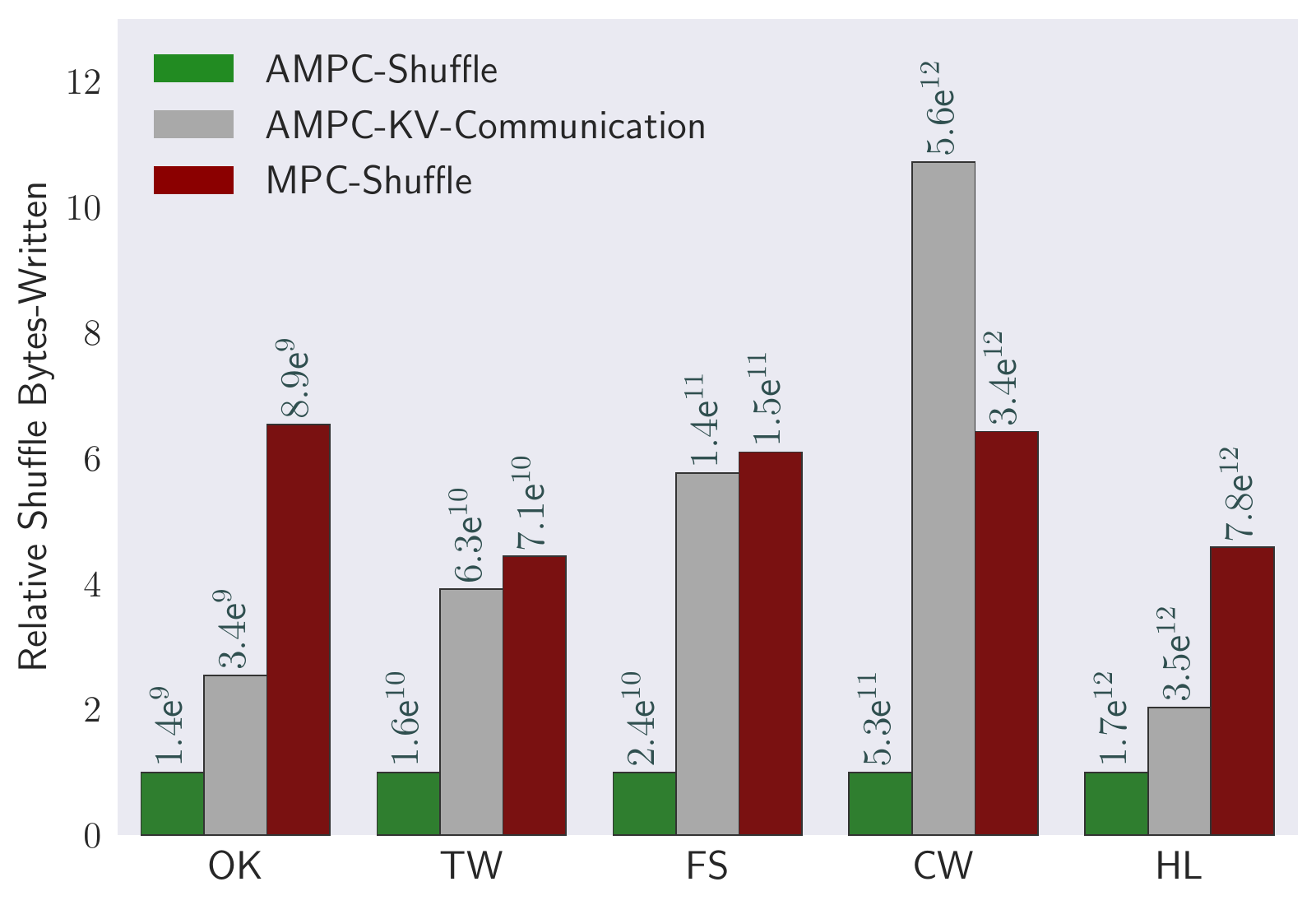}
\caption{\label{fig:mis_communication}
\shep{
Normalized bytes shuffled for the \AMPC{} and \MPC{} MIS
algorithms, and the normalized bytes of communication with the
key-value store for the \AMPC{} MIS algorithm. Each bar is annotated
with the actual number of bytes on top of the bar.
}
}
\end{center}
\end{figure}

\myparagraph{Round-Complexity and Communication}
What is the empirical benefit of reducing the round-complexity? For
one, the implementations of our \AMPC{} algorithms perform fewer \emph{shuffles}. A \defn{shuffle} is a phase which takes a set of emitted key-value pairs
and groups pairs with the same key on the same
machine~\cite{flumejava} and is the only way a
Flume-C++ worker can exchange big
amount of data with other workers.
At a high level, a single Flume-C++ shuffle corresponds to a single round in the \MPC{} and \AMPC{} models.
We point out
that empirically, most of the computation time in the \MPC{}
algorithms studied in this paper is spent on shuffles.  In
Table~\ref{table:shuffles} we report the number of shuffles (or \MPC{}/\AMPC{} rounds) used by our \AMPC{} MIS algorithm and our \MPC{} MIS algorithm. The \AMPC{}
algorithm uses a single shuffle to construct a directed graph (step
(1) in Figure~\ref{fig:ampc_mis}). The \MPC{} algorithm requires two
shuffles per phase of the algorithm, and takes between 8--14 shuffles.

Although the graph size shrinks between each phase, each shuffle still
transmits a significant amount of data.
To illustrate the impact of performing fewer shuffles in the \AMPC{}
algorithm, in Figure~\ref{fig:mis_communication} we plot the total
number bytes written during all shuffles in the algorithm for our
\AMPC{} and \MPC{} algorithm, as well as the total amount of
communication performed by the \AMPC{} algorithm to the key-value
store. In all cases, the \AMPC{} algorithm shuffles significantly
fewer bytes, since the single shuffle it performs writes bytes only
proportional to the input graph size.
Note that the total communication to the key-value store in the
\AMPC{} algorithm is typically less than the total bytes shuffled by
the \MPC{} algorithm, with the exception of the ClueWeb graph.
Despite the fact that the \AMPC{} algorithm sometimes uses more total
bytes of communication, it is always faster as we discuss later, since
the key-value store communication is done over a relatively
high-throughput network.

\myparagraph{Optimizations in the \AMPC{} Algorithm}
Next, we consider two optimizations that can be applied to the \AMPC{}
algorithm to improve its performance.

The first is a \defn{multithreading} optimization that is broadly
applicable to any \AMPC{} algorithm that performs lookups to a
key-value store. Since in our programming model, these lookups are
performed synchronously, we use multiple threads to enable each worker
to concurrently process many instances of a \textsc{DoFn}. This
optimization enables a thread processing a task that is waiting to
receive a result key-value store to be swapped out for another thread.
Since RDMA lookups to the key-value store are in general an order of
magnitude slower than lookups to DRAM, enabling this optimization
should improve the running time of the part of the algorithm that
performs lookups to the key-value store.

The second is a \defn{caching} optimization that we apply to the MIS
and Maximal Matching algorithms in this paper. The idea is essentially
to save the result of a query determining whether a given vertex (or
edge) is preserved in the MIS. Theoretically, since each machine only
performs $O(n^{\epsilon})$ communication, the results of all queries
it recursively answers can be saved on this machine in the model. In
practice, we implement the caching optimization using an array indexed
over the vertices that is shared between all threads operating on a
machine. In our MIS algorithm, this table stores a three-valued state
reporting whether the status of this vertex is either \emph{Unknown},
\emph{InMIS} or \emph{NotInMIS}.

\begin{figure}[!t]
\begin{center}
\includegraphics[scale=0.55]{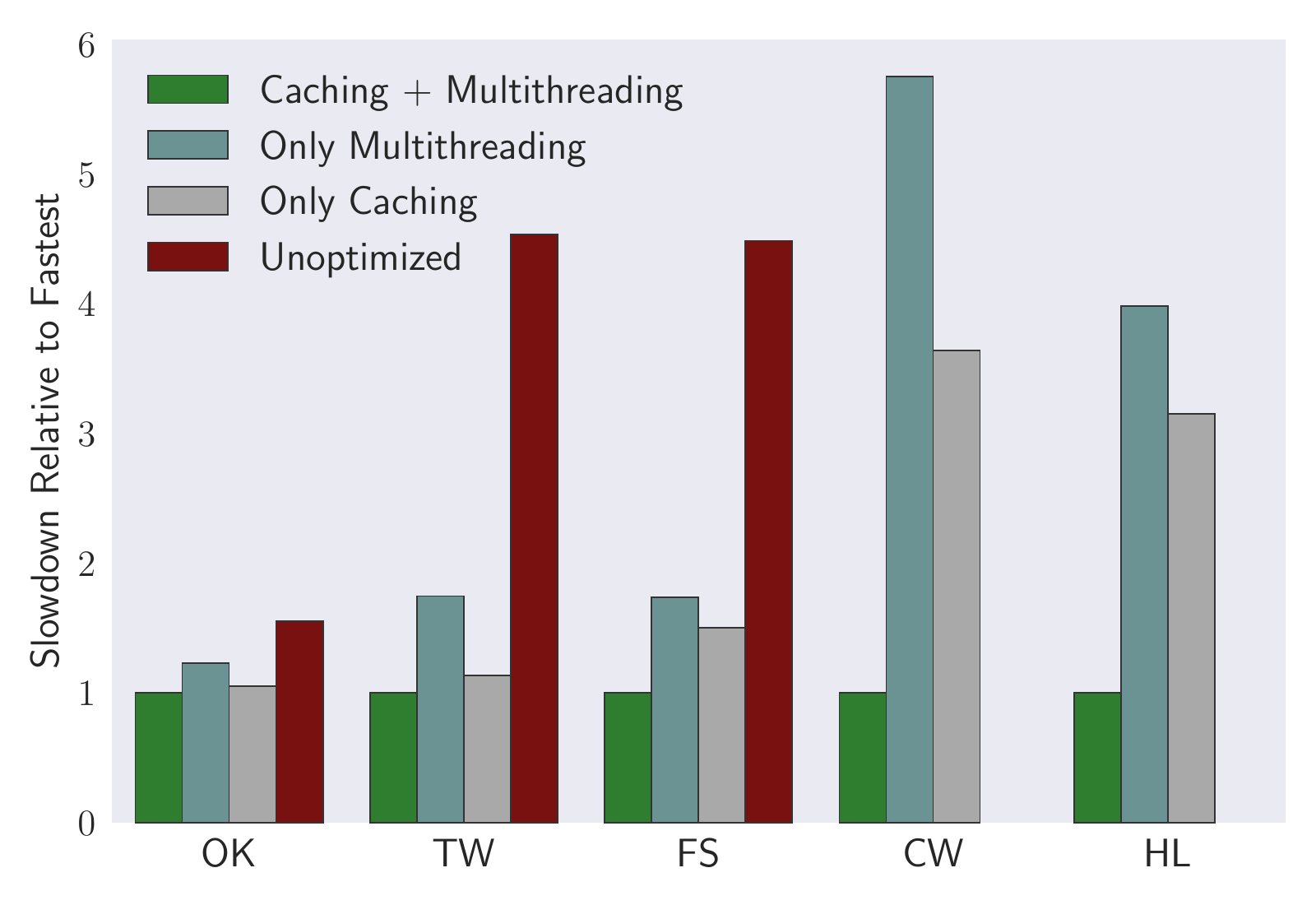}
\caption{\label{fig:mis_optimizations}
Effect of the caching and multithreading optimizations for our
\AMPC{} Maximal Independent Set implementation. The \AMPC{} MIS
algorithm without optimizations did not finish within 4 hours for both
the CW and HL graphs.
}
\end{center}
\end{figure}

Figure~\ref{fig:mis_optimizations} shows the impact of the caching
and multithreading optimizations for our \AMPC{} MIS implementation
on the overall running time.
We note that only the running time of Step (3) in our algorithm in
Figure~\ref{fig:ampc_mis} is affected by these optimizations (we tried
enabling multithreading for the other steps, but the results were
always slower).
Caching decreases the running time of this step by reducing the number
of bytes that must be communicated with the key-value store.
We observe that both optimizations always provide speedups over a
baseline, and that the fastest times are obtained when both
optimizations are applied.
Only using multithreading enables a 1.26--2.59x speedup over the
unoptimized algorithm, and only using caching enables a 1.47--3.99x
speedup over the unoptimized algorithm.
Enabling caching decreases the number of bytes transmitted to the
key-value store by between 1.96--12.2x.
In what follows, when we refer to our \AMPC{} MIS algorithm we mean
the variant that uses both caching and multithreading unless mentioned
otherwise.

\begin{figure}[!t]
\begin{center}
\vspace{-0.2em}
\includegraphics[scale=0.55]{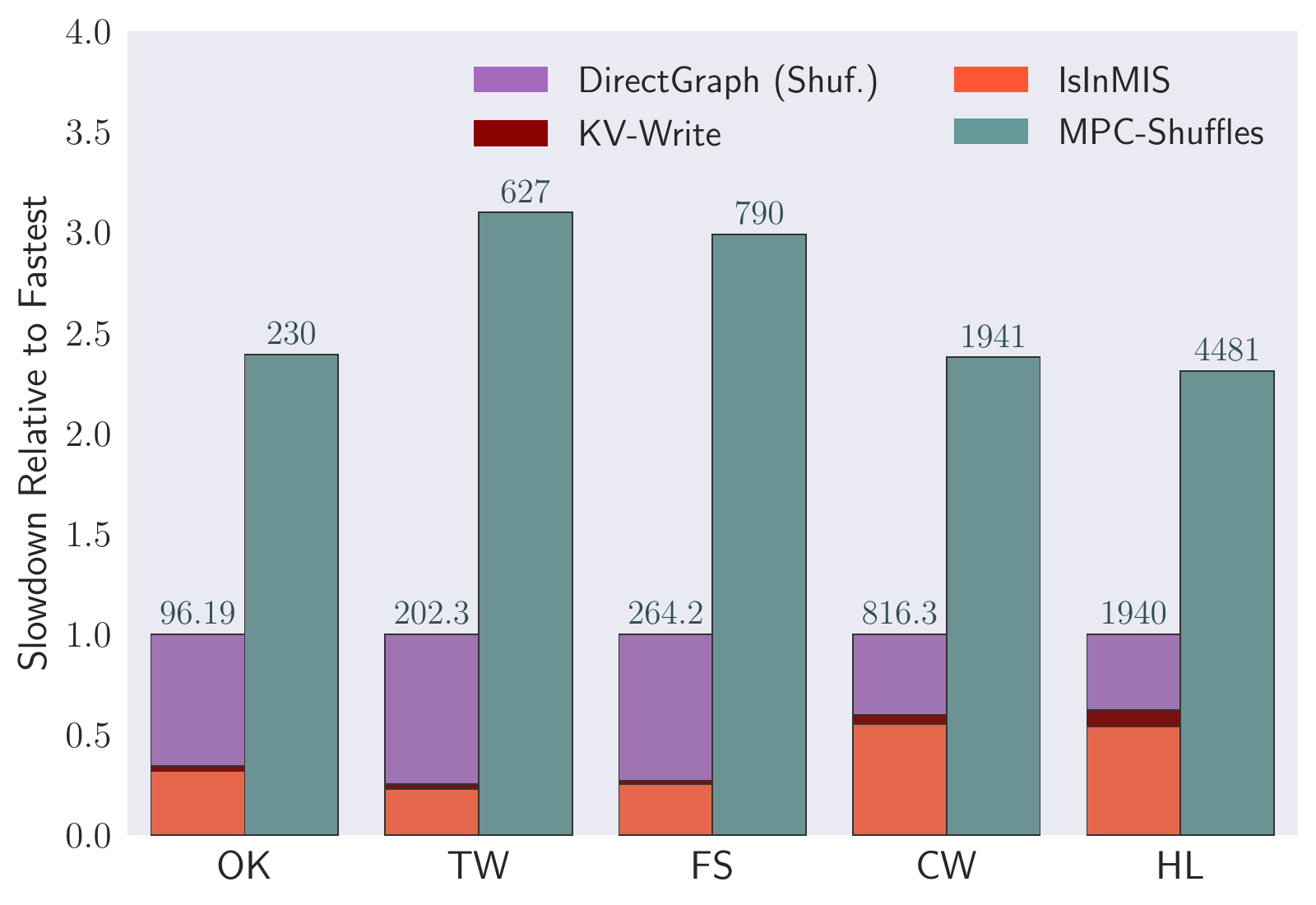}
\caption{\label{fig:mis_runtimes}
\revised{
Normalized running times for \AMPC{} and \MPC{} Maximal Independent
Set implementations. Each bar is annotated with the parallel running
time on top of the bar.
}
}
\end{center}
\vspace{-1em}
\end{figure}

\myparagraph{Running Time and Speedup}
Figure~\ref{fig:mis_runtimes} plots the normalized
running times of the \AMPC{} and \MPC{} MIS algorithms, and breaks
down the running time of the \AMPC{} algorithm into the three steps
indicated in Figure~\ref{fig:ampc_mis}. For smaller graphs, the time
spend shuffling to construct the directed graph is between 2.06--3.24x
more than the time spent in the \textsc{IsInMIS} search procedure. For
larger graphs, the cost of the search procedure is between 1.38--1.43x
more costly than building the directed graph due to a greater volume
of queries (see Section~\ref{sec:discussion}). Writing the directed graph
to the key-value store takes a small fraction of the time (at most
8\%).

We observed that the optimized \AMPC{} algorithm is always faster than
the \MPC{} algorithm, and achieves between 2.31--3.18x speedup over
the \MPC{} algorithm. The largest speedup is for the ClueWeb graph.
The slow performance of the \MPC{} algorithm on this graph is due to
skew in the join since the ClueWeb graph has many high degree vertices
with degrees larger than 10 million, and as large as 75.6 million. Our
speedups over the \MPC{} algorithm can be attributed to both
performing fewer shuffles, and shuffling fewer bytes of data overall
(see Figure~\ref{fig:mis_communication}). We also obtain this speedup
due to performing a modest amount of total communication to the
key-value store, which we discuss more in
Section~\ref{sec:discussion}.

}

\subsection{Maximal Matching}\label{sec:mmexps}

\smparagraph{\AMPC{} Implementation and Optimizations}
Similarly to MIS, we implement algorithms finding the
lexicographically first matching for a random ordering of edges.  Our
\AMPC{} implementation is the constant round algorithm that we
designed in Section~\ref{sec:matching}, which we refer to as
\defn{AMPC-MaximalMatching}. The algorithm is implemented similarly to
the \AMPC{} MIS algorithm shown in Figure~\ref{fig:ampc_mis}.

\revised{
The main differences are that (i) the graph stored in the key-value
store does not direct the edges, but instead sorts the edges based on
random priorities assigned to each edge and (ii) instead of applying
an query process independently on each edge, we iteratively query
edges incident to each vertex $u$ in order of increasing priority.
Finally, instead of caching results \emph{per-edge}, we observed that
it suffices to maintain a value \emph{per-vertex}, since if an edge
incident to vertex $u$ with priority $P$ is not in the matching, this
indicates that all edges incident to $u$ with priority less than $P$
are also not in the matching.  Specifically, the cache stores for each
vertex a \textsc{NodeId}, and an enum indicating whether this id is
the highest priority neighbor that is finished, the matched neighbor,
or whether this vertex has not been searched yet.
}

\myparagraph{\MPC{} Implementation}
We implement a rootset-based maximal matching
algorithm, which is very similar to our MIS algorithm in the \MPC{}
setting.
Similarly to MIS, in each round, this algorithm adds to the matching
all edges whose priority is smaller than the priority of all its
adjacent edges and removes matched edges together with their
endpoints. We implement this algorithm in Flume C++ almost identically
to the MIS algorithm shown in Figure~\ref{fig:mpc_mis}. Once the graph
contains at most $s$ edges, where $s$ is a tunable parameter, it is
sent to a single machine, which finds the remaining edges of the
matching.  In our experiments, we experimentally determined that
setting $s=5\times 10^{7}$ gives a good tradeoff between the overhead
of spawning a new round of the algorithm, and the time taken to
process the graph on a single machine. Similarly to the MIS algorithm,
this algorithm takes $O(\log n)$ rounds with high probability.

\begin{figure}[!t]
\begin{center}
\vspace{-0.2em}
\includegraphics[scale=0.55]{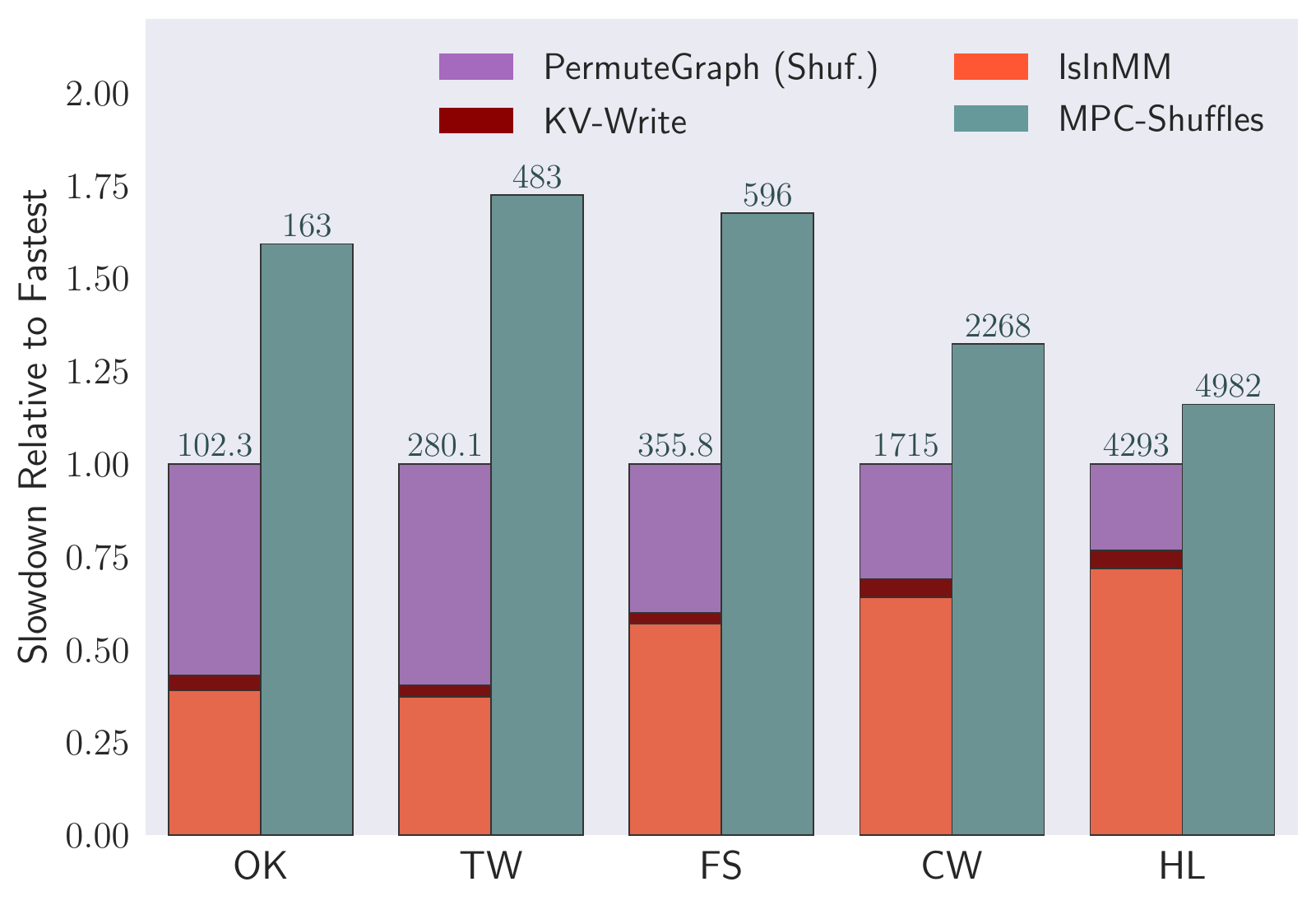}
\caption{\label{fig:mm_runtimes}
\revised{
Normalized running times for \AMPC{} and \MPC{} Maximal Matching
implementations. Each bar is annotated with the parallel running time
on top of the bar.
}
}
\end{center}
\end{figure}

\revised{
\myparagraph{Round-Complexity}
In Table~\ref{table:shuffles} we report the number of shuffles used by
our \AMPC{} MM algorithm and our \MPC{} MM algorithms. Like our
\AMPC{} MIS algorithm, our \AMPC{} MM algorithm only performs a single
shuffle to construct the edge-permuted graph. The remaining round is a
cheap map step over the original graph. Our \MPC{} MM algorithm
requires two shuffles per phase of the algorithm. We observe that the
algorithm performs a similar number of phases (and thus shuffles)
compared to the \MPC{} MIS algorithm.

\myparagraph{Optimizations}
Next, we considered the effect of optimizations on our \AMPC{} MM
algorithm. We found that multithreading was always beneficial, and did
not evaluate its effect in more detail. We were able to evaluate the
algorithm without using caching on OK, TW, and FS, but the algorithm
without caching took longer than 4 hours on CW and HL. For these
graphs, enabling caching reduced the total number of bytes read from
the key-value store by between 2.65--8.81x and improved the running
times by between 1.42--1.95x over the algorithm using just
multithreading.

\myparagraph{Running Time}
Figure~\ref{fig:mm_runtimes} reports the results of our comparison on
all of our real-world graph inputs, and reports a breakdown of the
different phases of our \AMPC{} MM algorithm. The breakdown is grouped
into the three steps of the algorithm---building the edge-sorted graph,
writing it to the key-value store, and running searches using the
\textsc{IsInMM} procedure---similar to the breakdown for our MIS
algorithm in Section~\ref{sec:miscasestudy}. We observe that copying
the graph takes somewhat longer than the MIS algorithm, which is due
to the fact that the all edges are present in this graph. The
breakdown otherwise exhibits a similar trend to the MIS ones.
Compared to the \MPC{} algorithms, the \AMPC{} algorithm is always
faster, and achieves between 1.16--1.72x speedup over the \MPC{}
algorithm. The main reason for the lower speedup compared to that of
the MIS algorithm is the larger cost of the search procedure, and the
larger amount of data shuffled to build the graph written to the
key-value store.
The cost of the \textsc{IsInMM} procedure increases with the graph
size, which is due to the increased number of queries made to the
key-value store (discussed more in Section~\ref{sec:discussion}).
}

\revised{
\subsection{Minimum Spanning Forest}\label{sec:msfexps}

\myparagraph{\AMPC{} Implementation}
Our \AMPC{} algorithm is based on the constant round algorithm that we
designed in Section~\ref{sec:msf}. We empirically found that
implementing a single search procedure on the graph without
ternarization is sufficient to shrink it to a very small size.
Conceptually our implementation has three main parts:

The first part sorts the edges incident to each vertex by their
weights. This graph is then written to the key-value store. The
algorithm then applies Prim's algorithm at each vertex $v$ using the
key-value store to fetch newly visited vertices. The search runs until
either the vertex hits a threshold in terms of the total amount of
edges it examines, or if it visits a vertex with higher priority. Each
search from a vertex $u$ emits a tuple containing itself and the
\textsc{NodeId} of every lower priority vertex it visits, which is
used to contract the graph in the next step. It also emits all of the
MSF edges that it observes during the search.

The second part groups the tuples from the previous part by the
visited vertex $u$, and combines them to select the visitor with the
\emph{highest priority} among all visitors to $u$.  The algorithm then
writes this map from \textsc{NodeId}s to visitor \textsc{NodeId}s to
the key-value store, and applies \emph{pointer jumping} to contract
the directed trees induced by the visited relationships. The final
output of this part is a contraction mapping sending \textsc{NodeId}s
to \textsc{NodeId}s (tree roots).

The final part contracts the graph based on the mapping obtained by
the previous step, and applies an in-memory MSF algorithm on the
contracted graph. The graph-contraction step is implemented using two
shuffles in Flume.

\myparagraph{\MPC{} Implementation}
We implement the classic \boruvka{}'s algorithm which finds a minimum
spanning forest in $O(\log n)$ rounds in the \MPC{} model. In each
phase of the algorithm, every vertex randomly colors itself either
\emph{red} or \emph{blue}. Each blue vertex computes the minimum
weight edge incident to it, and if this neighbor is red, then the
vertex contracts to the neighbor, otherwise the vertex does not
contract. The contraction routine is the same one used in our \AMPC{}
algorithm above. The algorithm iterates these phases until the number
of edges in the graph goes below $5\times 10^{7}$, at which point it
applies an in-memory MSF algorithm.

\begin{figure}[!t]
\begin{center}
\vspace{-0.2em}
\includegraphics[scale=0.55]{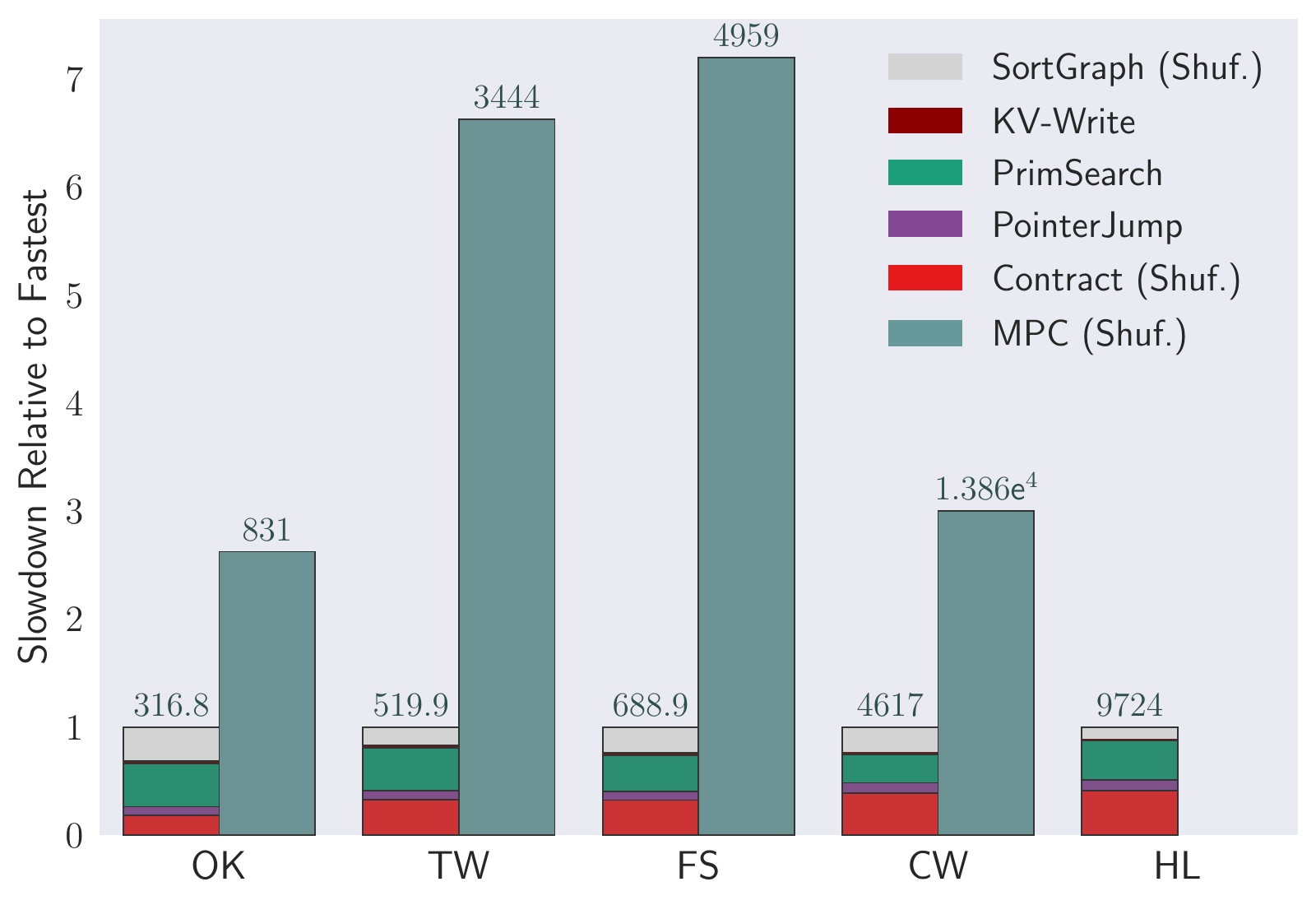}
\caption{\label{fig:msf_runtimes}
\revised{
Breakdown of running time for \AMPC{} and \MPC{} implementations of
Minimum Spanning Forest.
}
}
\end{center}
\end{figure}

\myparagraph{Round-Complexity}
Table~\ref{table:shuffles} reports the number of shuffles used by the
\AMPC{} and \MPC{} MSF algorithms. The \AMPC{} algorithm requires 5
shuffles, 1 for each of \{graph construction, combining on visited
vertices, pointer-jump construction\} and three shuffles to contract
the graph. The \MPC{} algorithm performs three shuffles per phase of
the algorithm (each phase contracts the graph); empirically,
\boruvka{}'s algorithm takes between 11--28 phases. The number of
phases is much higher than in the \MPC{} MIS or MM algorithms since
each phase of \boruvka{} only shrinks the number of vertices (not the
number of edges) in the graph by a constant factor in expectation, and
thus many phases are required.

\myparagraph{Running Time}
Figure~\ref{fig:msf_runtimes} shows the normalized running times for
the \AMPC{} and \MPC{} MSF algorithms. We note that we were unable
to obtain a result for the \MPC{} algorithm within 4 hours for the HL
graph. We breakdown the running time of our \AMPC{} algorithm into
five phases, namely the time to: sort the graph (SortGraph); perform
writes to the key-value store (KV-Write); perform Prim's searches
(PrimSearch); pointer-jump (PointerJump); and lastly the \MPC{}
routine used to contract the graph (Contract).

We observe that unlike in our MIS and MM algorithms, the largest
fraction of the time is spent on graph contraction (on the HL graph,
where the search time and contraction time are the closest, 15\% more
time is spent contracting). Performing pointer-jumping consistently
takes about 10\% of the overall time. Our implementation of
pointer-jumping simply repeatedly queries the parent of a vertex until
it hits a tree root.  Although the worst-case depth of this algorithm
could be as much as $O(n)$, in practice, the trees constructed by the
algorithm are very shallow (we observed a maximum query length of 33
over all graphs).

Compared to the \MPC{} algorithm, our \AMPC{} algorithm is always
faster, and achieves between 2.6--7.19x speedup. The running times for
both of our implementations of this problem are significantly slower
than for the other problems studied in this paper, primarily because
of the costly graph contraction procedure used in both algorithms.
We note that we tried to optimize this graph contraction step using
lookups to a key-value store, but were unable to obtain any
significant speedup over the \MPC{} routine.
}

\subsection{1-vs-2-Cycle}\label{sec:twocycleexps}

\smparagraph{\AMPC{} and \MPC{} Implementation} The 1-vs-2-Cycle
problem is to distinguish whether an input graph consists of a single
cycle on $n$ vertices, or 2 cycles on $n/2$ vertices each.  This is a
canonical problem widely believed to be hard in the \MPC{} model,
i.e., the 1-vs-2-Cycle conjecture states that solving it requires
$\Omega(\log n)$ rounds in the \MPC{} model.  At the same time, the
problem admits a very simple algorithm that requires only $O(1)$
rounds in the \AMPC{} model~\cite{AMPC}. While the problem is of
purely theoretical interest, comparing \MPC{} and \AMPC{} algorithms
for the problem reveals the potential speedups from using the \AMPC{}
model.

The $O(1)$ round \AMPC{} algorithm for this problem is based on
sampling vertices with probability $O(n^{-\epsilon/2})$ and searching
outward from each vertex until another sampled vertex is hit.  Then,
the graph is contracted to a graph on the sampled vertices, with edges
between adjacent samples that discover each other.  We refer the
reader to Behnezhad~\etal{}~\cite{AMPC} for the full details of the
algorithm, and analysis.

We implemented this sampling-based algorithm in Flume C++ and refer to
it as \defn{\AMPC{}-1-vs-2-Cycle}, and evaluated its performance on a
family of $2\times k$-cycle graphs. Our implementation performs a
single round of the search procedure, sampling vertices with
probability $1/1024$, and solves the subsequent contracted graph on
a single machine.  We compare with a general-purpose
\MPC{} connectivity algorithm (\defn{CC-LocalContraction}) based on
performing local-contractions, which prior work found to be the
fastest \MPC{} connectivity implementation across a wide range of
graphs~\cite{cc-contractions}.

\revised{
\smparagraph{Results} On average, \AMPC{}-1-vs-2-Cycle achieves
between 3.40--9.87x speedup over CC-LocalContraction, with the
speedups increasing as the number of vertices increases. The main
savings (and improvement in the running time) comes from fewer
shuffles, and fewer bytes shuffled. The \AMPC{} algorithm requires a
single shuffle used to write the graph to the key-value store. The
\MPC{} algorithm reduces the length of the cycle by roughly a factor
of 2.59--3x in each iteration of the algorithm (2.69x on average).
Each iteration contracts the graph, which requires 3 shuffles. The
\MPC{} algorithm uses 4--9 iterations across all cycle inputs (12--27
shuffles).
}

\revised{
\subsection{Discussion}\label{sec:discussion}
We end this section by discussing possible bottlenecks in our
implementations, and directions for future work.

\begin{figure}[!t]
\begin{center}
\vspace{-0.2em}
\includegraphics[scale=0.55]{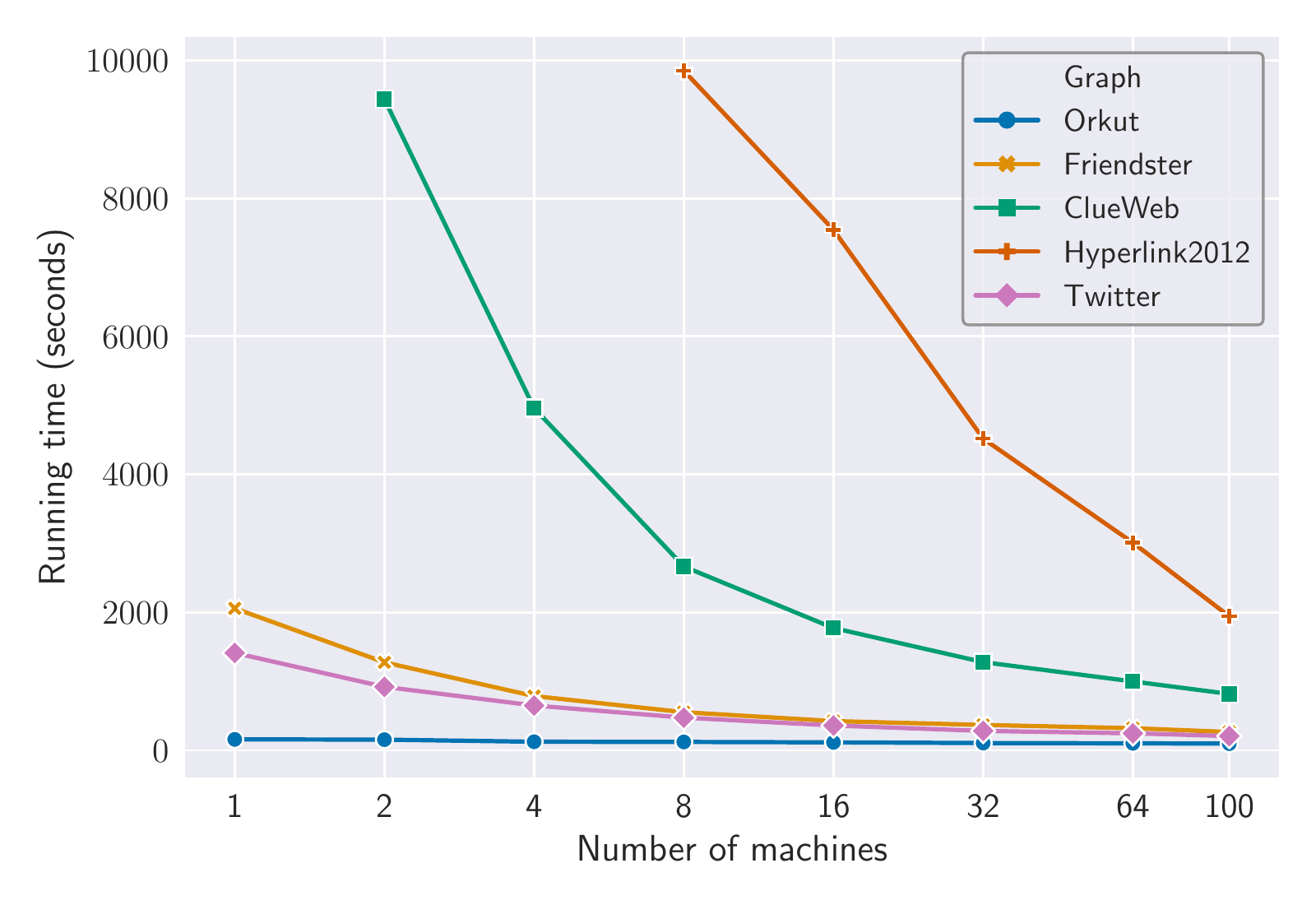}
\vspace{-0.6em}
\caption{\label{fig:mis_scaling}
\revised{
Self-speedup of the \AMPC{} MIS algorithm when run on between 1--100
machines.
}
}
\end{center}
\vspace{-1em}
\end{figure}
\myparagraph{Scaling}
We evaluated the scaling (or self-speedup) or our \AMPC{} MIS
algorithm to check that our \AMPC{}
algorithms obtain speedups when varying the number of machines. We
were unable to obtain 1-machine times for the ClueWeb graph and 1, 2,
or 4-machine times for the Hyperlink2012 graph within 4 hours. For the
smaller graphs, the 100-machine time is between 1.64--7.76x faster
than the 1-machine time. The speedups are better for larger graphs,
since there is more work to do relative to the overhead of spawning
rounds and shuffles. For ClueWeb, the 100-machine time
obtains a 11.5x speedup over the 2-machine time, and for Hyperlink2012
the 100-machine obtains a 5x speedup over the 8-machine time. We
believe that one of the reasons we do not obtain linear speedup may be
due to saturating the network bandwidth when querying the key-value
store, which we discuss next.

\begin{figure}[!t]
\begin{center}
\vspace{-0.2em}
\includegraphics[scale=0.55]{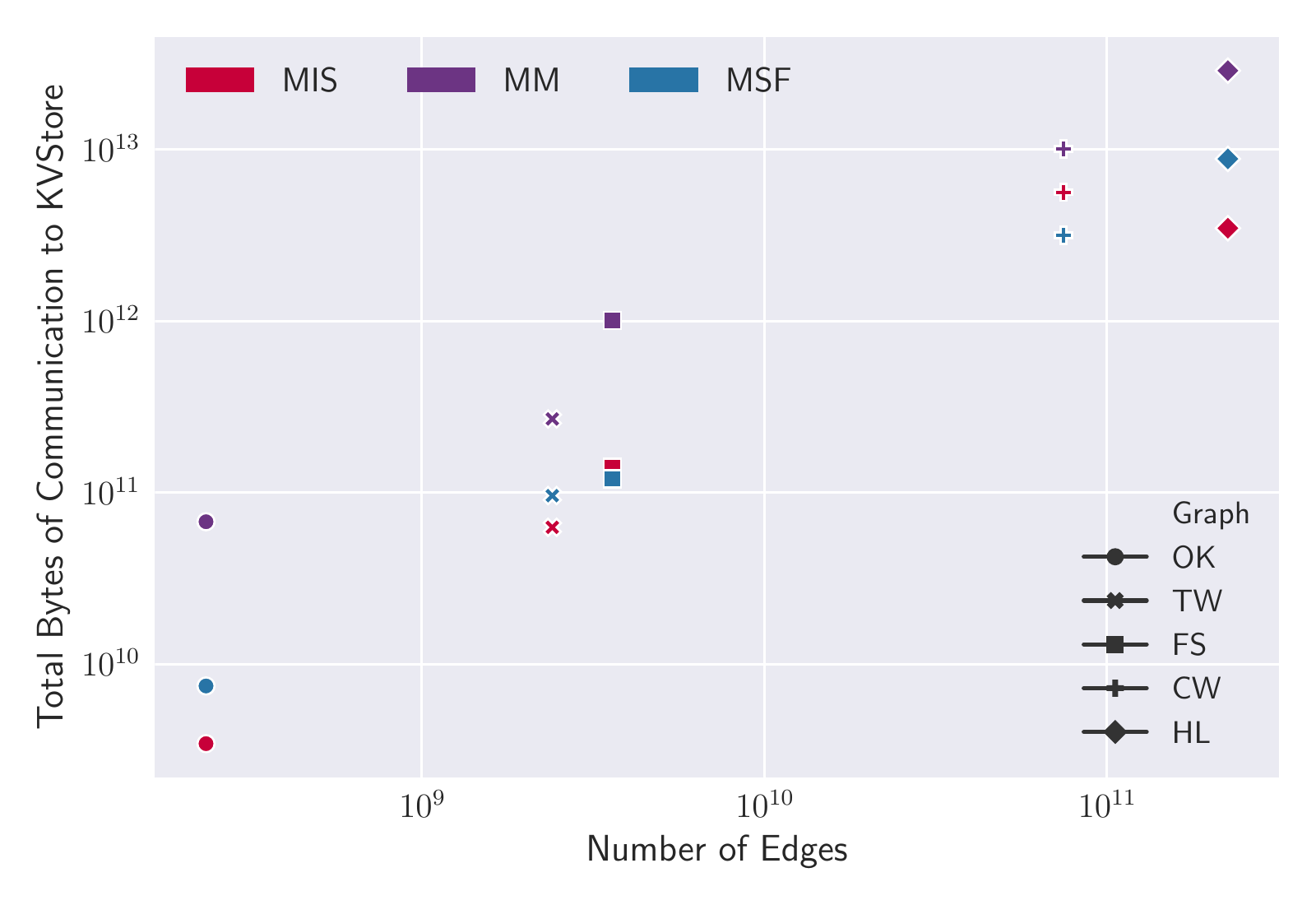}
\vspace{-0.6em}
\caption{\label{fig:bytes_communication}
\revised{
Total bytes of communication to the key-value store by \AMPC{}
algorithms.
}
}
\end{center}
\vspace{-1em}
\end{figure}
\myparagraph{Key-Value Store Communication}
Figure~\ref{fig:bytes_communication} plots the the number of edges in
our real-world inputs on the x-axis and the total bytes of data
communicated to the key-value store on the y-axis. We observe that for
all of the problems there is a consistent linear trend in terms of the
total amount of communication with respect to the
number of edges. Although the total number of bytes communicated is
one upper bound on the total amount of memory used by our algorithms,
we point out that in practice, the total memory usage is closer to the
graph size (which is at most 2TB for the Hyperlink2012 graph).
Finally, we observed that in all cases, the maximum throughput we
obtain on our network in our experiments is about
80Gb/sec (which amounts to just under 1 Gb/sec per machine). It seems
likely that the time spent on the search phase of each algorithm could
be accelerated by exploiting high throughput networks, and this is an
interesting direction to investigate in future work.

\setlength{\tabcolsep}{3pt}
\begin{table}[!t]
\small
  \centering
\begin{tabular}{l|c|c|c|c|c|c|c|c}
\toprule
{\bf Algorithm}                    & {\bf OK} & {\bf TW} & {\bf FS} & {\bf CW} & {\bf HL} & {\bf $2 e^{8}$} & {\bf $2e^{9}$} &{\bf $2 e^{10}$} \\
        \midrule
{2-Cyc. (RDMA) }  & -- & -- & -- & -- & -- & 1    & 1    & 1\\
{2-Cyc. (TCP/IP)} & -- & -- & -- & -- & -- & 1.74 & 3.75 & 5.90 \\
{\MPC{} 2-Cyc.}   & -- & -- & -- & -- & -- & 3.40 & 6.70 & 9.87 \\
\midrule
{MIS (RDMA) }     & 1    & 1    & 1    & 1    & 1    & -- & -- & -- \\
{MIS (TCP/IP) }   & 1.85 & 1.63 & 1.50 & 1.68 & 1.71 & -- & -- & -- \\
{\MPC{} MIS }     & 2.39 & 3.04 & 2.98 & 2.37 & 2.30 & -- & -- & -- \\
\end{tabular}
\caption{\shep{Normalized running times of 1-vs-2-Cycle  (2-Cyc.) and MIS
algorithms. The \AMPC{} algorithms are the first two rows in each
group and communicate with the key-value store using either RDMA or
TCP/IP. The \MPC{} implementation is the last row in each group.
$2e^{k}$ refer to the $2 \times 10^{k}$ graphs.}}
\label{table:rdmavstcpvsmpc}
\end{table}

\shep{
How much of the speedup over \MPC{} algorithms in our experiments is
due to using RDMA-based communication? To study this question we
replaced all of the key-value store communication within the \AMPC{}
algorithms with RPCs sent over TCP/IP.
Table~\ref{table:rdmavstcpvsmpc} shows the results of this evaluation
for the 1-vs-2-Cycle and MIS problems. We see that the impact of using
RDMA is more significant for 1-vs-2-cycle, since the lower latency of
RDMA enables searches around the cycle to terminate more quickly. The
slowdown for the MIS problem is more modest, with the TCP/IP based
algorithm only being 1.67x slower than the RDMA-based algorithm on
average.
These results suggest that although RDMA-based communication is
important in achieving good running times, it can safely be replaced by RPCs
sent over TCP/IP, and the resulting algorithms will still outperform
fast \MPC{} baselines.
}

\shep{
\myparagraph{\AMPC{} vs \MPC{} Algorithms}
One of the main messages of our paper is that \AMPC{} algorithms can
deliver better performance than state-of-the-art \MPC{} algorithms,
without sacrificing fault-tolerance properties. There are a number of
reasons for this performance improvement.  First, \AMPC{} algorithms
require significantly fewer rounds compared to fast \MPC{} baselines
as shown in Table~\ref{table:shuffles}.  Second, \AMPC{} algorithms
also require shuffling significantly less data per round, which
results in fewer disk writes (e.g., see
Figure~\ref{fig:mis_communication}). Finally, using RDMA as a
communication mechanism results in excellent performance as the
algorithm can effectively replace the shuffle-based communication
mechanism which uses costly disk writes with a fast, low-latency
protocol. Ultimately, we see the \AMPC{} approach as an interesting
middle-ground between systems that communicate through persistent
storage (like MapReduce, Hadoop and Flume-C++) and are thus robust to
preemptions, and systems that run fully in memory, which deliver
better performance at the cost of not tolerating preemptions well and
requiring high priority (reserved) resources.
}

\myparagraph{Applicability}
We believe the \AMPC{} model is a promising platform for problems
which can be expressed using graph exploration, such as connected
components, sub-structure problems, and problems involving random
walks.

\emph{Connected Components}
One specific challenge stemming from this paper is to obtain a result
for connectivity that achieves speedups over the state-of-the-art
\MPC{} algorithm~\cite{cc-contractions}. We tried to apply our MSF
algorithm over a graph with random edge weights, but were not able to
obtain significant speedups over this \MPC{} result due to the high
cost of graph contraction on the first step (contracting the initial
graph takes about 2/3 of the overall running time).

\emph{Sub-structure Extraction.} The $k$-core, $k$-truss, and related
problems have garnered a large amount of interest due to their
importance in community detection~\cite{batagelj03cores,
sariyuce2016incremental, Sariyuce2013, sariyuce2018local,
DBLP:journals/tkde/WenQZLY19, DBLP:conf/sigmod/0001Y19}. It would be
interesting to study whether we can solve these problems $O(1)$ rounds
in the \AMPC{} model.

\emph{Random-walk and Embedding.}
The \AMPC{} model can potentially help accelerate random-walk based
problems, such as PageRank and Personalized PageRank~\cite{
bahmani2010fast, gleich2015pagerank,page1999pagerank} since it
efficiently supports random access. Graph embeddings are another
impactful data-mining area where expressing random-walk algorithms,
such as DeepWalk~\cite{perozzi14deepwalk}, LINE~\cite{tang2015line},
and NetSMF~\cite{qiu2019netsmf} in \AMPC{} could allow scaling to
massive graphs in a reliable, fault-tolerant way.
}

\section{Related Work}\label{sec:related}

\myparagraph{Massively Parallel Algorithms}
There has been a huge body of work on developing efficient, low
round-complexity distributed graph algorithms in the Massively
Parallel Computation model~\cite{ DBLP:conf/spaa/AhnG15, DBLP:conf/stoc/AndoniNOY14, andoniparallel, DBLP:conf/soda/AssadiBBMS19, DBLP:conf/nips/BateniBDHKLM17, DBLP:conf/spaa/BehnezhadDETY17, DBLP:journals/corr/abs-1802-10297, DBLP:conf/focs/BehnezhadHH19, DBLP:conf/stoc/CzumajLMMOS18, DBLP:conf/podc/GhaffariGKMR18,  koutris2018algorithmic, DBLP:conf/spaa/LattanziMSV11, DBLP:conf/spaa/RoughgardenVW16}, including many other papers. In this paper, our focus
is on recently proposed Adaptive Massively Parallel Computation
model~\cite{AMPC}, which was also recently studied from a lower-bound
perspective by Charikar~\etal{}~\cite{lowerboundAMPC}.

\myparagraph{Distributed Graph Processing}
Motivated by the need to process very large graphs, there have been
many distributed graph processing frameworks developed in the
literature (e.g.,
\cite{gonzalez2012powergraph,low2010graphlab,pregel}
among many others). We refer the reader to~\cite{McCune2015,Yan2017}
for surveys of existing work in this area.

\section{Conclusion and Open Problems}\label{sec:conclusion}

In this paper we presented new graph algorithms with constant
round-complexity in the Adaptive Massively Parallel Computation model.
Our theoretical results for these problems address three open problems
posed in~\cite{AMPC}, essentially settling the round complexity of connectivity, minimum spanning forest and matching in this model.
\shep{Our practical evaluation of these algorithms shows that \AMPC{}
algorithms can deliver better performance than corresponding \MPC{}
algorithms, without sacrificing fault-tolerance properties.}

For future research, it would be interesting to understand whether
the total query complexity of these problems could be reduced
to linear even for sparse graphs, when $m = O(n)$.
It would also be interesting to understand whether the model can be used to solve other fundamental large-scale
problems faster, such as string matching, and whether it can be
applied to distance problems on graphs.

\section*{Acknowledgements}
We would like to thank Greg Cipriano, Milo Martin, and Florentina
Popovici for their work on the key-value store used in this paper, and
Raimondas Kiveris for his contributions to earlier projects which
inspired this work. We would also like to thank the anonymous
reviewers for their helpful feedback and suggestions.

\bibliographystyle{abbrv}
\bibliography{ref}

\appendix
\section{Deferred Proofs from Section~\ref{sec:msf}}\label{apx:msf}

\msfshrink*
\begin{proof}
Observe that the set of vertices that are not contracted are exactly
those vertices that do not encounter a lower-rank vertex in their
search, and do not fully search their connected component. The first
condition is because each vertex that encounters a lower-rank vertex
adds a directed edge to $F$ and is contracted to the root of its
directed tree in $F$. The second condition is since vertices that
fully explore their connected component are isolated, and are removed
from the contracted graph by Step~\ref{prim:contract}.

Let $T$ be the MSF of the input graph to
Algorithm~\ref{alg:truncatedprim} which is ternarized by assumption,
and let $G'(V',E')$ be the contracted graph output by this algorithm.
The Prim search for each vertex can be viewed as a connected
exploration in $T$. Each vertex that survives in $G'$ is a vertex that
stops its exploration due to case (1), namely, that it explores
$n^{\epsilon/2}$ vertices without finishing searching its component,
and without seeing a vertex that appears before it in $\pi$. There can
be at most $O(n^{1-\epsilon/2})$ such vertices, since each vertex that
survives in $G'$ uniquely acquires $n^{\epsilon/2}$ vertices from one
direction in $T$. A vertex can be acquired in at most three
directions, since $T$ is ternarized. Therefore, the total number of
surviving vertices in $G'$ is at most $3n/n^{\epsilon/2} =
O(n^{1-\epsilon/2})$ vertices, which is a factor of
$O(n^{\epsilon/2})$ fewer than $G$ as desired.
\end{proof}

Next, we show that the total number of queries made by the algorithm
is concentrated around its mean. One can show that
the expected number of queries per vertex is $O(\log n)$, as shown in
Lemma {8.2} of ~\cite{AMPC}, which considered a similar randomized
process for cycle connectivity. In the cycle connectivity algorithm,
the local search for each vertex simply walks along the cycle until
either the search grows too large, or a higher priority vertex in
$\pi$ is hit. To obtain a high probability bound, the authors of
~\cite{AMPC} showed that the cost of exploring the cycle was
equivalent to the cost of randomized quicksort, and then used the fact
that randomized quicksort runs in $O(n\log n)$ operations w.h.p.

Unfortunately, it seems difficult to map the Prim searches done by
each vertex to a cycle, and there are simple counterexamples showing
that the Prim searches cannot be mapped onto the Euler tour of the
MST. Instead, we relate the cost of the Prim searches to the analysis
of \emph{treaps}~\cite{DBLP:conf/focs/AragonS89}. The idea is to think
about query process as first building a treap on the ternarized MST
which we refer to as a \emph{\threep{}}, since vertices in the
\threep{} have degree $\leq 3$. Like a regular treap, the node with
highest priority is at the root of the \threep{} (it is a min-heap
with respect to rank in $\pi$), and each of its children contain the
node with highest priority in that subtree.  A simple proof by
induction shows that there is a unique \threep{} corresponding to a
given permutation $\pi$ and ternary tree $T$.

Intuitively, the same properties that hold for treaps should hold for
\threep{}s, since a node in a treap can split the subproblem
containing it two disjoint pieces, while a node in a \threep{} can
split the subproblem containing it into three disjoint pieces.  We
first show that the height of a \threep{} is $O(\log n)$ w.h.p., and
then relate the cost of a Prim search from a vertex $v$ to the size of
its subtree in the \threep{} defined by $T$ and $\pi$.

\begin{restatable}{lemma}{threepheight}\label{lem:threepheight}
Given a tree with $\Delta \leq 3$, $T$, the \threep{} defined by $T$
and $\pi$ where $\pi$ is a uniformly random permutation from $[n]
\rightarrow [n]$ has height $O(\log n)$ w.h.p.
\end{restatable}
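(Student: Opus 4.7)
The plan is to mirror the classical high-probability analysis of treap height, now adapted to the tree-valued setting. Fix an arbitrary vertex $v \in V(T)$; it suffices to show that $\Pr[\mathrm{depth}_{\mathcal{T}}(v) > c \log n] \leq n^{-c}$ for a suitably large constant $c$ and then union-bound over the $n$ vertices of $T$.

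The first step is a clean combinatorial characterization of ancestors in the \threep{}, provable by induction on the recursive construction: $u$ is an ancestor of $v$ in $\mathcal{T}$ if and only if $u$ attains the minimum rank under $\pi$ among all vertices on the unique path from $u$ to $v$ in $T$. Using this, I reinterpret the root-to-$v$ path in $\mathcal{T}$ as a shrinking process on connected subgraphs of $T$: set $S_0 = T$, and given $S_i$ let $r_i$ be its minimum-rank vertex (which is uniform over $V(S_i)$ by exchangeability of $\pi$ restricted to $S_i$); if $r_i = v$ the descent ends, otherwise $S_{i+1}$ is the connected component of $S_i \setminus \{r_i\}$ containing $v$. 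The depth of $v$ in $\mathcal{T}$ is exactly the number of steps of this process.

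The core technical step is a ``constant-fraction shrink'' claim: in any connected subtree $S$ of $T$ (which inherits $\Delta \leq 3$), at least a constant fraction of the vertices of $S$ are \emph{$v$-balanced separators}, meaning that removing them leaves $v$ in a component of size at most $(3/4)|S|$. Given the uniformity of $r_i$ over $V(S_i)$ and the fact that the identity of the minimum-rank vertex is independent of the relative order of the remaining ranks, this produces sufficiently independent, constant-probability shrink events across descent steps. A standard Chernoff bound on the number of shrinks in the first $c \log n$ rounds then gives failure probability at most $n^{-c}$, since fewer than $\log_{4/3} n$ shrink events in $c\log n$ rounds leaves $|S_i| \geq 2$, which contradicts the descent having not yet terminated. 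Union-bounding over $v$ yields height $O(\log n)$ w.h.p.

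The main obstacle I expect is establishing the constant-fraction $v$-balanced separator property for ternary trees. In the classical path-treap case the linear order around $v$ immediately exhibits a constant fraction of ``balanced'' pivots, but an arbitrary ternary tree can be highly asymmetric around $v$. I anticipate needing to exploit the degree-$3$ constraint carefully, perhaps by identifying a large ``central region'' of $S$ (for instance a path produced by repeatedly peeling lighter subtrees at a weighted centroid) on which the analysis reduces to the one-dimensional case, with the two auxiliary directions at each vertex absorbed into constant factors. If a straightforward centroid argument turns out to be too weak, the Chernoff step could instead be run against a level-by-level potential function rather than raw cardinality of $S_i$.
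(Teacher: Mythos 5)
Your approach --- reformulating the depth of $v$ as a descent process $S_0 \supset S_1 \supset \cdots$ and arguing that a constant fraction of vertices are $v$-balanced separators --- is genuinely different from the paper's proof, which instead bounds $X_v := \sum_u X_v^u$, where $X_v^u$ indicates that $u$ is an ancestor of $v$ in the ternary treap, and then invokes a (claimed) mutual independence of these indicators together with a Chernoff bound. Both arguments do share the starting point you identify: $u$ is an ancestor of $v$ iff $u$ has the minimum rank on the tree path from $u$ to $v$, so $\Pr[X_v^u = 1] = 1/(d(u,v)+1)$, where $d$ denotes distance in $T$.

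Unfortunately the constant-fraction shrink claim is false, and your worry about it was well-placed. Take $T$ to be a complete binary tree of height $h$ (so $\Delta = 3$ and $n = 2^{h+1}-1$) and let $v$ be a leaf. Already in the first step $S_0 = T$: removing a vertex $u$ leaves $v$ in a component of size at most $(3/4)n$ only when $u$ is a $T$-ancestor of $v$ (at most $h = O(\log n)$ choices) or when the subtree rooted at $u$, which gets cut away from $v$, has size at least $n/4$ (only $O(1)$ choices, since subtree sizes halve per level). So only an $O(\log n / n)$ fraction of vertices of $T$ are $v$-balanced separators, not $\Omega(1)$, and no centroid-peeling or potential-function variant can recover a constant rate because the true shrink rate really is that slow. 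In fact, the lemma itself appears to fail on this family: $E[\mathrm{depth}(v)] = \sum_u 1/(d(u,v)+1) \geq 2^{h-1}/(2h+1) = \Omega(n/\log n)$, since there are $2^{h-1}$ leaves at distance exactly $2h$ from $v$; an expectation that is $\omega(\log n)$ is incompatible with an $O(\log n)$ whp height bound. The paper's own proof hits the same wall --- it implicitly carries over the $O(\log n)$ estimate of $E[X_v]$ from path-treaps, where only $O(1)$ vertices sit at each distance $d$, but here the count at distance $d$ can be $2^{\Theta(d)}$; and the asserted mutual independence of the $X_v^u$ is already incorrect for ordinary path-treaps (on the path $1$--$2$--$3$ with $v=2$, $\Pr[X_2^1 \wedge X_2^3] = \Pr[\pi(2) \text{ is maximal}] = 1/3 \neq 1/4$). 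So neither your argument nor the paper's appears salvageable for arbitrary $\Delta \leq 3$ trees as the lemma is stated; a correct version would need an additional structural hypothesis on $T$ ruling out exponential volume growth around $v$.
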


\begin{proof}
The depth of a node $v$ is the number of nodes
from $v$ to the root of the \threep{}. Assume $T$ is connected (note
that adding edges connecting disjoint components only increases the
height of the resulting \threep{}).  A simple high probability bound
on the depth of an ordinary treap can be obtained as follows. Define
the indicator random variable $X_{i}^{j}$ which is $1$ if node $j$ is
an ancestor of $i$ and $0$ otherwise. The important fact is that for a
given $i$, the variables $X_i^{j}$ are independent. Let $X_i =
\sum_{j=1}^{n} X_{i}^{j}$. The bound then follows by a Chernoff bound
on $X_i = \sum_{j=1}^{n} X_{i}^{j}$, which is $O(\log n)$ in
expectation.

To modify this analysis to \threep{}s, we observe that we can define
variables $X_i^{j}$ similarly, and observe that these variables are
also independent. To see why the $X_i^{j}$ are independent, consider
two such random variables for a given node $i$, $X_i^{j}$ and
$X_i^{k}$. Now, consider the unique path in $T$ between $i\rightarrow
j$.  If $k$ falls on this path, then whether $j$ is an ancestor of $i$
only depends on whether $j$ has the highest probability on this path,
and not on whether $k$ has the highest probability on its path to $i$,
or on relative orderings of nodes within the path. Therefore, the
variables $X_i^{j}$ are independent. Applying a Chernoff bound as in
the case of an ordinary treap completes the proof.

For an overview of the high probability analysis of treaps, see for
example~\cite{dubhashi2009concentration}.
\end{proof}

\begin{restatable}{lemma}{queryandthreep}\label{lem:queryandthreep}
Given a weighted graph $G$ with $\Delta(G) \leq 3$, and a unique MSF
of $G$, $T$, let $R$ be a \threep{} generated by $T$ and a uniformly
random permutation over $[n]$, $\pi$. Let $R_v$ be the subtree rooted
at $v$ in $R$. Then, the number of queries made by the truncated Prim
search from a vertex $v$ in $G$ with respect to priorities in $\pi$ is
upper-bounded by $O(|R_v|)$, the size of $v$'s subtree in the
\threep{}, $R$.
\end{restatable}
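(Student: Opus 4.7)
The plan is to show that the set of vertices touched by the truncated Prim search from $v$ is essentially equal to $R_v$ (up to one extra ``boundary'' vertex), and then convert the vertex count into a query count using the bounded-degree assumption $\Delta(G)\le 3$.

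First, I would establish a structural property of the search. Since $T$ is the \emph{unique} MSF of $G$, Prim's algorithm started at $v$ only ever adds edges of $T$; hence, at every intermediate step, the set of vertices it has visited forms a connected subtree of $T$ containing $v$. I will use this ``connected subtree of $T$'' property as the main invariant throughout.

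Next, I would argue that every visited vertex $w$, except possibly the vertex that triggers stopping condition (3), lies in $R_v$. Let $P$ be the unique $v$--$w$ path in $T$. Because the visited set is a connected subtree of $T$ containing both $v$ and $w$, every vertex of $P$ must already have been visited by the time $w$ is reached. The search stops the instant it adds an edge whose other endpoint has $\pi$-value strictly less than $\pi(v)$, so every vertex of $P$ other than the (last) triggering vertex satisfies $\pi(\cdot)\ge \pi(v)$. If $w$ itself is not the triggering vertex, this means $v$ has the minimum $\pi$-value on the entire $v$--$w$ path in $T$; by the recursive construction of the \threep{}, this is exactly the characterization ``$w\in R_v$''. (The recursive construction picks a min-$\pi$ root, splits $T$ into its at-most-three remaining components, and recurses, so $R_v$ is precisely the set of vertices $w$ for which $v$ is the $\pi$-minimum on the $v$--$w$ path in $T$.)

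Consequently, the number of vertices visited by the search from $v$ is at most $|R_v|+1$: at most $|R_v|$ interior vertices, plus at most one triggering boundary vertex just outside $R_v$. Finally, since $\Delta(G)\le 3$, retrieving the neighbor list of any visited vertex from the DHT costs $O(1)$ queries, and Prim's priority-queue bookkeeping is local computation that does not incur additional queries. Summing over visited vertices yields $O(|R_v|)$ queries, as claimed. The bound is unaffected by the $n^{\epsilon/2}$ truncation in stopping condition (1), which can only decrease the number of visited vertices.

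The main point to get right is the bridge between Prim's behavior and the recursive definition of the \threep{}: I need the ``visited set is a subtree of $T$'' invariant (which requires uniqueness of the MSF, guaranteed after tie-breaking) together with the ``path-min'' characterization of $R_v$. Once both are in hand, the lemma follows by a direct counting argument, so I do not expect any substantial technical obstacle beyond carefully handling the single triggering vertex that may lie outside $R_v$.
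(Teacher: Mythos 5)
Your proof is correct and takes essentially the same approach as the paper's: both arguments observe that (by uniqueness of the MSF) the Prim search grows a connected subtree of $T$, then relate the reach of that search to the treap subtree $R_v$, ignore stopping condition~(1) since it only helps, and finally use $\Delta\le 3$ to convert visited vertices into $O(1)$ queries each. The only cosmetic difference is the characterization of $R_v$ you use as the bridge: you argue directly via the ``path-min'' property ($w\in R_v$ iff $v$ has the minimum $\pi$ on $T[v,w]$), whereas the paper reasons via the equivalent ``boundary'' property that every $T$-edge leaving $R_v$ goes to an ancestor of $v$ in $R$ (and proves it by a slightly terse contradiction). Your version is, if anything, stated more explicitly, and your accounting for the single possible out-of-$R_v$ trigger vertex and the case of stopping condition~(2) is sound.
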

\begin{proof}
Consider the $v$'s node in $R$, and consider running the truncated
Prim search in $G$ starting at $v$. This search is a deterministic
sequence of vertex visits starting at $v$ that visits a set of
vertices emanating from $v$ in $T$ until one of the three stopping
conditions in Step~\ref{prim:conditions} is met. For simplicity, we
can ignore condition (1), since ignoring it only increases the query
cost.  To summarize, consider running Prim's algorithm starting at $v$
until either the entire component containing $v$ is visited, or $v$
hits a vertex $u$ s.t. $\pi(u) < \pi(v)$.

To prove the claim, assume for the sake of contradiction that $v$'s
truncated Prim search includes a vertex outside $R_v$ in $v$'s
cluster. Note that this is the only way that the search can make more
than $O(R_v)$ queries, since the vertices in $T$ are ternarized.
However, this means that there is an edge from some $v' \in R_v$ that
connects to a vertex $w \in T$ s.t. $\pi(w) > \pi(v)$, and $w$ is not
in $R_v$. Since the $(v,w)$ edge forms a cycle in $T$ which is acyclic
by assumption, we have derived a contradiction. Therefore, the Prim
search started at $v$ only includes vertices within $R_v$, until
either $v$ traverses its parent edge, or traverses an edge $(w,a)$
where $w \in R_v$ and $a$ is an ancestor of $v$ in $R$. In both cases,
the Prim search visits a vertex with higher priority in $\pi$ and is
immediately terminated. Therefore, the total query cost of the
truncated Prim search starting at $v$ with respect to $\pi$ is
$O(|R_v|)$.
\end{proof}

Given the relationship between the query cost of the truncated Prim
search and the subtree size in a \threep{}, we are now ready to prove
the total number of queries made by the algorithm.

\msfqueries*
\begin{proof}
  Let $R$ be a \threep{} generated by $T$ and the uniformly random
  permutation chosen by Algorithm~\ref{alg:truncatedprim}, $\pi$.  By
  Lemma~\ref{lem:queryandthreep}, we can upper bound the total query
  cost by summing the size of each vertex $v$'s subtree, $R_v$ in $R$.
  By Lemma~\ref{lem:threepheight}, we have that the height of a
  \threep{} over $n$ vertices is $O(\log n)$ w.h.p. Therefore, each
  vertex participates in at most $O(\log n)$ subtrees w.h.p. The total
  query complexity can therefore be bounded as:
  \begin{equation*}
    \sum_{v \in V} |R_v| = \sum_{v \in V} \sum_{u \in R_v} 1 \leq
    \sum_{v \in V} O(\log n) = O(n \log n)
  \end{equation*}
  Note that Lemma~\ref{lem:queryandthreep} ignored stopping condition
  (1), which truncates the Prim search if it grows too large. However,
  Algorithm~\ref{alg:truncatedprim}, which respects the stopping
  condition, can only make fewer queries. Therefore,
  Algorithm~\ref{alg:truncatedprim} makes $O(n \log n)$ queries in
  total w.h.p.
\end{proof}

\primspacerounds*
\begin{proof}
  By Lemma~\ref{lem:msfqueries}, the total number of queries made by
  Algorithm~\ref{alg:truncatedprim} is $O(n\log n)$ w.h.p., which
  bounds the total space used by the algorithm. We must now argue that
  the algorithm can be implemented so that each machine makes at most
  $O(n^{\epsilon})$ queries. This can be done by the same argument
  made to bound the number of queries made per machine in the forest
  connectivity algorithm in~\cite{AMPC}.

  Specifically, Lemma 8.4 of~\cite{AMPC} shows that if one throws $n$
  weighted balls, where the maximum weight of a ball is
  $O(n^{\epsilon/2})$, and the average weight of a ball is $O(\log
  n)$, the load of a machine which randomly selects $n^{\epsilon}$
  balls without replacement is $O(n^{\epsilon})$ w.h.p. Both of these
  conditions are satisfied in Algorithm~\ref{alg:truncatedprim}. The
  maximum number of queries made is $O(n^{\epsilon/2})$ since the
  search is truncated by stopping condition (1) if the cluster size
  exceeds $O(n^{\epsilon / 2})$, and the average size of a search is
  $O(\log n)$ by using Lemma~\ref{lem:msfqueries} and averaging over
  all vertices.

  The round complexity follows from the fact each step of the
  algorithm can be implemented in $O(1/\epsilon)$ rounds. Generating a
  random permutation can be done in $1$ round by using a random source
  within each machine to generate a random number from a suitably
  large range per vertex (e.g., $[n^2]$), and using the order based on
  these numbers as the random permutation. Running Prim's algorithm
  locally per vertex is done in a single round. Finally, contracting
  the graph can be reduced to sorting and removing duplicates, both of
  which can be implemented in $O(1/\epsilon)$ rounds of \MPC{}, and
  therefore the same number of rounds of \AMPC{}.
\end{proof}

\msfconstant*
\begin{proof}
  Suppose the input graph is dense, i.e. $m = O(n^{1+\epsilon})$. In
  this case, the algorithm simply runs the algorithm from
  Proposition~\ref{prop:msfspaa}, which runs in
  $O((1/\epsilon)\log\log_{(m+n)/n} n) =
  O((1/\epsilon)\log(1/\epsilon))$ rounds of $\AMPC{}$ and $O(m+n)$
  space.

  If the graph is sparse, the algorithm first ternarizes the graph
  which can easily be done in $O(1/\epsilon)$ rounds by sorting. The
  ternarized graph has $O(m)$ vertices and $O(m)$ edges. The algorithm
  then calls Algorithm~\ref{alg:truncatedprim} on the ternarized
  graph, which runs in $O(1/\epsilon)$ rounds w.h.p. and $O(m\log n)$
  total space w.h.p. by Lemma~\ref{lem:primspacerounds}. The
  contracted graph output by this algorithm has $O(m)$ edges and
  $O(m^{1-\epsilon/2})$ vertices. Finally, the algorithm calls the
  algorithm from Proposition~\ref{prop:msfspaa} the contracted graph
  which runs in
  $O((1/\epsilon)\log\log_{m/m^{1-\epsilon/2}} m) =
  O((1/\epsilon)\log(1/\epsilon))$ as in the dense case. Therefore, in
  both cases, the algorithm runs in $O((1/\epsilon)\log(1/\epsilon))$
  rounds w.h.p. and $O(m \log n)$ total space w.h.p.
\end{proof}

\section{Computing F-light edges}\label{app:flight}
In this section we describe how to implement line~\ref{l:fl} of Algorithm~\ref{alg:msfreduction} in the AMPC model.
Before proceeding with the description, we review three basic tools that we are going to use.

\myparagraph{Lowest common ancestor} Let $T$ be a tree. If $u, w$ are two vertices of $T$, we use $T[u, w]$ to denote the unique path from $u$ to $w$ in $T$.
If $T$ is rooted in a vertex $r$, we define the \emph{level} of each vertex $v$ as the length (number of edges) of $T[v, r]$.
The \emph{lowest common ancestor} of $u, w \in T$, which we denote by $LCA(u, w)$ is the common vertex of $T[u, r]$ and $T[w, r]$ that has the highest level.
It is well-known that it is defined uniquely, and that $T[u, w]$ is obtained by concatenating $T[u, LCA(u, w)]$ and $T[LCA(u, w), w]$.

\myparagraph{Heavy-light decomposition} Let $T$ be a tree rooted in vertex $r$.
For each non-leaf vertex $v$ of $T$, we compute the sizes of the subtrees rooted at children of $v$, choose the subtree of the largest size (breaking ties arbitrarily) and mark the edge from $v$ to the child with the largest subtree as \emph{heavy}.
All remaining edges are \emph{light}.

It is easy to see that the heavy edges form disjoint paths, which we call \emph{heavy paths}.
The basic property of a heavy-light decomposition is as follows.
For each vertex $v \in T$, the path $T[v, r]$ consists of $O(\log n)$ light edges and $O(\log n)$ contiguous segments, each being a subpath of a heavy path.

For a vertex $u \in T$ and consider $T[u, r]$.
We say that a vertex $x$ on $T[u, r]$ is a \emph{pivot of $x$} if any of the following holds:
\begin{itemize}
\item $x$ is $u$ or $r$, or
\item there exists a light edge of $T[u, r]$, which is incident to $x$.
\end{itemize}

In our algorithm we will use the following easy lemma.

\begin{lemma}\label{lem:pivot}
Each vertex $u \in V(T)$ has $O(\log n)$ pivot vertices.
Let $a$ be an ancestor of $u$, and let $p$ be the lowest level pivot vertex on $T[u, a]$.
.
Then, the path $T[u, a]$ is a concatenation of $T[u, p]$ and $T[p, a]$, where $T[p, a]$ is fully contained within a heavy path.
\end{lemma}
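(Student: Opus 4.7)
\textbf{Proof plan for Lemma~\ref{lem:pivot}.}
The plan is to handle the two claims separately, both by direct use of the definitions and the standard subtree-doubling property of heavy edges.

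For the first claim, I would begin by proving the well-known fact that the path $T[u, r]$ contains $O(\log n)$ light edges. Traversing $T[u, r]$ from $u$ upward, every time we cross a light edge from a child $v$ to its parent $w$, the subtree of $w$ has size at least $1 + 2\,|\mathrm{subtree}(v)|$: since the edge $vw$ is light, $w$ has some other child $v'$ whose subtree is at least as large as $v$'s. Iterating, after crossing $k$ light edges from $u$, the current subtree has size at least $2^k$, which forces $k = O(\log n)$. By the definition of a pivot, the pivot vertices of $u$ are exactly $u$, $r$, and endpoints of light edges on $T[u, r]$, so there are at most $2 \cdot O(\log n) + 2 = O(\log n)$ of them.

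For the second claim, I would argue by the extremality of $p$. Suppose for contradiction that $T[p, a]$ contains some light edge $xy$ with $y$ strictly closer to $a$ than $x$ (so $y \neq p$). Both endpoints of any light edge on $T[u, r]$ are pivots of $u$, and in particular $y$ is a pivot. Moreover, $y$ lies on $T[u, a]$ (since it lies on $T[p, a] \subseteq T[u, a]$) and the level of $y$ is strictly less than the level of $p$ (as $y$ is closer to the root than $p$). This contradicts the choice of $p$ as the lowest-level pivot on $T[u, a]$. Therefore $T[p, a]$ consists entirely of heavy edges. Since every vertex has at most one heavy child, a contiguous sequence of heavy edges along a root-to-leaf path must lie within a single maximal heavy path, and $T[p, a]$ is contained in one heavy path as claimed. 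The concatenation statement $T[u, a] = T[u, p] \cdot T[p, a]$ is immediate because $p$ lies on $T[u, a]$.

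I do not expect any real obstacle here: the lemma is essentially a packaging of two standard facts about heavy-light decompositions (logarithmic number of light edges on any root-path, and the fact that heavy edges assemble into disjoint paths). The only subtlety is to verify the endpoint cases ($p = a$ gives a trivial $T[p, a]$, and $p = u$ when $T[u, a]$ has no light edge other than possibly at $a$), which are handled uniformly by the extremality argument above.
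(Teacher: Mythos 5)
Your proof is correct and follows essentially the same route as the paper's: the $O(\log n)$ bound on pivots comes from the $O(\log n)$ bound on light edges along a root path (which the paper cites as a stated basic property of heavy-light decomposition and you rederive via the standard subtree-doubling argument), and the second claim is by the same extremality-of-$p$ contradiction. Your write-up is just more explicit than the paper's one-line version of that contradiction, and spells out that maximal runs of heavy edges form a single heavy path; both are fine, and there is no gap.
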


\begin{proof}
Clearly, the number of pivot vertices distinct from $x$ and $r$ is at most twice the number of light edges, which immediately gives us a bound of $O(\log n)$.

To prove the second claim, let us first note that $p$ is uniquely defined.
Indeed, this follows from the fact that the levels of vertices on the path are distinct and $u$ is a pivot vertex.
If any edge on $T[p, a]$ was light, we would immediately get a contradiction with the choice of $p$, so the lemma follows.
\end{proof}

\myparagraph{Range-minimum queries}
Given an array $a_1, \ldots, a_k$, a \emph{range-minimum query} (RMQ) data structure is a data structure, which given two indices $1 \leq i \leq j \leq k$ computes the minimum among $a_i, \ldots, a_j$ in $O(1)$ time.
A possible approach is to compute an auxiliary array $b_{x, y}$ for $1 \leq x \leq k$ and $0 \leq y \leq \log_2 k$, where $b_{x, y} = \arg \min_{i = x, \ldots, \min(x+2^y-1, k)} a_i$.
Then, finding the minimum among $a_i, \ldots, a_j$ could be done by computing $t = \lfloor \log_2 (j-i+1) \rfloor$ and simply taking the minimum of $b_{i, t}$ and $b_{j-2^t+1, t}$.
We call the array $b$ an RMQ data structure for $a$.
Andoni et al.~\cite{rmqref} showed how to compute the RMQ data structure in the MPC model in $O(1)$ rounds using $O(k \log k)$ total communication.

We now describe our algorithm for finding $F$-light edges.
There are two ways in which an edge $uw \in E(G)$ can be qualified as $F$-light (see Definition~\ref{def:flight}).
First, $u$ and $w$ may belong to different connected components of $F$.
These edges can be detected easily by first finding connected components of $F$, that is a mapping from each vertex of $F$ to a unique identifier of its connected components.
Whether two vertices are in the same connected component can be then determined using just two queries.

In the following let us focus on the remaining case, when $u$ and $w$ belong to the same connected component of $F$, which we denote by $T$.
To determine whether $uw$ is $F$-light it suffices to compute the largest edge weight on $T[u, w]$.
To that end, we root $T$ arbitrarily.
Let us denote the root by $r$.
The result of rooting the tree is an array $p$ which maps each vertex other than $r$ to its parent in the rooted tree.

To find the maximum edge weight on $T[u, w]$, we first compute $LCA(u, w)$ and then reduce the problem to finding maximum weight edges on $T[u, LCA(u, w)]$ and $T[w, LCA(u, w)]$.
In the following consider $T[u, LCA(u, w)]$.

To find the maximum weight edge on $T[u, LCA(u, w)]$ we use Lemma~\ref{lem:pivot}.
Note that $LCA(u, w)$ is an ancestor of $u$.
Let $p$ be the lowest pivot vertex on $T[u, LCA(u, w)]$.
We are going to find maximum weight edges on $T[u, p]$ and $T[p, LCA(u, w)]$ separately.

In order to efficiently find the maximum within $T[u, p]$ we are going to precompute the maximum weight edge on $T[u, p]$ for each of the $O(\log n)$ pivot vertices $p$ of $u$.
On the other hand, $T[p, LCA(u, w)]$ is fully contained in a heavy path.
Hence, by precomputing a RMQ data structure for each heavy path, we can find the maximum weight edge by reading a constant number of values from this data structure.
The final algorithm is given as Algorithm~\ref{alg:flight}.
To complete this section, we prove the correctness of the algorithm and discuss its implementation in the AMPC model.

\begin{tboxalg}{$\mathrm{FindLightEdges}(G, F)$}\label{alg:flight}
\begin{algorithmic}[1]
\State Find connected components in $F$.\label{l:cc}
\State Root each connected component of $F$.
\State For each vertex in $F$, compute its level in the tree it belongs to.
\State Compute an Euler tour traversal of each tree $T$ of $F$.
\State Within the traversal sequence, assign to each vertex the weight equal to its level and compute an RMQ data structure for each sequence.
\State For each $uw \in E(G)$ such that $u$ and $w$ are in the same connected component of $F$, compute $LCA(u, w)$.
\State Compute the heavy-light decomposition of each tree $T$ in $F$ (mark each edge as heavy or light).
\State Compute the connected components of the heavy paths and an RMQ data structure for each heavy path.
\State For each vertex $v \in T \in F$, find the $O(\log n)$ heavy paths and light edges on the path from $v$ to the root of $T$.
\State For each $uw \in E(G)$, such that $u$ and $w$ are in the same connected component of $F$, compute the maximum edge weight on $F[u, LCA(u, w)]$ and $F[w, LCA(u, w)]$.
\State Use the weights computed in the previous step and the connected component identifiers computed in line~\ref{l:cc} to find $F$-light edges.
\end{algorithmic}
\end{tboxalg}

\begin{lemma}
Let $G = (V, E, w)$ be a weighted graph and $F \subseteq G$ be a tree.
Algorithm~\ref{alg:flight} correctly identifies $F$-light edges of $G$.
It can be implemented in $O(1)$ AMPC rounds using $O(n \log n)$ queries in total, where $n = |V|$.
\end{lemma}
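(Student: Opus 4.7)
My plan is to verify correctness of Algorithm~\ref{alg:flight} step by step, and then show that each stage admits an $O(1)$-round AMPC implementation whose query budgets sum to $O(n\log n)$.

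For correctness, I would fix an edge $uw \in E(G)$ and follow the decision made by the algorithm. If $u$ and $w$ lie in different connected components of $F$, then $w_F(u,w) = \infty$ by Definition~\ref{def:flight} and $uw$ is automatically $F$-light; this case is handled by comparing the component identifiers produced in Line~\ref{l:cc}. Otherwise $u,w$ share a rooted tree $T \subseteq F$, and $w_F(u,w)$ equals the maximum edge weight on $T[u,w]$. I would compute $a := \mathrm{LCA}(u,w)$ to split $T[u,w]$ into $T[u,a]$ and $T[w,a]$, then invoke Lemma~\ref{lem:pivot} to further decompose each vertex-to-ancestor path as $T[u,p]\cdot T[p,a]$, where $p$ is the lowest pivot on the path and $T[p,a]$ sits inside a single heavy path. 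The maximum weight on $T[u,p]$ is retrieved from the $O(\log n)$ precomputed pivot entries for $u$; the maximum on $T[p,a]$ is retrieved by one RMQ lookup on the heavy path. Combining the two values (and the symmetric ones for $w$) yields $w_F(u,w)$, after which the $F$-light verdict is a single comparison with $w(uw)$.

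For the round complexity I would argue that each primitive used is known to run in $O(1)$ AMPC rounds: forest connected components via Proposition~\ref{prop:forestconn}; rooting, level assignment, subtree sizes, and Euler tours via prefix-sum style operations on length-$O(n)$ arrays (simulable in AMPC through the standard MPC embedding); RMQ structures via Andoni et al.~\cite{rmqref} in $O(1)$ MPC rounds with $O(k\log k)$ communication per length-$k$ sequence; heavy-light decomposition as a direct consequence of the subtree sizes; and the per-vertex pivot tables via pointer-jumping along the $O(\log n)$-deep skeleton of light edges. Since the algorithm is a constant-length sequential pipeline of these stages, the overall round complexity is $O(1)$. For the query budget, the precomputation dominates: RMQ structures over the Euler tours contribute $O(n\log n)$ entries; the RMQ structures over the heavy paths add a further $O(n\log n)$ in total (heavy-path lengths sum to $O(n)$ and each table carries a $\log$ factor); and the per-vertex pivot tables contribute $O(\log n)$ entries per vertex. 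After the precomputation, each edge's $F$-light question is answered with a constant number of lookups, so the overall query count matches the claimed $O(n\log n)$.

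The main obstacle I expect will be organizing the pivot precomputation and heavy-path traversal so that each edge query is genuinely $O(1)$ and not $\omega(\log n)$ — a naive per-vertex climb would blow the budget. The fix is to materialize, for every vertex, both the sorted list of its $O(\log n)$ ancestor pivots and the running maximum edge weight to each such pivot; together with the precomputed head-of-heavy-path pointer and the heavy-path RMQ table, answering an $F$-light query reduces to identifying the correct pivot via the level of $\mathrm{LCA}(u,w)$, looking up the corresponding running maximum, and performing one RMQ on the heavy path joining that pivot to the LCA.
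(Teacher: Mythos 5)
Your proposal follows essentially the same route as the paper: split $T[u,w]$ at $\mathrm{LCA}(u,w)$, split each vertex-to-ancestor segment at the lowest pivot via Lemma~\ref{lem:pivot}, look up the pivot-prefix maximum from a per-vertex $O(\log n)$-entry table and the heavy-path suffix via the RMQ structure, and combine. In fact your write-up is substantially more complete than the paper's own proof, which simply states that correctness ``follows directly from the discussion above'' and does not separately address the round or query bounds at all; you have made the decomposition explicit, identified the primitives used (forest connectivity, Euler tour, RMQ via Andoni et al., subtree sizes, pointer-jumping) and observed that the pipeline has constant length.

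One small point worth tightening: you claim the overall query count ``matches the claimed $O(n\log n)$'' because the precomputation dominates and each edge then needs only $O(1)$ lookups. But with $m$ edges in $G$, the per-edge phase contributes $\Theta(m)$ queries, which is not automatically $O(n\log n)$. The lemma as stated in the paper has the same looseness (and the paper's one-line proof does not acknowledge it), so you are being faithful to the source; but if you want an airtight bound you should either state the per-edge contribution explicitly as $O(m)$, yielding $O(m + n\log n)$ total queries, or note that in the context where the lemma is invoked (inside Algorithm~\ref{alg:msfreduction}, where the expected number of $F$-light edges is $O(n\log n)$) the relevant edge count is controlled.
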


\begin{proof}
The correctness of the algorithm follows directly from the discussion above.
\end{proof}

\end{document}